\documentclass{LMCS}

\def\dOi{9(4:2)2013}
\lmcsheading%
{\dOi}
{1--23}
{}
{}
{Nov.~14, 2012}
{Oct.~\phantom08, 2013}
{}

\ACMCCS{[{\bf Theory of computation}]: Computational complexity and
  cryptography---Complexity theory and logic} \subjclass{F.2, F.4.1}


\usepackage{hyperref}
\newcommand{\deffi}{:=}
\renewcommand{\defi}{\deffi}

    \newcommand{\Kommentar}[1]{}
    

    \newcommand{\imp}{\Longrightarrow} 
    \renewcommand{\epsilon}{\varepsilon}

\allowdisplaybreaks

    \newcommand{\pos}{\operatorname{\mathfrak{p}}}
    
    \newcommand{\cl}{\operatorname{cl}}
    \newcommand{\crit}{\operatorname{crit}}
    \newcommand{\bd}{\operatorname{bd}}
    \newcommand{\choice}{\kappa}
    \newcommand{\appl}{\operatorname{appl}}
    \newcommand{\startpos}{\operatorname{\mathfrak{s}}}
    \newcommand{\whitenode}{\tikz{\node[circle,draw=black,fill=white,inner  sep=0pt,minimum  size=1mm] at (0,0){};} }
    \newcommand{\blacknode}{\tikz{\node[circle,draw=black,fill=black,inner  sep=0pt,minimum  size=1mm] at (0,0){};} }
    \newcommand{\xnode}{\mathsf x}
    \newcommand{\ynode}{\mathsf y}
    \renewcommand{\circ}{\uplus}
    \renewcommand{\bigcirc}{\biguplus}
    
    \newcommand{\DTIME}{{\upshape DTIME}}
    \newcommand{\LOGSPACE}{{\upshape LOGSPACE}}
    \newcommand{\goal}{\gamma}

    
    \newcommand{\dom}{\operatorname{Dom}}

    \newtheorem{theorem}{Theorem}
    \newtheorem{lemma}[theorem]{Lemma}

    \theoremstyle{definition}
    \newtheorem{definition}[theorem]{Definition}
    \newtheorem{claim}[theorem]{Claim}

    \newcommand{\innerqed}{$\dashv$}
    \newcommand{\outerqed}{\qedsymbol}
    \newenvironment{innerproof}{\renewcommand{\qedsymbol}{\innerqed}\begin{proof}}{\end{proof}\renewcommand{\qedsymbol}{\outerqed}}
    
    \newcounter{todos}
    \usepackage{ifthen}

\newenvironment{IEEEproof}{\proof}{\qed}
    \usepackage{tikz}
      \usetikzlibrary{arrows}
      \usetikzlibrary{decorations.pathmorphing}
      \usetikzlibrary{decorations.shapes}
      \usetikzlibrary{backgrounds}
      \usetikzlibrary{positioning}
      \usetikzlibrary{fit}
      \usepackage{pgffor}
  
  \definecolor{darkgreen}{rgb}{0,0.4,0} 
\usepackage{url}

\theoremstyle{plain}
\def\eg{{\em e.g.}}

\begin{document}

\title{Lower Bounds for Existential Pebble Games and $k$-Consistency Tests}

\author{Christoph Berkholz}	
\address{Lehrstuhl f\"ur Informatik 7, RWTH Aachen University}	
\email{berkholz@informatik.rwth-aachen.de}  



\keywords{existential pebble game, finite variable logic, first order logic, parameterized complexity theory, k-consistency, constraint propagation}
\titlecomment{{\lsuper*}This is the full version of the conference paper \cite{Berkholz.2012}.}


\begin{abstract}
The existential $k$-pebble game characterizes the expressive power of the ex\-is\-ten\-tial-positive $k$-variable fragment of first-order logic on finite structures. 
The
winner of the existential $k$-pebble game on two given finite structures
can be determined in time $O(n^{2k})$ by dynamic programming on the graph of game configurations.  We show that there is no
$O(n^{(k-3)/12})$-time algorithm that decides which player can win the
existential $k$-pebble game on two given structures. This lower bound is
unconditional and does not rely on any complexity-theoretic
assumptions.

Establishing strong $k$-consistency is a well-known heuristic for solving the constraint satisfaction problem (CSP). By the game characterization of Kolaitis and Vardi \cite{Kolaitis.2000a} our result implies that there is no $O(n^{(k-3)/12})$-time algorithm that decides if strong $k$-consistency can be established for a given CSP-instance.
\end{abstract}

\maketitle

\section{Introduction}

For two finite relational structures \textbf{A} and \textbf{B} the homomorphism problem asks if there is a mapping from the domain $A$ of \textbf{A} to the domain $B$ of \textbf{B} that preserves all relations. As pointed out by Feder and Vardi \cite{Feder.1998} this problem is equivalent to the constraint satisfaction problem (CSP) where the variables correspond to the domain of \textbf{A}, the values correspond to the domain of \textbf{B} and the constraints are encoded in the relations of \textbf{A} and \textbf{B}. Thus, every  homomorphism from \textbf{A} to \textbf{B} corresponds to a solution of the CSP. Since the homomorphism problem and the CSP are NP-complete in general, there is need to look for heuristics. One well-known method introduced in the context of constraint satisfaction is the procedure of establishing strong $k$-consistency, which can be implemented by an $O(n^{2k})$-time algorithm (see \eg{} \cite{Cooper.1989,Kolaitis.2000a,Atserias.2007}). We will explain this concept below in the setting of the homomorphism problem.

From a logical point of view, there is a homomorphism between two finite structures \textbf{A} and \textbf{B} if and only if every existential-positive first-order sentence that is true on \textbf{A} is also true on \textbf{B}. Instead of considering the full existential-positive fragment of first-order logic one can relax that question and ask whether every $k$-variable existential-positive first-order sentence true on \textbf{A} is also true on \textbf{B}. Such questions can be analyzed using combinatorial games. For the $k$-variable existential-positive fragment of first-order logic the corresponding game is the existential $k$-pebble game \cite{Kolaitis.1995}, which is defined in Section \ref{sec:definitions}.

Kolaitis and Vardi \cite{Kolaitis.2000a} showed that this logical relaxation of the homomorphism problem is equivalent to the $k$-consistency heuristic. That is, strong $k$-consistency can be established if and only if Duplicator wins the existential $k$-pebble game on \textbf{A} and \textbf{B}. To sum up, the following three statements are equivalent on finite relational structures \textbf{A} and \textbf{B}:
\begin{itemize}
 \item Strong $k$-consistency can be established.
 \item For every existential-positive $k$-variable first-order sentence $\varphi$:  $\textbf{A}\models \varphi \imp \textbf{B}\models\varphi$.
 \item Duplicator has a winning strategy in the existential $k$-pebble game. 
\end{itemize}
Now we state our main result that provides a lower bound on the computational complexity of the statements above.
\begin{theorem} \label{maintheorem} For every fixed $k\geq 15$ and any $\epsilon>0$, the winner of the existential $k$-pebble game on two given finite relational structures \textbf{A} and \textbf{B} cannot be determined in time $O((\|\textbf{A}\|+\|\textbf{B}\|)^{\frac{k-2}{12}-\epsilon})$ on deterministic multi-tape Turing machines.
\end{theorem}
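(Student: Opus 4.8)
The plan is to reduce from time-bounded halting and then invoke the deterministic time hierarchy theorem, so that the bound stays unconditional. The crux is a compression lemma: from a deterministic multi-tape Turing machine $M$, an input $x$, and a time bound $t$ (given in binary), one builds two finite relational structures $\mathbf{A}_{M,x,t}$ and $\mathbf{B}_{M,x,t}$ of total size $n\le t^{12/(k-2)}\cdot(|M|+|x|)^{O(1)}$, in time $n^{1+o(1)}$ on a deterministic multi-tape machine, such that Spoiler wins the existential $k$-pebble game on $\mathbf{A}_{M,x,t}$ and $\mathbf{B}_{M,x,t}$ if and only if $M$ accepts $x$ within $t$ steps. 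Morally, $k$ pebbles let the game address one configuration among the first $n^{(k-2)/12}$ configurations of the run of $M$, so that a length-$t$ computation fits into structures of size about $t^{12/(k-2)}$; this polynomial trade between simulated time and structure size is what turns the trivial $O(n^{2k})$ algorithm into the claimed hardness.

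I would prove the compression lemma in three steps. First, design a \emph{counter gadget}: a pair of bounded-size structures on which the existential pebble game, at the price of a constant number of pebbles --- the constant that, once the accounting is optimised, becomes $12$ --- forces Duplicator to cycle through the residues $0,1,\dots,n-1$ of one digit, to report the current residue faithfully, and to support an ``increment'' so that the simulated clock can advance; such gadgets are refinements of the constructions used for the $P$-hardness of the fixed-pebble game. Second, nest these gadgets so that the $k-2$ usable pebbles (two are reserved for global bookkeeping) realise a mixed-radix counter of range $n^{(k-2)/12}\ge t$, an index into the tableau of $M$'s run; letting the digit bases vary lets one reach a genuine $(k-2)/12$-th power of $n$ rather than only an integer power. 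Third, overlay a \emph{verification layer} encoding the transition function of $M$, the input $x$, and the contents of cell/time pairs, so arranged that in the associated $k$-consistency (greatest-fixpoint) computation a partial homomorphism survives exactly while it is consistent with a correct, non-accepting initial segment of the run and is deleted precisely when an accepting configuration is forced; by the game characterisation this says Spoiler wins iff $M$ accepts $x$ within $t$ steps. \textbf{The main obstacle} is the correctness proof: one has to give an explicit Duplicator strategy that honestly maintains a consistent partial run, show that Spoiler can refute any deviation with only $k$ pebbles, and --- the delicate point --- verify that each digit costs only a constant number of pebbles, indeed at most $12$, since that constant fixes the exponent. The rest (mixed-radix arithmetic, encoding Turing-machine configurations as relations, wiring the layers together) is routine but lengthy.

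Granting the compression lemma, the theorem follows by combining it with the time hierarchy theorem. Fix $k\ge 15$ and $\epsilon>0$, and suppose for contradiction that some deterministic multi-tape machine $\mathcal A$ determines the winner of the existential $k$-pebble game in time $O\bigl((\|\mathbf{A}\|+\|\mathbf{B}\|)^{(k-2)/12-\epsilon}\bigr)$. By the deterministic multi-tape time hierarchy theorem, fix a large integer $d$ (depending on $k$ and $\epsilon$) and a language $L\in\DTIME(N^{d})\setminus\DTIME(N^{d-1})$ that is decided by some fixed machine $M^{*}$ in at most $cN^{d}$ steps. Given an input $x$ with $|x|=N$, run the compression lemma on $(M^{*},x,cN^{d})$ to obtain $\mathbf{A},\mathbf{B}$ of size $n\le N^{12d/(k-2)+O(1)}$ in time $n^{1+o(1)}$, then run $\mathcal A$ on $\mathbf{A},\mathbf{B}$ and output its verdict; this decides whether $x\in L$ in total time
\[
n^{1+o(1)}+n^{(k-2)/12-\epsilon}\;\le\;N^{12d/(k-2)+o(d)}+N^{d-12\epsilon d/(k-2)+O(1)}.
\]
Since $k\ge 15$ gives $12/(k-2)\le 12/13<1$, and since $\epsilon>0$, for $d$ large enough both exponents on the right are strictly below $d-1$; hence $L\in\DTIME(N^{d-1})$, contradicting the choice of $L$. (It is precisely the demand that the reduction's own output size $N^{12d/(k-2)}$ be sub-linear in the simulated time $N^{d}$, i.e.\ that $12/(k-2)<1$, which forces $k\ge 15$.) Hence no such $\mathcal A$ exists.
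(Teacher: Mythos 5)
Your global scaffold is sound and does match what the paper is doing at a distance: both arguments ultimately bottom out in the deterministic multi-tape time hierarchy theorem, both need structures whose size is polynomially smaller than the simulated running time (with exponent $12/(k-2)$), and your parametric calculation at the end --- including the observation that $12/(k-2)<1$ is exactly what requires $k\geq 15$ --- is in order. The route is genuinely different in one respect: the paper does not encode Turing machines into the existential pebble game directly, but instead reduces the $k$-pebble game of Kasai, Adachi and Iwata (the ``KAI-game'') to the existential $(k+1)$-pebble game, and then invokes as a black box the Adachi--Iwata--Kasai theorem that the $k$-pebble KAI-game has an unconditional $\Omega(\|I\|^{(k-1)/4-\varepsilon})$ lower bound (where the time-hierarchy argument already lives). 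The constant $12$ is then fully derived: it is the product of the cubic size blow-up of the KAI-to-existential reduction and the factor $4$ in the Adachi et al.\ bound. Factoring through the KAI-game is not cosmetic --- it replaces ``encode a Turing machine with few pebbles'' by ``encode a simple two-player pebble game with few pebbles'', which is a much more tractable target.

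The gap in your proposal is the ``compression lemma'', which you state but do not prove, and which you yourself flag as the main obstacle. As written it is not a lemma but a plan, and the plan conceals precisely the paper's contribution. First, the constant $12$ appears in your statement only because it must --- ``the constant that, once the accounting is optimised, becomes $12$'' is circular; no gadget design is given from which it could be read off. Second, and more seriously, your appeal to ``refinements of the constructions used for the $\mathrm P$-hardness of the fixed-pebble game'' undersells the difficulty: in the Kolaitis--Panttaja EXPTIME/$\mathrm P$-hardness reduction the number of pebbles grows with the instance size, and the open problem this paper closes is precisely to make the pebble count a \emph{fixed} parameter. Showing that a ``counter gadget'' advancing one digit costs only a bounded number of pebbles, and that Duplicator cannot cheat with the remaining pebbles, is exactly what the paper's multiple-input one-way switch, rule gadgets, choice gadget, initialisation gadget, and the entire machinery of critical strategies and their composition are for. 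Without a concrete gadget design and a two-sided correctness argument (a Spoiler strategy that pushes a position through, and a Duplicator critical strategy that survives any deviation), the compression lemma is not established and the theorem does not follow. Finally, on a small point, you would also need to argue that the reduction is computable in the stated $n^{1+o(1)}$ time on a multi-tape machine; the paper only needs and only proves an $O(\|I\|^3)$ bound, which it then absorbs into the constant $12$.
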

From this theorem we directly get an $\Omega(n^{\frac{k-2}{12}-\epsilon})$ lower bound for deciding if strong $k$-consistency can be established, where $n$ is the size of the CSP-instance. 
As an upper bound, the query ``Does Spoiler win the existential $k$-pebble game on \textbf{A} and \textbf{B}?'' is LFP$^{2k}$-definable  for two given structures \textbf{A} and \textbf{B} \cite{Kolaitis.2000} and is decidable by an $O(|A|^k|B|^k)$-time algorithm. 
We prove Theorem \ref{maintheorem} by a reduction from the $k$-pebble game of Kasai, Adachi and Iwata \cite{Kasai.1978}, called KAI-game, to the existential ($k+1$)-pebble game. Our result then follows from an $n^{\Omega(k)}$ lower bound for this game \cite{Adachi.1984}, which in turn follows from the deterministic time hierarchy theorem.

\subsection{Related Work}

Kolaitis and Panttaja \cite{Kolaitis.2003} proved that for every fixed $k\geq 2$ the problem of determining the winner of the existential $k$-pebble game is complete for PTIME under LOGSPACE reductions. Furthermore, they established that the problem is complete for EXPTIME when $k$ is part of the input. It follows that there is no algorithm for this problem whose running time is polynomial in the size of the structures as well as in the number of pebbles.
Parameterized by the number of pebbles $k$, the problem is known to be W[1]-hard. This follows directly from the fact that a graph $G$ contains a $k$-clique if and only if Duplicator has a winning strategy for the existential $k$-pebble game on the complete graph on $k$ vertices and $G$. 
Thus, the existence of an algorithm of running time $f(k)n^{c}$ for some computable function $f$ and constant $c$ would imply W[1] = FPT, an unlikely event in parameterized complexity theory.
However, since we do not know whether W[1] = FPT it is consistent with our previous knowledge that there exists an $O(2^{k}n^{2})$ algorithm determining the winner of the existential $k$-pebble game on two relational structures. Thus, for every fixed $k$, it was possible that there exists a quadratic time algorithm deciding if strong $k$-consistency can be established.

To prove the EXPTIME-completeness Kolaitis and Panttaja reduced the KAI-game to the existential pebble game. In this reduction the number of pebbles used in the existential pebble game depends on the size of the KAI-game instance and is not bounded by any function of the number of pebbles used in the KAI-game. Thus, their reduction fails to prove a lower bound for fixed $k$ and it was left as an open question if such a lower bound can be proven. 
In this paper we reduce the $k$-pebble KAI-game to the existential ($k+1$)-pebble game, and thus keep the parameter small. Some constructions are quite similar to those used by Kolaitis and Panttaja. However, the proof differs significantly at crucial points. Furthermore, to devise the winning strategy for Duplicator we use a different proof technique. 

Finally, the parameterized complexity of $k$-consistency has also been investigated by Gaspers and Szeider \cite{Gaspers.2011}. We discuss their work after the introduction to $k$-consistency in Section \ref{sec:establ}.

\subsection{Further Applications in Finite Model Theory}

We can improve Theorem \ref{maintheorem} in two ways. 
First, all partial homomorphisms in Duplicator's winning strategy (defined below) are in fact partial isomorphisms. 
Thus, Duplicator has a winning strategy in the $k$-pebble game that corresponds to the existential $k$-variable fragment of first-order logic, where negation is allowed in front of atomic formulas. 
This implies that it requires $\Omega((\|\textbf{A}\|+\|\textbf{B}\|)^{\frac{k-2}{12}-\epsilon})$ time to decide if every existential $k$-variable first-order sentence true on \textbf{A} is also true on \textbf{B}. 
Second, the structures constructed in our reduction are directed graphs. 
Therefore, Theorem \ref{maintheorem} holds even when we restrict ourselves to $\sigma$-structures, where $\sigma$ is a relational signature containing at least one binary relation. 

\subsection{Establishing Strong \texorpdfstring{$k$}{k}-Consistency}\label{sec:establ}

To fix the terminology, we briefly introduce the concept of establishing strong $k$-consistency as it is defined in \cite{Kolaitis.2000a}.
Let \textbf{A} and \textbf{B} be two finite relational structures with universes $A$ and $B$. A \textit{$k$-partial homomorphism} is a partial homomorphism with domain size $k$. 
\textbf{A} and \textbf{B} are \textit{$k$-consistent} if for every ($k-1$)-partial homomorphism $h$ from \textbf{A} to \textbf{B} and every $a\in A$ there is a partial homomorphism that extends $h$ and is defined on $a$. Two structures are \textit{strongly $k$-consistent} if they are $i$-consistent for every $i\leq k$.

Strong $k$-consistency \textit{can be established} for \textbf{A} and \textbf{B} if there are two strongly $k$-consistent structures \textbf{A'} and \textbf{B'} over the same universes $A$ and $B$ such that the following two statements hold.
\begin{itemize}
\item Every $k$-partial homomorphism from \textbf{A'} to \textbf{B'} is a $k$-partial homomorphism from \textbf{A} to \textbf{B}.
\item Every  function from A to B is a homomorphism from \textbf{A'} to \textbf{B'} if and only if it is a homomorphism from \textbf{A} to \textbf{B}.
\end{itemize}
Loosely speaking, strong $k$-consistency can be established for two structures if they can be made strongly $k$-consistent by adding new relations and without changing the solution space with respect to the homomorphism problem. It is easy to see that if there is a homomorphism from \textbf{A} to \textbf{B}, then strong $k$-consistency can be established. Although the converse is not true in general, it holds for some classes of structures \cite{Atserias.2007,Dalmau.2002}. 

All known $k$-consistency algorithms, as \eg{} \cite{Cooper.1989}, iteratively propagate new constraints (relations in our notion) until the instance becomes $k$-consistent. Gasper and Szeider \cite{Gaspers.2011} argued that the task of checking whether the instance is already $k$-consistent is inherent in this procedure and thus lower bounds its complexity. This motivates them to analyze the following parameterized problem: ``Given two finite relational structures and a parameter $k$, are the structures strongly $k$-consistent?'' They showed that this problem is complete for the parameterized complexity class co-W[2]. Hence, assuming FPT $\neq$ co-W[2], the problem is not solvable in time $O(f(k)n^c)$ for some computable function $f$ and constant $c$. We analyze the complexity of a stronger statement:  ``Given two finite relational structures and a parameter $k$, can strong $k$-consistency be established?'' The outcome of this decision problem matches the outcome of a $k$-consistency algorithm and thus characterizes the complexity of the $k$-consistency test precisely. The proof of Theorem \ref{maintheorem} implies that establishing strong $k$-consistency is complete for the parameterized complexity class XP. 
Thus, assuming co-W[2] $\neq$ XP, trying to make the instance $k$-consistent is a harder task than checking whether it is already $k$-consistent. 

\section{Pebble Games and Proof of Theorem \ref{maintheorem}} \label{sec:definitions}

\begin{figure}[htp]\centering
 \begin{tikzpicture}
  [scale=1, transform shape, knoten/.style={circle,draw=black,
  inner  sep=1pt,minimum  size=5mm},pebble/.style={circle,draw=black,fill=black,
  inner  sep=1pt,minimum  size=4mm}]
  
  \node[knoten] (u) at (0,0) {$u$};
  \node[knoten] (v) at (1,0) {$v$};
  \node[knoten] (w) at (2,0) {$w$};
  \node[pebble] (c) at (0,0.35) {\color{white} $c$};
  \node[pebble] (d) at (1,0.35) {\color{white} $d$};
  
  \node at (3.5,0) {$\Longrightarrow$};
  
  \node[knoten] (u1) at (5,0) {$u$};
  \node[knoten] (v1) at (6,0) {$v$};
  \node[knoten] (w1) at (7,0) {$w$};
  \node[pebble] (c1) at (7,0.35) {\color{white} $c$};
  \node[pebble] (d1) at (6,0.35) {\color{white} $d$};
  
\end{tikzpicture}
 \caption{KAI-game: Moving pebble $c$ according to rule $(u,v,w,c,d)$.}\label{fig:rule}
\end{figure}
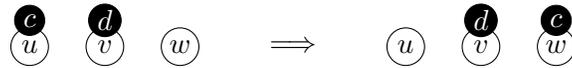

In this section we first introduce the KAI-game and the existential pebble game. Then we state the reduction from the KAI-game to the existential pebble game in our main lemma (Lemma \ref{lem:reduction}) and use it to prove Theorem \ref{maintheorem}.  

Kasai, Adachi and Iwata \cite{Kasai.1978} introduced a simple combinatorial pebble game that nicely simulates Turing machines. The authors showed in \cite{Kasai.1978} and \cite{Adachi.1984} that playing various variants of this game is complete for different complexity classes. Here we stick to the $k$-pebble variant, that is restricted to a fixed set $[k]:=\{1,\ldots,k\}$ of pebbles. 
An instance of the $k$-pebble KAI-game is a tuple $(X,R,\startpos,\goal)$, where $X$ is the set of nodes, $R=R'\times\{(c,d)\in [k]^2\mid c\neq d\}$ with $R'\subseteq [X]^3$ the set of rules, $\startpos\colon[k]\to X$ the start position and $\goal\in X$ the goal. A rule is of the form $(u,v,w,c,d)$, with distinct pebbles $c,d$, pairwise distinct nodes $u, v, w$ and the intended meaning that if pebble $c$ is on $u$ and pebble $d$ is on $v$ and there is no pebble on $w$ then one player can move pebble $c$ from $u$ to $w$ (see Figure \ref{fig:rule}). This is a slightly more wasteful notion than the original one used in \cite{Kasai.1978}, where the relation $R'\subseteq X^{3}$ (instead of $R'\times \{(c,d)\in [k]^2\mid c\neq d\}$) is given as input. However, this technical modification does not affect the purpose of the game and increases the size of an instance only by a constant factor if $k$ is fixed, and by a polynomial factor if $k$ is part of the input. A \textit{position} of the KAI-game is an injective mapping $\pos\colon[k]\to X$. A rule $r=(u,v,w,c,d)\in R$ is \textit{applicable} to a position $\pos$ if $\pos(c)=u$, $\pos(d)=v$ and $\pos(z)\neq w$ for all $z\in[k]$. Furthermore, if $r$ is applicable to $\pos$ then $r(\pos)$ denotes the position defined as $r(\pos)(c)=w$ and $r(\pos)(z) = \pos(z)$, for all $z\in[k]\setminus\{c\}$. The set of all rules in $R$ applicable to a position $\pos$ is denoted by $\appl(\pos)$, and 
$T_r(\pos)\subseteq [k]$ is the set of KAI-pebbles $i$ such that $\pos(i)$ contradicts the applicability condition of rule $r$:
$
T_{(u,v,w,c,d)}(\pos) \defi \{i\in[k]\mid (i=c\text{ and }\pos(i)\neq u)\text{ or }(i=d \text{ and } \pos(i)\neq v)\text{ or }\pos(i)=w\}.
$ Thus, $r\in \appl(\pos)$ iff $T_{r}(\pos)=\emptyset$.

The $k$-pebble KAI-game is played by two players and proceeds in rounds. In the first round Player 1 starts with position $\startpos$ and chooses a rule $r\in \appl(\startpos)$. The new position is $\pos = r(\startpos)$. In the next round Player 2 chooses a rule $r\in \appl(\pos)$ and applies it to $\pos$. Then it is Player 1's turn and so on. Player 1 wins the game if he reaches a position $\pos$, where $\pos(z)=\goal$ for one $z\in[k]$ or where Player 2 is unable to move. Player 2 wins if she has a strategy ensuring that Player 1 cannot reach such a position. The next definition formalizes winning strategies for Player 2. They contain sets of positions $\mathcal K_{i}, i\in\{1,2\}$ where it is Player i's turn and a mapping $\kappa$ that tells Player 2 for every position which rule to choose next.

\begin{definition}
A \textit{winning strategy} for Player 2 in the KAI-game on $(X,\{r_1,\ldots,r_m\},\startpos,\goal)$ is a triple $\mathcal K=(\mathcal K_1,\mathcal K_2, \kappa)$ where $\mathcal K_1\subseteq \{\pos\mid \pos\colon[k]\to X\}$ and $\mathcal K_2\subseteq \{\pos\mid \pos\colon[k]\to X\setminus\{\goal\}\}$ are sets of positions and $\kappa\colon\mathcal K_2\to [m]$ is a mapping such that the following holds:
\begin{itemize}
\item $\startpos \in \mathcal K_1$.
\item For every $\pos\in\mathcal K_1$ and every $r_i\in\appl(\pos)$: $r_i(\pos)\in\mathcal K_2$.
\item For every $\pos\in\mathcal K_2$: $r_{\kappa(\pos)}\in\appl(\pos)$ and $r_{\kappa(\pos)}(\pos)\in\mathcal K_1$.
\end{itemize}
\end{definition}

Kasai, Adachi and Iwata \cite{Kasai.1978} showed that the problem of determining the winner of the $k$-pebble KAI-game is PTIME-complete (for every fixed $k\geq 3$) under LOGSPACE-reductions and complete for EXPTIME when $k$ is part of the input. Furthermore, they proved the following unconditional lower bound.

\begin{theorem}[\cite{Adachi.1984}]\label{thm:KAI}
For every fixed $k \geq 6$ and any $\varepsilon > 0$, the winner of the $k$-pebble KAI-game on a given instance $I$ cannot be determined in time $O(\|I\|^{\frac{k-1}{4}-\varepsilon})$ on deterministic multi-tape Turing machines, where $\|I\|$ is the size of the input $I$.
\end{theorem}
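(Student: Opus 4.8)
The plan is to obtain the bound by a padding argument over the deterministic time hierarchy theorem, the required scaling being supplied by a tight simulation of Turing machines inside the $k$-pebble KAI-game. Fix $k$. The technical core is a reduction lemma of the following shape: there is an algorithm which, given a deterministic multi-tape Turing machine $M$ with time bound $t(\cdot)$ and an input $x$ of length $N$, outputs in time $O\!\big(t(N)^{4/(k-1)}\,\mathrm{polylog}\,t(N)\big)$ a $k$-pebble KAI-game instance $I_{M,x}$ of size $O\!\big(t(N)^{4/(k-1)}\big)$ such that Player~1 wins the $k$-pebble KAI-game on $I_{M,x}$ if and only if $M$ accepts $x$.

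Granting the lemma, I would argue as follows. One may assume $\varepsilon<\frac{k-1}{4}-1$, for otherwise $\frac{k-1}{4}-\varepsilon\le 1$ and the claim is trivial since merely reading the instance $I$ already costs linear time; the hypothesis $k\ge 6$ is what makes $\frac{k-1}{4}>1$ and this regime non-degenerate. Suppose for contradiction that the winner of the $k$-pebble KAI-game can be decided in time $O(\|I\|^{\frac{k-1}{4}-\varepsilon})$. Since $\frac{k-1}{4}-\varepsilon>1$, the deterministic time hierarchy theorem for multi-tape machines provides a language $L$ decidable in time $t(N)=O(N^{\frac{k-1}{4}})$ by some deterministic machine $M$ but with $L\notin\DTIME(N^{\frac{k-1}{4}-\varepsilon})$. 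Composing the reduction of the lemma with the hypothetical algorithm decides $L$: on input $x$ of length $N$ we build $I_{M,x}$, which has size $\|I_{M,x}\|=O(t(N)^{4/(k-1)})=O(N)$ and is constructible in time $O(N\,\mathrm{polylog}\,N)$, and then we run the hypothetical algorithm on it in time $O(\|I_{M,x}\|^{\frac{k-1}{4}-\varepsilon})=O(N^{\frac{k-1}{4}-\varepsilon})$. The polylogarithmic construction overhead is swallowed by choosing $L$ just below the threshold, so $L\in\DTIME(N^{\frac{k-1}{4}-\varepsilon})$, a contradiction.

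What remains — and where essentially all the work is — is the reduction lemma. A position of the $k$-pebble game is an injection $[k]\to X$, so it stores $k\log|X|$ bits, and playing the game is an alternating process on this state; hence a $k$-pebble game on $n$ nodes has roughly the power of alternating $O(k\log n)$ space, i.e. of $\DTIME(n^{O(k)})$, and the point is to realize this with the exponent pinned to $\frac{k-1}{4}$. I would reserve a bounded number of pebbles as ``control'' — tracking whose turn it is, the current phase of the simulation, and scratch positions for performing $\log t$-bit additions and comparisons bit by bit — and use the remaining pebbles together with a node set $X$ of size $\Theta\!\big(t^{4/(k-1)}\big)$ as a counter, so that $j$ of these pebbles over $\Theta(t^{4/(k-1)})$ nodes address a number below $t^{4j/(k-1)}$ and $\lceil (k-1)/4\rceil$ of them reach a cell index or time step in $\{0,\dots,t-1\}$. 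The rule set $R$ is then populated so that from every position encoding a legal partial transcript of $M$ there is an essentially unique applicable rule, the only exception being designated moves where the player to move proposes the symbol written under the head in the successor configuration and the opponent may challenge it, an unfounded challenge leading into a dead end; thus the play is forced to follow the genuine computation of $M$, and Player~1 reaches the goal node exactly when $M$ accepts, with $x$ hard-wired into $\startpos$ and a constant number of input gadgets. The hard part — and the source of both the precise constant $\frac14$ and the hypothesis $k\ge 6$ — is carrying out this encoding so that the node set and the (sparse) rule set really have size $O(t^{4/(k-1)})$ while the reduction runs in near-linear time, which forces one to implement $\log t$-bit arithmetic with only a constant number of spare pebbles and to tame the two-player dynamics into a forced simulation of a single deterministic run. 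The full construction is that of \cite{Adachi.1984}.
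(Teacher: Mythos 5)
Your proposal follows essentially the same route as the paper, which itself only sketches this result by citing \cite{Adachi.1984}: a tight simulation of multi-tape Turing machines inside the $k$-pebble KAI-game (the $(4j{+}1)$-pebble game captures $\DTIME(n^j)$, i.e.\ instance size $O(t^{4/(k-1)})$ for a time-$t$ machine), combined with the deterministic time hierarchy theorem in a padding argument. Your reduction lemma and the ensuing contradiction are exactly the content the paper attributes to Adachi and Iwata, and like the paper you defer the construction itself to \cite{Adachi.1984}, so this is the same proof with more of the bookkeeping spelled out.
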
 

The proof of this theorem essentially relies on the deterministic time hierarchy theorem, which states that multi-tape Turing machines of running time $n^{k}$ cannot be simulated within time $n^{k-\epsilon}$. On the other hand, Turing machines of running time $n^{k}$ can be simulated within the $(4k+1)$-pebble KAI-game and hence the lower bound follows. In terms of parametrized complexity, their argument also leads to XP-completeness of the $k$-pebble KAI-game with parameter $k$ as pointed out in \cite{Downey.1999}. 

The existential ($k+1$)-pebble game \cite{Kolaitis.1995} is played by two players Spoiler and Duplicator on two relational structures \textbf{A} and \textbf{B} with domains A and B, respectively. First, Spoiler puts pebbles $a_1,\ldots,a_{k+1}$ on elements of A and Duplicator answers by putting pebbles $b_1,\ldots,b_{k+1}$ on elements of B. In each further round Spoiler picks up a pebble $a_i$ from A and places it on another element in A and Duplicator moves the corresponding pebble $b_i$ in B. Spoiler wins the game if he can reach a position where the mapping defined by $a_i\mapsto b_i$ is not a partial homomorphism from \textbf{A} to \textbf{B}. Duplicator wins the game if she has a winning strategy that tells her for every move of Spoiler how to place her pebbles such that the positions of the pebbles define a partial homomorphism. A winning strategy for Duplicator can be stated formally as a set of partial homomorphisms: 

\begin{definition}[\cite{Kolaitis.1995}] \label{def:win}
 A \textit{winning strategy} for Duplicator in the existential ($k+1$)-pebble game on structures \textbf{A} and \textbf{B} is a nonempty family $\mathcal H$ of partial homomorphisms from \textbf{A} to \textbf{B} satisfying the following properties:
\begin{description}
 \item[closure] If $h\in \mathcal H$ and $g\subset h$ then $g\in\mathcal H$.
 \item[extension] For every $g\in\mathcal H$, $|\dom(g)|\leq k$, and every $z\in A$ there is an $h\in\mathcal H$ with $g\subseteq h$ and $z\in\dom(h)$.
\end{description}
\end{definition}

\noindent For a set $H$ of partial homomorphisms from \textbf{A} to \textbf{B} we let $\cl(H) \defi \{g\mid g \subseteq h, h\in H\}$ be the closure of $H$ under taking subsets and write $\cl(h)$ instead of $\cl(\{h\})$. It is easy to see that if $h$ is a total homomorphism from \textbf{A} to \textbf{B}, then $\cl(h)$ is a winning strategy in the existential ($k+1$)-pebble game on \textbf{A} and \textbf{B}. 
Now we state our main lemma and prove Theorem \ref{maintheorem}. The proof of the main lemma is deferred to Section \ref{sec:mainlemma}.
\begin{lemma}[Main Lemma] \label{lem:reduction}
 There is a reduction from the $k$-pebble KAI-game to the existential ($k+1$)-pebble game that computes for every instance $I=(X,R,\startpos,\goal)$ two directed graphs $G_S$ and $G_D$ such that the following constraints hold:
\begin{itemize}
\item Player 1 has a winning strategy in the $k$-pebble KAI-game on $I$ if and only if Spoiler has a winning strategy in the existential ($k+1$)-pebble game on $G_S$ and $G_D$.
\item $|V(G_S)|+|V(G_D)| = O(|X|\cdot|R|\cdot k^2)$.
\item $|E(G_S)|+|E(G_D)|=O(k^{4}(|X|^{2}|R|+|X|\cdot|R|^{2}))$.
\item The reduction is computable in \DTIME$(O(\|I\|^{3}))$ and in \LOGSPACE. 
\end{itemize}
\end{lemma}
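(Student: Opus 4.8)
The plan is to build $G_S$ and $G_D$ so that a play of the existential $(k+1)$-pebble game on them simulates a play of the $k$-pebble KAI-game on $I$, with Spoiler in the role of Player~1 and Duplicator in the role of Player~2; since both games are determined, it then suffices to establish the two implications ``Player~1 wins $\Rightarrow$ Spoiler wins'' and ``Player~2 wins $\Rightarrow$ Duplicator wins''. The structure $G_S$, on whose vertices Spoiler places his $k+1$ pebbles, is assembled from three kinds of gadgets. \emph{Position gadgets}: a placement of $k$ of Spoiler's pebbles encodes a KAI-position $\pos\colon[k]\to X$ together with a bit recording whose turn it is. \emph{Transition gadgets}: one for every rule $r=(u,v,w,c,d)\in R$, through which Spoiler can route his pebbles — using the spare $(k+1)$-st pebble to carry out the relocation of KAI-pebble $c$ from $u$ to $w$ — thereby simulating the application of $r$ and producing $r(\pos)$. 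A \emph{goal gadget}: it lets Spoiler destroy the partial homomorphism exactly when the simulated position has a KAI-pebble on $\goal$ or leaves Player~2 with $\appl(\pos)=\emptyset$. The structure $G_D$ is a ``folded'' companion graph whose vertices stand for the local states compatible with some KAI-position from which Player~2 can still defend; the homomorphism condition between $G_S$ and $G_D$ will force Duplicator's pebbles to track a legal Player~2 response at every stage, and the set $T_r(\pos)$ of blocking KAI-pebbles is precisely the bookkeeping that the edges of the transition gadgets perform.

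For ``Player~1 wins $\Rightarrow$ Spoiler wins'' I would fix a winning strategy for Player~1 and let Spoiler maintain the invariant that his pebbles sit in a position gadget encoding the current KAI-position. When it is Player~1's turn, Spoiler routes his pebbles through the transition gadget of the rule that Player~1's strategy prescribes; when it is Player~2's turn, Spoiler makes the moves that force Duplicator to commit to some rule, and whichever applicable rule Duplicator's answer corresponds to, Spoiler resumes the simulation from the resulting position (if Duplicator's answer matches no rule in $\appl(\pos)$, the partial homomorphism is already broken). Because Player~1's strategy forces the play into a position with a KAI-pebble on $\goal$ or with no applicable rule for Player~2, Spoiler eventually reaches the goal gadget and wins.

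For the converse I would start from a winning strategy $\mathcal K=(\mathcal K_1,\mathcal K_2,\kappa)$ for Player~2 and define $\mathcal H\defi\cl(H_0)$, where $H_0$ is the set of ``canonical'' partial homomorphisms from $G_S$ to $G_D$: one for each position in $\mathcal K_1\cup\mathcal K_2$ and one for each intermediate configuration of a transition gadget reachable from such a position — at a Player~1-turn position we include these for \emph{every} rule in $\appl(\pos)$, while at a Player~2-turn position we include them only for the single rule $r_{\kappa(\pos)}$. Closure holds by construction, so the substance of this direction is the extension property: one has to check that whenever $g\in\mathcal H$ uses at most $k$ pebbles and Spoiler places a further pebble on an arbitrary vertex of $G_S$, the corresponding canonical homomorphism can be extended to cover it — using $\kappa$ to select Player~2's rule at Player~2-turn positions, and using the fact that $\mathcal K_2$ contains $r(\pos)$ for \emph{all} $r\in\appl(\pos)$ at Player~1-turn positions. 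One also has to verify along the way that each map in $H_0$ really is a partial homomorphism, i.e. that it respects all edges of $G_S$.

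Finally, the quantitative claims are routine counting: $G_S$ and $G_D$ contain one position gadget per (node, pebble, turn) and one transition gadget per rule, each of size polynomial in $k$ and $|X|$, which yields $|V(G_S)|+|V(G_D)|=O(|X|\cdot|R|\cdot k^{2})$ and $|E(G_S)|+|E(G_D)|=O(k^{4}(|X|^{2}|R|+|X|\cdot|R|^{2}))$; and since every gadget is described by a fixed local pattern that can be emitted by iterating over $X$, $[k]$ and $R$, the construction is computable in \DTIME$(O(\|I\|^{3}))$ and in \LOGSPACE. I expect the main obstacle to be the extension property in the converse direction, and within it the behaviour of the transition gadgets: the construction must simultaneously (i)~make the $(k+1)$-st pebble sufficient to perform a rule application yet prevent Spoiler from abusing it to interleave two applications, (ii)~keep the canonical homomorphism extendable at \emph{every} intermediate configuration of a transition, no matter which of Spoiler's up to $k+1$ pebbles he moves next, and (iii)~be rigid at Player~2-turn positions so that Duplicator's only homomorphic option is the one $\kappa$ dictates, while being flexible at Player~1-turn positions so as to accommodate every applicable rule. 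Reconciling (i)--(iii) is the technical heart of the argument and is where a proof technique different from that of Kolaitis and Panttaja is needed.
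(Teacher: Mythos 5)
Your proposal is a reasonable high-level plan, but it is not yet a proof, and the part you flag as ``the technical heart'' is exactly where the argument is missing — and it is not a gap that a cleverer version of $\mathcal H=\cl(H_0)$ can fill. Two devices the paper uses are absent from your sketch and are indispensable.

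First, the \emph{multiple input one-way switch} $M^{k,n}$. Your points (i) and (ii) cannot be handled purely by the transition (rule) gadgets: if the rule gadgets alone connected one position block to the next, then Spoiler could walk a position forward and then walk it straight back, or leave a pebble behind and mix configurations from two adjacent stages, and the family $\cl(H_0)$ would need a partial homomorphism covering every such hybrid — which would either force backward extendability (Duplicator loses nothing going through a rule) or break the simulation. The paper inserts a copy of $M^{k,n}$ after every rule gadget so that the only way for Spoiler to push a position $\pos$ from the input side of the switch to the output side is to place \emph{all} $k+1$ pebbles inside the switch at once. Duplicator's ``input'' strategy on the switch keeps the output at the degenerate $\boldsymbol 0$ position until Spoiler is fully committed; her ``output'' strategy keeps the input at $\boldsymbol 0$, so once the position is on the other side, Spoiler cannot bring it back. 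This is the mechanism that enforces one-directional, non-interleavable simulation of rule applications.

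Second, the \emph{critical-strategy calculus}. Because Duplicator's local strategy on a switch necessarily has a position with $k$ pebbles at which the extension property fails (that is precisely when Spoiler completes his traversal), no single closed family $\cl(H_0)$ of partial homomorphisms can serve as a winning strategy: closure and extension cannot both hold at these positions within one family. The paper instead defines ``critical strategies'' — closed families with a distinguished set of critical positions at which extension is allowed to fail — proves a gluing lemma (Lemma \ref{lem:combcirc}) for composing them across gadgets, and a union lemma (Lemma \ref{lem:combunion}) saying that a union of critical strategies is a genuine winning strategy provided every critical position of one member is a non-critical position of another. Duplicator's global strategy $\mathcal H$ is then a union of global critical strategies $\mathcal S^{\text{init}}$, $\mathcal S_{\pos}$, $\mathcal S^{\text{restart}}_{(\pos,T)}$, $\mathcal D_{\pos}$, $\mathcal D^{\text{restart}}_{(\pos,T)}$ indexed by positions in $\mathcal K_1\cup\mathcal K_2$ and by ``restart'' labels $T\neq\emptyset$, and the correctness proof amounts to verifying the hand-off inclusions between their critical sets. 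Your ``include canonical maps for every $r\in\appl(\pos)$ at Player-1 turns, only for $r_{\kappa(\pos)}$ at Player-2 turns'' is the right intuition for \emph{which} strategies to take the union over; but without a choice gadget $C^m$ (which lets Duplicator steer Spoiler to the rule gadget $RD(r_{\kappa(\pos)})$ of her choosing), without the $(\pos,T)$-restart mechanism via the $\whitenode$ vertices, and without the critical-strategy machinery to glue and combine, the closure/extension conflict at the switches is unresolvable. In short: you correctly located the difficulty, but the missing idea is not an analysis step — it is the one-way switch plus the critical-strategy formalism, and neither appears in your plan.
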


\begin{IEEEproof}[Proof of Theorem \ref{maintheorem}]
 Let $k\!\geq\! 15$ be a fixed integer and $\epsilon>0$. Assume that $\mathbb{A}$ is an algorithm that determines the winner of the existential $k$-pebble game on structures \textbf{A} and \textbf{B} in time $O((\|\textbf{A}\|\!+\!\|\textbf{B}\|)^{\frac{k-2}{12}-\epsilon})$. Let $\mathbb B$ be the algorithm that first applies the reduction from Lemma \ref{lem:reduction} to a given instance $I$ of the ($k-1$)-pebble KAI-game and then executes $\mathbb{A}$. Since $\|G_S\|\!+\!\|G_D\|=O(\|I\|^3)$, $\mathbb B$ has running time $O(\|I\|^3\!+\!\|I\|^{3\left(\frac{k-2}{12}-\epsilon\right)})$, and thus solves the $k'$-pebble KAI-game in time $O(\|I\|^{\frac{k'-1}{4}-\epsilon'})$ for $k'=k-1$ and $\epsilon'=\epsilon/3$. This contradicts Theorem \ref{thm:KAI}. \end{IEEEproof}
In addition, Lemma \ref{lem:reduction} also implies EXPTIME-completeness when $k$ is part of the input and PTIME-completeness for every fixed $k\geq 4$, hence subsumes the result of Kolaitis and Panttaja \cite{Kolaitis.2003}. Since the reduction is also an fpt-reduction, it follows that determining the winner in the existential $k$-pebble game is complete for the parameterized complexity class XP.

In our reduction we first construct two colored graphs out of smaller graphs, called gadgets. In order to prove the existence of a winning strategy for one player, we combine strategies for the gadgets to a strategy for the whole graph. The easier part is to do that for Spoiler. As in \cite{Grohe.1996} and \cite{Kolaitis.2003}, we say that Spoiler \textit{can reach} position $p_j$ from position $p_i$ if he has a strategy in the game such that starting from position $p_i$ he wins the game, or position $p_j$ occurs in the game after some finite number of rounds. Since this relation is transitive, we can combine such strategies to show that Spoiler can reach some position $p$ from $\emptyset$; if $p$ does not define a partial homomorphism, this gives us a winning strategy for Spoiler.

For Duplicator this is more difficult. A \textit{critical strategy} in the existential ($k+1$)-pebble game is a nonempty family $\mathcal H$ of partial homomorphisms satisfying the closure property (Definition \ref{def:win}) together with a set of \textit{critical positions} $\crit(\mathcal H)\subset \mathcal H$ such that $h\in \crit(\mathcal H) \Longrightarrow |\dom(h)|=k$ and all $g\in \mathcal H \setminus \crit(\mathcal H)$ satisfy the extension property. A critical strategy is nearly a winning strategy in the sense that Duplicator wins unless the game reaches a critical position. Note that a critical strategy with $\crit(\mathcal H)=\emptyset$ is a winning strategy and every critical strategy in the ($k+1$)-pebble game is a winning strategy in the $k$-pebble game. Let $\mathcal{\hat{H}} \defi \mathcal H \setminus \crit(\mathcal H)$. As for winning strategies, the union of critical strategies is also a critical strategy. The following lemma enables us to construct a winning strategy out of critical strategies.
\begin{lemma}\label{lem:combunion}
If $\mathcal H_1,\ldots,\mathcal H_l$ are critical strategies on the same structures and for all $i\in[l]$ and all $p\in\crit(\mathcal H_i)$ there exists a $j\in[l]$ such that $p\in \mathcal{\hat{H}}_j$, then $\bigcup_{i\in[l]}\mathcal H_i$ is a winning strategy on these structures.  \qed
\end{lemma}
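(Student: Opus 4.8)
The plan is to check directly that $\mathcal H \defi \bigcup_{i\in[l]}\mathcal H_i$ meets the two requirements of a winning strategy in Definition~\ref{def:win}: it is a nonempty family of partial homomorphisms closed under taking subsets, and it satisfies the extension property. Nonemptiness and closure are immediate. Each $\mathcal H_i$ is a nonempty family of partial homomorphisms, hence so is $\mathcal H$. For closure, if $h\in\mathcal H$ and $g\subseteq h$, pick $i$ with $h\in\mathcal H_i$; since $\mathcal H_i$ is a critical strategy and thus satisfies the closure property, $g\in\mathcal H_i\subseteq\mathcal H$.

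The one point that actually uses the hypothesis of the lemma is the extension property. So fix $g\in\mathcal H$ with $|\dom(g)|\le k$ and an arbitrary $z\in A$; we must produce $h\in\mathcal H$ with $g\subseteq h$ and $z\in\dom(h)$. Choose an index $i$ with $g\in\mathcal H_i$ and distinguish two cases according to whether $g$ is critical in that strategy. If $g\notin\crit(\mathcal H_i)$, then $g\in\mathcal{\hat{H}}_i$, so $g$ satisfies the extension property inside the critical strategy $\mathcal H_i$ (the side condition $|\dom(g)|\le k$ holds by assumption), and we obtain the desired $h\in\mathcal H_i\subseteq\mathcal H$. If instead $g\in\crit(\mathcal H_i)$, we invoke the assumption of the lemma to get an index $j$ with $g\in\mathcal{\hat{H}}_j=\mathcal H_j\setminus\crit(\mathcal H_j)$; now $g$ satisfies the extension property inside $\mathcal H_j$, so again some $h\in\mathcal H_j\subseteq\mathcal H$ witnesses extension.

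There is no serious obstacle here: the argument is a two-way case analysis, and all the content is carried by the hypothesis, which is precisely what excludes the bad situation in which a position is critical in every strategy that contains it. The only things to be careful about are that the case split is taken with respect to a fixed strategy $\mathcal H_i$ in which $g$ was found, and that the bound $|\dom(g)|\le k$ we are handed is exactly the side condition needed to apply the extension property of a critical strategy. Hence $\mathcal H$ is a winning strategy; equivalently, it is a critical strategy whose set of critical positions has been driven to $\emptyset$.
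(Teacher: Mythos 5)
Your proof is correct and is the natural direct verification that the paper omits (the lemma is stated with a \qedsymbol{} and no argument). You check nonemptiness, closure, and extension for $\bigcup_i \mathcal H_i$ straight from Definition~\ref{def:win}, and the two-case split on whether $g$ is critical in a chosen $\mathcal H_i$ — using the hypothesis to relocate $g$ to some $\mathcal{\hat H}_j$ in the bad case — is exactly the content of the lemma. The only cosmetic remark is that you apply closure with $g\subseteq h$ while the paper's Definition~\ref{def:win} states it for $g\subset h$; this is immaterial since $g=h$ is trivial.
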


Every gadget $Q$ that we construct consists of two graphs $Q_S$ and $Q_D$ for Spoiler's and Duplicator's side, respectively. The graphs contain \textit{boundary vertices} $\bd(Q_S)\subseteq V(Q_S)$ and $\bd(Q_D)\subseteq V(Q_D)$, which are the vertices shared with other gadgets. That is, vertices in $V(Q_S)\setminus\bd(Q_S)$ ($V(Q_D)\setminus\bd(Q_D)$) are only adjacent to vertices in $V(Q_S)$ ($V(Q_D)$). A \textit{boundary function} of a strategy $\mathcal{H}$ on a gadget $Q$ is a mapping $\beta\colon\bd(Q_S)\to\bd(Q_D)$ such that $\beta(z)=h(z)$ for all $h\in\mathcal H$ and all $z\in \bd(Q_S)\cap\dom(h)$. We say that two strategies $\mathcal G$ and $\mathcal H$ on gadgets $Q$ and $Q'$ are \textit{connectable}, if they have boundary functions $\beta_{\mathcal G}$ and $\beta_{\mathcal H}$ and it holds that $\beta_{\mathcal G}(z) = \beta_{\mathcal H}(z)$ for all $z\in\bd(Q_S)\cap\bd(Q'_S)$.  If $\mathcal G$ and $\mathcal H$ are two connectable strategies, we define the \textit{composition}
$$
\mathcal G \circ \mathcal H = \{g\cup h\mid g\in\mathcal G, h\in\mathcal H\}.
$$
\begin{lemma} \label{lem:combcirc}
 Let $\mathcal G$  and $\mathcal H$ be two connectable critical strategies on gadgets $Q=(Q_S,Q_D)$ and $Q'=(Q'_S,Q'_D)$, respectively. The composition $\mathcal G\circ\mathcal H$ is a critical strategy on $Q_S\cup Q'_S$ and $Q_D\cup Q'_D$ with $\crit(\mathcal G\circ\mathcal H) = \crit(\mathcal G)\cup\crit(\mathcal H)$. \qed
\end{lemma}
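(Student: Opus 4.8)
The plan is to verify the two defining properties of a critical strategy for $\mathcal G\circ\mathcal H$ directly from the definitions, and then identify its set of critical positions.

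First I would check that every element of $\mathcal G\circ\mathcal H$ is a partial homomorphism from $Q_S\cup Q'_S$ to $Q_D\cup Q'_D$. Take $g\cup h$ with $g\in\mathcal G$, $h\in\mathcal H$. Since $\mathcal G$ and $\mathcal H$ are connectable, they have boundary functions $\beta_{\mathcal G},\beta_{\mathcal H}$ that agree on $\bd(Q_S)\cap\bd(Q'_S)$; for any vertex $z$ in the common domain of $g$ and $h$, the gadget structure forces $z\in\bd(Q_S)\cap\bd(Q'_S)$ (vertices outside the boundary live only in one side), so $g(z)=\beta_{\mathcal G}(z)=\beta_{\mathcal H}(z)=h(z)$. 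Hence $g\cup h$ is a well-defined function. It preserves edges: any edge of $Q_S\cup Q'_S$ lies entirely in $Q_S$ or entirely in $Q'_S$ (again because non-boundary vertices are only adjacent within their own gadget), and is therefore preserved by $g$ or by $h$ respectively. Nonemptiness is immediate since $\mathcal G,\mathcal H$ are nonempty and $\emptyset\cup\emptyset=\emptyset$ works, but more usefully closure will give us plenty of elements.

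Next I would verify the closure property: if $f\subseteq g\cup h$ with $g\in\mathcal G$, $h\in\mathcal H$, then $f = (f\restriction \dom(g))\cup(f\restriction\dom(h))$ and each restriction lies in $\mathcal G$ resp.\ $\mathcal H$ by their closure, so $f\in\mathcal G\circ\mathcal H$. Then I would pin down the critical set. The natural candidate is $\crit(\mathcal G\circ\mathcal H)=\crit(\mathcal G)\cup\crit(\mathcal H)$, viewing each $p\in\crit(\mathcal G)$ as an element of the composition via $p\cup\emptyset$ (and symmetrically). Each such $p$ has $|\dom(p)|=k$ as required. For the extension property I would take $f=g\cup h\in(\mathcal G\circ\mathcal H)\setminus(\crit(\mathcal G)\cup\crit(\mathcal H))$ with $|\dom(f)|\le k$ and a vertex $z$; since $|\dom(g)|,|\dom(h)|\le k$, and neither $g$ nor $h$ is critical (if $g$ were critical then $|\dom(g)|=k$, forcing $\dom(h)\subseteq\dom(g)$ and $f=g\in\crit(\mathcal G)$, contradiction; similarly for $h$), I can apply the extension property of $\mathcal G$ to $g$ and of $\mathcal H$ to $h$. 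The subtlety, and the one step I expect to need the most care, is that when $z$ lies in (or gets mapped into) the shared boundary I must extend $g$ and $h$ consistently: I would first extend on the side that "owns" $z$, say get $g'\supseteq g$ with $z\in\dom(g')$ in $\mathcal G$; if $z$ is a boundary vertex shared with $Q'$, connectability tells me $g'(z)=\beta_{\mathcal G}(z)=\beta_{\mathcal H}(z)$, and since $h\in\mathcal H$ there is (by extension, as $h$ is non-critical and $|\dom(h)|\le k$) an $h'\supseteq h$ with $z\in\dom(h')$, and necessarily $h'(z)=\beta_{\mathcal H}(z)=g'(z)$, so $g'\cup h'$ is a well-defined member of $\mathcal G\circ\mathcal H$ extending $f$ and defined on $z$; if $z$ is a non-boundary vertex of $Q_S$ then $z\notin\dom(h)$ cannot cause a conflict and $g'\cup h$ already works. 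Finally I would note that $\mathcal G\circ\mathcal H$ is a strategy on $Q_S\cup Q'_S$, $Q_D\cup Q'_D$ with boundary $\bd(Q_S)\cup\bd(Q'_S)$ and boundary function $\beta_{\mathcal G}\cup\beta_{\mathcal H}$ (well-defined by connectability), which is what is needed for the composition to be iterable. Everything here is routine bookkeeping with partial functions; the only place where something could go wrong is a domain clash on the shared boundary, and connectability is precisely the hypothesis that rules this out.
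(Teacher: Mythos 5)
Your proof is correct and is the natural direct verification of the definitions; the paper omits the proof of Lemma~\ref{lem:combcirc} (marking it \textrm{\qedsymbol} immediately), treating exactly this bookkeeping as routine, so your argument supplies what the paper leaves implicit. One small simplification: when $z$ is a shared boundary vertex it is not necessary to extend \emph{both} $g$ and $h$; extending $g$ alone to some $g'$ with $z\in\dom(g')$ already gives $z\in\dom(g'\cup h)$, and connectability guarantees $g'(z)=\beta_{\mathcal G}(z)=\beta_{\mathcal H}(z)$ so no clash can arise with $h$.
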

Playing according to the strategy $\mathcal G \circ \mathcal H$ on $Q$ and $Q'$ means that Duplicator uses strategy $\mathcal G$ on $Q$ and strategy $\mathcal H$ on $Q'$. The requirements on the boundary ensure that strategy $\mathcal G$ equals strategy $\mathcal H$ on the intersection of $Q$ and $Q'$. 
We use the operator $\circ$ to construct global critical strategies for the whole graph out of critical strategies on the gadgets. Then we show that the union of those global critical strategies is by Lemma \ref{lem:combunion} a winning strategy for Duplicator.

\section{The Reduction}
\begin{figure*}[htp]
 \centering

\input{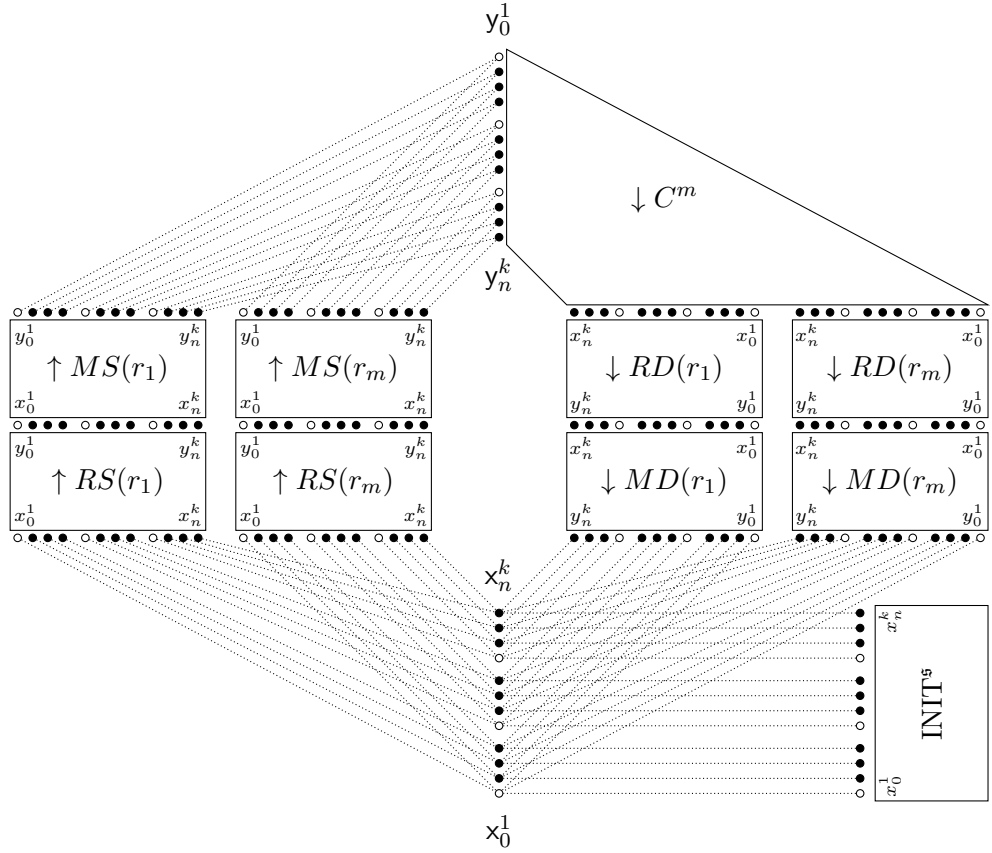}
 \caption{The graph $G_D$. The dotted edges are only for visual purposes and need to be contracted. The gadgets $MS(r_i)$ and $MD(r_i)$ are distinct copies of the switch $M^{k,n}$. The special vertices $\xnode^i_j$ and $\ynode^i_j$ are also depicted in Figure \ref{fig:blocks}.}
 \label{fig:constr}
\end{figure*}

Let $([n],R,\startpos,\goal)$ be an instance of the $k$-pebble KAI-game and $m\defi |R|$. As in \cite{Kolaitis.2003}, the main idea is to simulate every play of the KAI-game within the existential pebble game such that Spoiler imitates the moves of Player 1 and Duplicator imitates the moves of Player 2. 
First, we construct two colored simple graphs, $G_S$ and $G_D$, and then show how to omit the colors while switching to directed graphs. We use $|V(G_S)|$ colors to color every vertex of Spoiler's graph $G_S$ differently and partition the vertices of Duplicator's graph with these colors. Thus, whenever Spoiler pebbles a vertex in $G_S$ there is a corresponding set of vertices in $G_D$ Duplicator can pebble.

To encode a position of the KAI-game in the existential ($k+1$)-pebble game we introduce the vertices $\{\xnode^{1},\ldots,\xnode^{k}\}$ in Spoiler's graph and $\{\xnode^{i}_l\mid i\in [k],0\leq l\leq n\}$ in Duplicator's graph. 
For each $i$, all the vertices $\{\xnode^{i}\}\cup\{\xnode^{i}_0,\ldots,\xnode^{i}_n\}$ are colored with the same unique color, denoted by $c_{\xnode^{i}}$.
The vertices $\xnode^{i}_0$ play a special role in the construction, so we draw \whitenode for vertices with subindex $0$ in the figures and \blacknode for all other vertices. Furthermore, we introduce vertices $\ynode^{i}$,$\ynode^{i}_l$ in the same way.

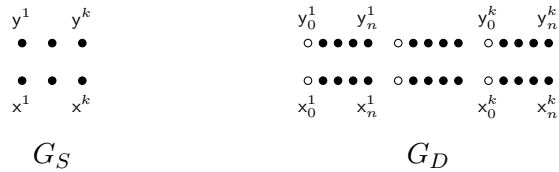
\begin{figure*}[htp]
\begin{center}
\begin{tikzpicture}
      [scale=1, transform shape, knoten/.style={circle,draw=black,fill=black,
      inner  sep=0pt,minimum  size=1mm},kknoten/.style={circle,draw=black,fill=black,
      inner  sep=0pt,minimum  size=0.8mm}, leerknoten/.style={circle,draw=black,fill=white,
      inner  sep=0pt,minimum  size=1mm},leerkknoten/.style={circle,draw=black,fill=white,
      inner  sep=0pt,minimum  size=0.8mm}]
  
  \node[knoten] (x1) at (-1.8-2,0) [label=below:{\tiny{$\xnode^{1}$}}] {};
  \node[knoten] (x2) at (-1.4-2,0)  {};
  \node[knoten] (x3) at (-1-2,0) [label=below:{\tiny{$\xnode^{k}$}}] {};
  
  \node[leerknoten] (x10) at (0,0) [label=below:{\tiny{$\xnode^{1}_0$}}] {};
  \node[knoten] (x11) at (0.2,0)  {};
  \node[knoten] (x12) at (0.4,0)  {};
  \node[knoten] (x13) at (0.6,0) {};
  \node[knoten] (x14) at (0.8,0) [label=below:{\tiny{$\xnode^{1}_n$}}] {};

  \node[leerknoten] (x20) at (1.2,0) {};
  \node[knoten] (x21) at (1.4,0)  {};
  \node[knoten] (x22) at (1.6,0)  {};
  \node[knoten] (x23) at (1.8,0)  {};
  \node[knoten] (x24) at (2,0)  {};

  \node[leerknoten] (x30) at (2.4,0) [label=below:{\tiny{$\xnode^{k}_0$}}] {};
  \node[knoten] (x31) at (2.6,0){};
  \node[knoten] (x32) at (2.8,0)  {};
  \node[knoten] (x33) at (3,0)  {};
  \node[knoten] (x34) at (3.2,0) [label=below:{\tiny{$\xnode^{k}_n$}}] {};
  
\begin{scope}[yshift=0.5cm]
  \node[knoten] (x1) at (-1.8-2,0) [label=above:{\tiny{$\ynode^{1}$}}] {};
  \node[knoten] (x2) at (-1.4-2,0)  {};
  \node[knoten] (x3) at (-1-2,0) [label=above:{\tiny{$\ynode^{k}$}}] {};
  
  \node[leerknoten] (x10) at (0,0) [label=above:{\tiny{$\ynode^{1}_0$}}] {};
  \node[knoten] (x11) at (0.2,0)  {};
  \node[knoten] (x12) at (0.4,0)  {};
  \node[knoten] (x13) at (0.6,0) {};
  \node[knoten] (x14) at (0.8,0) [label=above:{\tiny{$\ynode^{1}_n$}}] {};

  \node[leerknoten] (x20) at (1.2,0) {};
  \node[knoten] (x21) at (1.4,0)  {};
  \node[knoten] (x22) at (1.6,0)  {};
  \node[knoten] (x23) at (1.8,0)  {};
  \node[knoten] (x24) at (2,0)  {};

  \node[leerknoten] (x30) at (2.4,0) [label=above:{\tiny{$\ynode^{k}_0$}}] {};
  \node[knoten] (x31) at (2.6,0){};
  \node[knoten] (x32) at (2.8,0)  {};
  \node[knoten] (x33) at (3,0)  {};
  \node[knoten] (x34) at (3.2,0) [label=above:{\tiny{$\ynode^{k}_n$}}] {};
\end{scope}
\node at (-3.4,-1){$G_S$};
\node at (1.6,-1) {$G_D$};

\end{tikzpicture}
\end{center}
\caption{Vertex blocks to encode positions in the KAI-game.} \label{fig:blocks}
\end{figure*}

If $\pos\colon[k]\to[n]$ is a position of the $k$ pebbles in the KAI-game and it is Player 1's turn, then $\{(\xnode^{i},\xnode^{i}_{\pos(i)})\mid i\in [k]\}$ is the corresponding position in the existential pebble game. If it is Player 2's turn, then $\{(\ynode^{i},\ynode^{i}_{\pos(i)})\mid i\in [k]\}$ is the corresponding position. During the course of the game Spoiler pebbles some vertex $\xnode^i$ asking, ``Where does KAI-pebble $i$ lie?'' Due to the coloring Duplicator has to answer with some vertex $\xnode^i_l$ meaning ``KAI-pebble $i$ lies on node $l$.''
The vertices \whitenode are used to handle the case when Spoiler does not play in the intended way, that is, Spoiler has a winning strategy if and only if he has a winning strategy on the \blacknode vertices. In order to name positions that include \whitenode vertices, we define for positions $\pos$ and sets $T\subseteq [k]$ the mapping $(\pos,T)$ as 
$$
(\pos,T)(i) = \begin{cases} 0\text{, }i\in T, \\ \pos(i)\text{, else,}\end{cases}
$$
and write $\pos$ for $(\pos,\emptyset)$ and $\boldsymbol{0}$ for $(\pos,[k])$. 
 Now we have to introduce gadgets to ensure that Spoiler can simulate a play of the KAI-game. That is, if Player 1 can reach a position $\pos$ in the KAI-game, then Spoiler can reach the encoded position on the $\xnode$- or $\ynode$-vertices. The following list of properties ensures this.

\begin{itemize}
\item For the start position $\startpos$, Spoiler can reach $\{(\xnode^{i},\xnode^{i}_{\startpos(i)})\mid i\in [k]\}$ from $\emptyset$.
\item For a position $\pos$ and every rule $r\in\appl(\pos)$, Spoiler can reach $\{(\ynode^{i},\ynode^{i}_{r(\pos)(i)})\mid i\in [k]\}$ from $\{(\xnode^{i},\xnode^{i}_{\pos(i)})\mid i\in [k]\}$.
\item Spoiler can reach $\{(\xnode^{i},\xnode^{i}_{r(\pos)(i)})\mid i\in [k]\}$ from $\{(\ynode^{i},\ynode^{i}_{\pos(i)})\mid i\in [k]\}$ for a rule $r\in\appl(\pos)$ of Duplicator's choice.
\item $\{(\ynode^{i},\ynode^{i}_{\pos(i)})\mid i\in [k]\}$ is not a partial homomorphism if $\pos(i)=\goal$ for some $i\in[k]$.
\end{itemize}

It follows from these properties that if Player 1 has a winning strategy in the KAI-game, then Spoiler wins the existential pebble game by simulating Player 1's winning strategy. The difficult task is to prove that this is the only way for Spoiler to win.   
We give a brief description of the construction and argue how Spoiler is intended to play on it. Duplicator's graph is illustrated in Figure \ref{fig:constr}. The gadgets are glued together at their boundary vertices and the vertex blocks that are glued together inherit their colors. In order to make sure that the colors partition the graphs we define a new color for every combination of colors occurring at one vertex in the graph. In Spoiler's graph we proceed the same way with Spoiler's side of the gadgets.

To implement the last condition, we simply delete the color $c_{\ynode^{i}}$ from $\ynode^{i}_\goal$ for all $i\in[k]$. Since $\ynode^{i}$ and all $\ynode^{i}_l$, $l\in[n]\setminus\{\goal\}$, are still colored $c_{\ynode^{i}}$, it follows that the mapping $\ynode^{i}\mapsto\ynode^{i}_{l}$ that encodes ``KAI-pebble $i$ lies on node $l$'' is a partial homomorphism if and only if $l$ is not the goal node $\goal$. It follows that Spoiler wins the game if he can reach the position $\{(\ynode^{i},\ynode^{i}_{\pos(i)})\mid i\in [k]\}$ where $\pos$ is a winning position for Player 1 in the KAI-game. 

To make sure that Spoiler can reach the start position, we introduce the initialization gadget INIT$^{\startpos}$, whose boundary $x^{1},\ldots,x^{k}$ in Spoiler's graph and $x^{1}_0,\ldots,x^{k}_n$ in Duplicator's graph is identified with vertices $\xnode^{1},\ldots,\xnode^{k}$ in Spoiler's graph and $\xnode^{1}_0,\ldots,\xnode^{k}_n$ in Duplicator's graph. 
 The boundary vertices of the other gadgets have a similar form and can be divided into input vertices $x$ (with certain indices) and output vertices $y$ that are colored in the same way as the $\xnode$- and $\ynode$-vertices. As above, a position $\pos$ in the KAI-game is encoded as $\{(x^i,x^i_{\pos})\mid i\in [k]\}$ on these vertex blocks and we call it \emph{$\pos$ on $x$}.
The direction of the gadgets is indicated in Figure \ref{fig:constr} by arrows. Thus, the players are intended to move clockwise in the graph. 

For each rule $r$ we define different rule gadgets $RS(r)$ and $RD(r)$ in which Spoiler can reach the position $r(\pos)$ on the output $y$ from $\pos$ on the input $x$ if $r$ is applicable to $\pos$. Hence, from a position $\pos$ on $\xnode$ Spoiler can choose an applicable rule $r$ and reach $r(\pos)$ on the output $y$ of some rule gadget $RS(r)$. The choice gadget $C^{m}$ enables Duplicator to choose one of the $m$ rules she wants to apply. That is, Duplicator can choose a rule $r$ such that from $\pos$ on $\ynode$ Spoiler can reach $\pos$ on the input of $RD(r)$ and then $r(\pos)$ on the output of $RD(r)$. 
The most complex gadget is the multiple input one-way switch $M^{k,n}$, which is a generalization of the multiple input one-way switch defined in \cite{Kolaitis.2003}. 
In our construction we put one copy of $M^{k,n}$ at the output vertices of every rule gadget. Spoiler's strategy on this gadget is nevertheless simple: he can pebble a position through the switch, that is, he can reach $\pos$ on the output from $\pos$ on the input. This concludes the description of how the gadgets can be used by Spoiler to ensure the four properties above. 

Duplicator's strategy is to force Spoiler to play exactly this way. Especially, if the KAI-game does not stop, then Duplicator can play the existential pebble game forever by forcing Spoiler to simulate this infinite play. The main tool for Duplicator is to answer with \whitenode vertices whenever Spoiler plays incorrectly: if Spoiler pebbles a vertex $x^i$ he is not supposed to pebble now, then Duplicator answers with $x^i_0$.  The strategies on the gadgets ensure that such positions extend to partial homomorphisms and thus Spoiler does not benefit from them.

\subsection{Rule Gadgets}

The rule gadgets $RS(r)$ and $RD(r)$ consist of input vertices $x^1,\ldots , x^k$ in Spoiler's graph and $x^1_0,\ldots,x^k_n$ in Duplicator's graph, and output vertices $y^1,\ldots, y^k$ and $y^1_0,\ldots,y^n_k$. For each rule $r=(u,v,w,c,d)$ we connect the vertices in the gadgets $RS(r)$ and $RD(r)$ as shown in Figure \ref{fig:RS} and \ref{fig:RD}. 

\begin{figure}
 \centering
\begin{tikzpicture}
	[scale=1, transform shape, knoten/.style={circle,draw=black,fill=black,
	inner  sep=0pt,minimum  size=1mm},kknoten/.style={circle,draw=black,fill=black,
	inner  sep=0pt,minimum  size=0.8mm}, leerknoten/.style={circle,draw=black,fill=white,
	inner  sep=0pt,minimum  size=1mm},leerkknoten/.style={circle,draw=black,fill=white,
	inner  sep=0pt,minimum  size=0.8mm}]
	
	\node[knoten] (xc) at (3,0) [label=above:{{\tiny$x^c$}}] {};
	\node[knoten] (xd) at (3.5,0) [label=above:{{\tiny$x^d$}}] {};
	\node[knoten] (xi) at (4,0) [label=above:{{\tiny$x^i$}}] {};
	
	\begin{scope}[yshift=-1.5cm]

	\node[knoten] (yc) at (3,0) [label=below:{{\tiny$y^c$}}] {};
	\node[knoten] (yd) at (3.5,0) [label=below:{{\tiny$y^d$}}] {};
	\node[knoten] (yi) at (4,0) [label=below:{{\tiny$y^i$}}] {};
	
	\end{scope}
	
		\draw[-] (xc) -- (yc);
	\draw[-] (xd) -- (yd);
	\draw[-] (xi) -- (yi);

	\node[leerknoten] (x10) at (5,0) [label=above:{{\tiny$x^c_0$}}] {};
	\node[knoten] (x11) at (5.3,0) [label=above:{{\tiny$x^c_1$}}] {};
	\node[knoten] (x12) at (5.6,0) [label=above:{{\tiny$x^c_u$}}] {};
	\node[knoten] (x13) at (5.9,0) {};
	\node[knoten] (x14) at (6.2,0) {};
	\node[knoten] (x15) at (6.5,0) [label=above:{{\tiny$x^c_n$}}] {};

	\node[leerknoten] (x20) at (5+2,0) [label=above:{{\tiny$x^d_0$}}] {};
	\node[knoten] (x21) at (5.3+2,0) [label=above:{{\tiny$x^d_1$}}] {};
	\node[knoten] (x22) at (5.6+2,0)  {};
	\node[knoten] (x23) at (5.9+2,0) [label=above:{{\tiny$x^d_v$}}] {};
	\node[knoten] (x24) at (6.2+2,0)  {};
	\node[knoten] (x25) at (6.5+2,0) [label=above:{{\tiny$x^d_n$}}] {};

	\node[leerknoten] (x30) at (5+4,0) [label=above:{{\tiny$x^i_0$}}] {};
	\node[knoten] (x31) at (5.3+4,0) [label=above:{{\tiny$x^i_1$}}] {};
	\node[knoten] (x32) at (5.6+4,0)  {};
	\node[knoten] (x33) at (5.9+4,0)  {};
	\node[knoten] (x34) at (6.2+4,0) [label=above:{{\tiny$x^i_w$}}] {};
	\node[knoten] (x35) at (6.5+4,0) [label=above:{{\tiny$x^i_n$}}] {};

\begin{scope}[yshift=-1.5cm]

	\node[leerknoten] (y10) at (5,0) [label=below:{{\tiny$y^c_0$}}] {};
	\node[knoten] (y11) at (5.3,0) [label=below:{{\tiny$y^c_1$}}] {};
	\node[knoten] (y12) at (5.6,0)  {};
	\node[knoten] (y13) at (5.9,0) {};
	\node[knoten] (y14) at (6.2,0) [label=below:{{\tiny$y^c_w$}}] {};
	\node[knoten] (y15) at (6.5,0) [label=below:{{\tiny$y^c_n$}}] {};

	\node[leerknoten] (y20) at (5+2,0) [label=below:{{\tiny$y^d_0$}}] {};
	\node[knoten] (y21) at (5.3+2,0) [label=below:{{\tiny$y^d_1$}}] {};
	\node[knoten] (y22) at (5.6+2,0)  {};
	\node[knoten] (y23) at (5.9+2,0) [label=below:{{\tiny$y^d_v$}}] {};
	\node[knoten] (y24) at (6.2+2,0)  {};
	\node[knoten] (y25) at (6.5+2,0) [label=below:{{\tiny$y^d_n$}}] {};

	\node[leerknoten] (y30) at (5+4,0) [label=below:{{\tiny$y^i_0$}}] {};
	\node[knoten] (y31) at (5.3+4,0) [label=below:{{\tiny$y^i_1$}}] {};
	\node[knoten] (y32) at (5.6+4,0)  {};
	\node[knoten] (y33) at (5.9+4,0)  {};
	\node[knoten] (y34) at (6.2+4,0) [label=below:{{\tiny$y^i_w$}}] {};
	\node[knoten] (y35) at (6.5+4,0) [label=below:{{\tiny$y^i_n$}}] {};

\end{scope}

	\foreach \x in {0,1,3,4,5} \draw[-] (x1\x) -- (y10);
	\draw[-] (x12) -- (y14);

	\foreach \x in {0,1,2,4,5} \draw[-] (x2\x) -- (y20);
	\draw[-] (x23) -- (y23);

	\foreach \x in {0,1,2,3,5} \draw[-] (x3\x) -- (y3\x);
	\draw[-] (x34) -- (y30);

\end{tikzpicture}
 
  \caption{Rule gadget $RS(u,v,w,c,d)$. The range of $i$ is $[k]\setminus\{c,d\}$.}\label{fig:RS}
\end{figure}
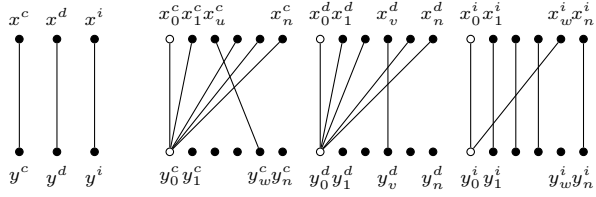

\begin{figure}
 \centering
\begin{tikzpicture}
	[scale=1, transform shape, knoten/.style={circle,draw=black,fill=black,
	inner  sep=0pt,minimum  size=1mm},kknoten/.style={circle,draw=black,fill=black,
	inner  sep=0pt,minimum  size=0.8mm}, leerknoten/.style={circle,draw=black,fill=white,
	inner  sep=0pt,minimum  size=1mm},leerkknoten/.style={circle,draw=black,fill=white,
	inner  sep=0pt,minimum  size=0.8mm}]
	
	\node[knoten] (xc) at (3,0) [label=above:{{\tiny$x^c$}}] {};
	\node[knoten] (xd) at (3.5,0) [label=above:{{\tiny$x^d$}}] {};
	\node[knoten] (xi) at (4,0) [label=above:{{\tiny$x^i$}}] {};

	\node[leerknoten] (x10) at (5,0) [label=above:{{\tiny$x^c_0$}}] {};
	\node[knoten] (x11) at (5.3,0) [label=above:{{\tiny$x^c_1$}}] {};
	\node[knoten] (x12) at (5.6,0) [label=above:{{\tiny$x^c_u$}}] {};
	\node[knoten] (x13) at (5.9,0) {};
	\node[knoten] (x14) at (6.2,0) {};
	\node[knoten] (x15) at (6.5,0) [label=above:{{\tiny$x^c_n$}}] {};

	\node[leerknoten] (x20) at (5+2,0) [label=above:{{\tiny$x^d_0$}}] {};
	\node[knoten] (x21) at (5.3+2,0) [label=above:{{\tiny$x^d_1$}}] {};
	\node[knoten] (x22) at (5.6+2,0)  {};
	\node[knoten] (x23) at (5.9+2,0) [label=above:{{\tiny$x^d_v$}}] {};
	\node[knoten] (x24) at (6.2+2,0)  {};
	\node[knoten] (x25) at (6.5+2,0) [label=above:{{\tiny$x^d_n$}}] {};

	\node[leerknoten] (x30) at (5+4,0) [label=above:{{\tiny$x^i_0$}}] {};
	\node[knoten] (x31) at (5.3+4,0) [label=above:{{\tiny$x^i_1$}}] {};
	\node[knoten] (x32) at (5.6+4,0)  {};
	\node[knoten] (x33) at (5.9+4,0)  {};
	\node[knoten] (x34) at (6.2+4,0) [label=above:{{\tiny$x^i_w$}}] {};
	\node[knoten] (x35) at (6.5+4,0) [label=above:{{\tiny$x^i_n$}}] {};

\begin{scope}[yshift=-1.5cm]

	\node[knoten] (yc) at (3,0) [label=below:{{\tiny$y^c$}}] {};
	\node[knoten] (yd) at (3.5,0) [label=below:{{\tiny$y^d$}}] {};
	\node[knoten] (yi) at (4,0) [label=below:{{\tiny$y^i$}}] {};

	\node[leerknoten] (y10) at (5,0) [label=below:{{\tiny$y^c_0$}}] {};
	\node[knoten] (y11) at (5.3,0) [label=below:{{\tiny$y^c_1$}}] {};
	\node[knoten] (y12) at (5.6,0)  {};
	\node[knoten] (y13) at (5.9,0) {};
	\node[knoten] (y14) at (6.2,0) [label=below:{{\tiny$y^c_w$}}] {};
	\node[knoten] (y15) at (6.5,0) [label=below:{{\tiny$y^c_n$}}] {};

	\node[leerknoten] (y20) at (5+2,0) [label=below:{{\tiny$y^d_0$}}] {};
	\node[knoten] (y21) at (5.3+2,0) [label=below:{{\tiny$y^d_1$}}] {};
	\node[knoten] (y22) at (5.6+2,0)  {};
	\node[knoten] (y23) at (5.9+2,0) [label=below:{{\tiny$y^d_v$}}] {};
	\node[knoten] (y24) at (6.2+2,0)  {};
	\node[knoten] (y25) at (6.5+2,0) [label=below:{{\tiny$y^d_n$}}] {};

	\node[leerknoten] (y30) at (5+4,0) [label=below:{{\tiny$y^i_0$}}] {};
	\node[knoten] (y31) at (5.3+4,0) [label=below:{{\tiny$y^i_1$}}] {};
	\node[knoten] (y32) at (5.6+4,0)  {};
	\node[knoten] (y33) at (5.9+4,0)  {};
	\node[knoten] (y34) at (6.2+4,0) [label=below:{{\tiny$y^i_w$}}] {};
	\node[knoten] (y35) at (6.5+4,0) [label=below:{{\tiny$y^i_n$}}] {};

\end{scope}

	\draw[-] (xc) -- (yc);
	\draw[-] (xd) -- (yd);
	\draw[-] (xi) -- (yi);

	\draw[-] (x10) -- (y10);
	\draw[-] (x12) -- (y14);

	\draw[-] (x20) -- (y20);
	\draw[-] (x23) -- (y23);

	\foreach \x in {0,1,2,3,5} \draw[-] (x3\x) -- (y3\x);

\end{tikzpicture}
 
 \caption{Rule gadget $RD(u,v,w,c,d)$. The range of $i$ is $[k]\setminus\{c,d\}$.}\label{fig:RD}
\end{figure}
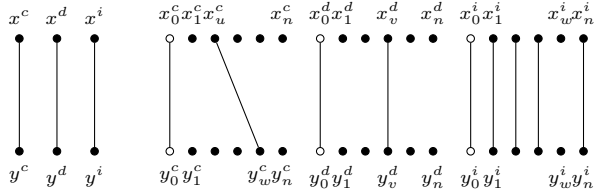

If $r$ is applicable to $\pos$, then Spoiler can reach the position $\{(y^i,y^i_{r(\pos)(i)})\mid i\in[k]\}$ from $\{(x^i,x^i_{\pos(i)})\mid i\in [k]\}$ in both gadgets. To do so, he picks up the remaining pebble and puts it on $y^{1}$. Then he picks up the pebble from $x^{1}$ and puts it on $y^{2}$ and so on. This fact is stated in Lemma \ref{lem:srule}(i) and \ref{lem:drule}(i). Assume that $r$ is not applicable to $\pos$ (then $T_r(\pos)\neq\emptyset$) and the current position is $\{(x^i,x^i_{\pos(i)})\mid i\in [k]\}$. On $RS(r)$ Duplicator can pebble some \whitenode vertex $y^i_0$  when Spoiler asks for $y^i$ ($i\in T_r(\pos)$), and thus can avoid valid positions on the $y$-vertices. Therefore, Spoiler is penalized when he chooses a rule $r$ not applicable to $\pos$ and plays on $RS(r)$. This strategy is stated in Lemma \ref{lem:srule}(ii) for $T=\emptyset$. If Duplicator chooses a rule $r$ not applicable to $\pos$ and plays on $RD(r)$, then she will be penalized, because Spoiler wins immediately from position $\{(x^i,x^i_{\pos(i)})\mid i\in [k]\}$ (Lemma \ref{lem:drule}(iii)) by pebbling on $y^{i}$ for some $i\in T_r(\pos)$. Furthermore, Lemma \ref{lem:srule}(ii) and \ref{lem:drule}(ii) state that for invalid positions  on the $x$-vertices (i.e. $T\neq \emptyset$), Duplicator can avoid valid positions on the $y$-vertices. 

\begin{lemma} \label{lem:srule}
 For every rule $r=(u,v,w,c,d)$ and position $\pos\colon[k]\to [n]$ the following holds in the existential ($k+1$)-pebble game on $RS(r)$:
\begin{enumerate}[leftmargin=2.2 em,label=(\roman*)]
 \item If $r\in\appl(\pos)$, then Spoiler can reach $\{(y^i,y^i_{r(\pos)(i)})\mid i\in[k]\}$ from $\{(x^i,x^i_{\pos(i)})\mid i\in [k]\}$.
 \item  Duplicator has a winning strategy $\mathcal R_{(\pos,T)}$ with boundary function $\{(x^i,x^i_{(\pos,T)(i)})\mid i\in[k]\}\cup \{(y^i,y^i_{(r(\pos),T\cup T_r(\pos))(i)})\mid i\in[k]\}$, for all $T\subseteq [k]$.
\end{enumerate}
\end{lemma}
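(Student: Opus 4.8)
The statement has a Spoiler part (i) and a Duplicator part (ii), of rather different character.

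\emph{Part (i): a forced-response argument.} Suppose $r=(u,v,w,c,d)\in\appl(\pos)$, so $\pos(c)=u$, $\pos(d)=v$ and $\pos(z)\neq w$ for every $z\in[k]$, and suppose the game stands at $\{(x^i,x^i_{\pos(i)})\mid i\in[k]\}$, leaving Spoiler one free pebble. Spoiler plays $y^1$ with the free pebble and then, for $j=2,\dots,k$, picks up the pebble on $x^{j-1}$ and plays $y^j$. Two observations drive the argument: when Spoiler plays $y^j$ the pebble on $x^j$ is still on the board (it is lifted only in round $j+1$), and in $RS(r)_S$ the vertex $y^j$ has the single neighbour $x^j$; hence Duplicator must answer with a vertex of colour $c_{y^j}$ adjacent in $RS(r)_D$ to $x^j_{\pos(j)}$. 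Reading the edge sets of the three block types from Figure \ref{fig:RS}, this vertex is unique in each case --- in the $c$-block it is $y^c_w$ since $\pos(c)=u$, in the $d$-block it is $y^d_v$ since $\pos(d)=v$, and in each $i$-block with $i\notin\{c,d\}$ it is $y^i_{\pos(i)}$ since $\pos(i)\neq w$ --- and these are exactly the vertices $y^j_{r(\pos)(j)}$. So after $k$ rounds the position $\{(y^i,y^i_{r(\pos)(i)})\mid i\in[k]\}$ is reached (the single leftover pebble, on $x^k$, is irrelevant), and if Duplicator ever deviates from the forced answer the position is not a partial homomorphism and Spoiler has already won.

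\emph{Part (ii): exhibit a total homomorphism and take its closure}, recalling that $\cl(h)$ is a winning strategy whenever $h$ is a total homomorphism. For fixed $\pos$ and $T\subseteq[k]$ define $h^{*}\colon V(RS(r)_S)\to V(RS(r)_D)$ by $h^{*}(x^i)=x^i_{(\pos,T)(i)}$ and $h^{*}(y^i)=y^i_{(r(\pos),\,T\cup T_r(\pos))(i)}$. This preserves colours and restricts on the boundary to the map named in the lemma, so it remains to check that $h^{*}$ sends every edge $x^i\sim y^i$ of $RS(r)_S$ to an edge of $RS(r)_D$; then $\mathcal R_{(\pos,T)}:=\cl(h^{*})$ is the claimed winning strategy. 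This is a case analysis over $i=c$, $i=d$, $i\in[k]\setminus\{c,d\}$, each subdivided by whether $i\in T$ and whether $i\in T_r(\pos)$, against the block edge sets $\{x^c_l\sim y^c_0 : l\neq u\}\cup\{x^c_u\sim y^c_w\}$, $\{x^d_l\sim y^d_0 : l\neq v\}\cup\{x^d_v\sim y^d_v\}$ and $\{x^i_l\sim y^i_l : l\neq w\}\cup\{x^i_w\sim y^i_0\}$. The sub-cases with $i\notin T$ but $i\in T_r(\pos)$ are the only delicate ones: there $h^{*}(x^i)=x^i_{\pos(i)}$ and one must observe that $\pos(i)$ is then forced away from the block's distinguished index ($u$ for $c$, $v$ for $d$, and $\pos(i)=w$ for $i\notin\{c,d\}$), which uses that $u,v,w$ are pairwise distinct; the sub-cases $i\in T$ only need $u,v,w\neq 0$.

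I expect the main obstacle to be purely bookkeeping in (ii): getting the output index $(r(\pos),\,T\cup T_r(\pos))(i)$ exactly right so that the prescribed boundary function comes out, and verifying that in every combination of ``pebble $i$ is cheated on the input'' ($i\in T$) and ``$r$ fails applicability because of pebble $i$'' ($i\in T_r(\pos)$) the required edge really lies in $RS(r)_D$; this is precisely where pairwise-distinctness of $u,v,w$ and the special role of the index $0$ enter. A secondary point, in (i), is the customary mild abuse in ``Spoiler reaches $p$'' when $|p|=k$ while the game has $k+1$ pebbles --- the leftover pebble on $x^k$ does no harm since $x^k$ is not shared with the gadget placed downstream in the composition.
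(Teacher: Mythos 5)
Your proposal is correct and follows essentially the same route as the paper: in (i) Spoiler shuttles the spare pebble down each block, with Duplicator's answer forced uniquely by colour and the single $x^i$--$y^i$ adjacency; in (ii) one checks that the prescribed boundary map is a total homomorphism and takes its closure. The paper states (ii) in one line without the explicit case split over $i\in\{c,d\}$ vs.\ $i\notin\{c,d\}$ and $i\in T$ / $i\in T_r(\pos)$; your more detailed verification (including the role of $u,v,w$ being pairwise distinct and nonzero) is a faithful expansion, not a different argument.
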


\begin{proof}
  For $i=1,\ldots,k$ Spoiler takes the remaining pebble and puts it on $y^i$. Since there is an edge $\{x^i,y^i\}$, Duplicator has to answer with $y^i_{r(\pos)(i)}$ because this is the only vertex adjacent to $x^i_{\pos(i)}$. In the next step Spoiler picks up the pebble pair from $x^i, x^i_{\pos}$ and proceeds with $i+1$.

  The boundary function defined in (ii) preserves the vertex colors and maps edges $\{x^i,y^i\}$ to edges $\{x^i_{(\pos,T)(i)}, y^i_{(r(\pos),T\cup T_r(\pos))(i)}\}$, hence defines a total homomorphism on $RS(r)$. It follows that $\mathcal R_{(\pos,T)}\defi \{h\mid h\subseteq \beta\}$ is a winning strategy.
\end{proof}

\begin{lemma}\label{lem:drule}
 For every rule $r=(u,v,w,c,d)$ and position $\pos\colon[k]\to [n]$ the following holds in the existential ($k+1$)-pebble game on $RD(r)$:
\begin{enumerate}[leftmargin=2.2 em,label=(\roman*)]
 \item If $r\in\appl(\pos)$, then Spoiler can reach $\{(y^i,y^i_{r(\pos)(i)})\mid i\in[k]\}$ from $\{(x^i,x^i_{\pos(i)})\mid i\in [k]\}$.
 \item  If $r\in\appl(\pos)$, then Duplicator has a winning strategy $\mathcal R_{(\pos,T)}$ with boundary function $\{(x^i,x^i_{(\pos,T)(i)})\mid i\in[k]\}\cup \{(y^i,y^i_{(r(\pos),T)(i)})\mid i\in[k]\}$, for all $T\subseteq [k]$.
 \item If $r\notin\appl(\pos)$, then Spoiler wins from $\{(x^i,x^i_{\pos(i)})\mid i\in [k]\}$.
\end{enumerate}
\end{lemma}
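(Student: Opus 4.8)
The plan is to prove the three parts in turn; parts~(i) and~(ii) run parallel to the proof of Lemma~\ref{lem:srule}, since on an applicable rule $RD(r)$ behaves just like $RS(r)$, and the only new work is in part~(iii). For~(i), Spoiler plays with the remaining pebble: he successively pebbles $y^1,\dots,y^k$, each time picking up the pebble pair on $x^i,x^i_{\pos(i)}$ once Duplicator has answered. When Spoiler pebbles $y^i$, the edge $\{x^i,y^i\}$ of $RD(r)_S$ together with the pebble on $x^i_{\pos(i)}$ forces Duplicator onto a neighbour of $x^i_{\pos(i)}$ in the $y^i$-block; since $r\in\appl(\pos)$ means $\pos(c)=u$, $\pos(d)=v$ and $\pos(z)\neq w$ for all $z$, that neighbour is necessarily $y^c_w$ (via $\{x^c_u,y^c_w\}$), $y^d_v$ (via $\{x^d_v,y^d_v\}$), and $y^i_{\pos(i)}$ for $i\notin\{c,d\}$ (via $\{x^i_{\pos(i)},y^i_{\pos(i)}\}$, which is present because $\pos(i)\neq w$). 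These are exactly the coordinates of $r(\pos)$, and the intermediate positions create no further constraints because $RD(r)_S$ has no edges other than the $x^i$--$y^i$ ones.

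For part~(ii) I would simply exhibit $\beta := \{(x^i,x^i_{(\pos,T)(i)})\mid i\in[k]\}\cup\{(y^i,y^i_{(r(\pos),T)(i)})\mid i\in[k]\}$ as a total homomorphism from $RD(r)_S$ to $RD(r)_D$ and then quote the remark after Definition~\ref{def:win} that $\mathcal R_{(\pos,T)}:=\cl(\beta)$ is a winning strategy, with the stated boundary function. Colours are respected ($x^i$ and $x^i_j$ carry $c_{\xnode^i}$, $y^i$ and $y^i_j$ carry $c_{\ynode^i}$), and each edge $\{x^i,y^i\}$ is sent to an edge of $RD(r)_D$: to $\{x^i_0,y^i_0\}$ when $i\in T$, to $\{x^c_u,y^c_w\}$ when $i=c\notin T$, to $\{x^d_v,y^d_v\}$ when $i=d\notin T$, and to $\{x^i_{\pos(i)},y^i_{\pos(i)}\}$ when $i\notin\{c,d\}\cup T$ (present because $\pos(i)\neq w$, as $r$ is applicable).

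For part~(iii), from $r\notin\appl(\pos)$ we get $T_r(\pos)\neq\emptyset$; I would fix some $i^*\in T_r(\pos)$ and first normalise the case analysis. By the definition of $T_r$, either $i^*=c$ with $\pos(c)\neq u$, or $i^*=d$ with $\pos(d)\neq v$, or $\pos(i^*)=w$; but since $u,v,w$ are pairwise distinct, the last alternative with $i^*=c$ forces $\pos(c)=w\neq u$ and with $i^*=d$ forces $\pos(d)=w\neq v$, so after renaming we may assume one of: (a)~$i^*=c$ and $\pos(c)\in[n]\setminus\{u\}$; (b)~$i^*=d$ and $\pos(d)\in[n]\setminus\{v\}$; (c)~$i^*\notin\{c,d\}$ and $\pos(i^*)=w$. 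In every case $x^{i^*}_{\pos(i^*)}$ has no neighbour in the $y^{i^*}$-block of $RD(r)_D$: in block $c$ the only edges to the $y$-side are $\{x^c_0,y^c_0\}$ and $\{x^c_u,y^c_w\}$, in block $d$ only $\{x^d_0,y^d_0\}$ and $\{x^d_v,y^d_v\}$, and in a block $i^*\notin\{c,d\}$ the edges to the $y$-side are $\{x^{i^*}_j,y^{i^*}_j\}$ for $j\neq w$ only, so $x^{i^*}_w$ is isolated from its $y$-block (note $\pos(i^*)\geq 1$ in every case, so the white vertex $y^{i^*}_0$ is not an option either). Hence, from the position $\{(x^i,x^i_{\pos(i)})\mid i\in[k]\}$ Spoiler plays the remaining pebble onto $y^{i^*}$: any answer $b$ of Duplicator either violates the unary colour relation $c_{\ynode^{i^*}}$ that $y^{i^*}$ satisfies, or has the form $y^{i^*}_j$ and then maps the edge $\{x^{i^*},y^{i^*}\}$ to the non-edge $\{x^{i^*}_{\pos(i^*)},y^{i^*}_j\}$; either way the position is not a partial homomorphism, so Spoiler wins in one move.

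The step I expect to be delicate is part~(iii): getting the case analysis right requires noticing that the obstruction ``$\pos(i^*)=w$'' with $i^*\in\{c,d\}$ is absorbed into the other two cases, and it relies on the deliberate asymmetry between $RD(r)$ and $RS(r)$ — in $RD(r)$ the block of an index $i\notin\{c,d\}$ has \emph{no} edge at position $w$ at all, whereas in $RS(r)$ that slot is rerouted to the white vertex $y^i_0$, which is what lets Duplicator survive there. That asymmetry is exactly what makes choosing a rule that is not applicable immediately losing for Duplicator.
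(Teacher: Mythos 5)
Your proposal is correct and follows essentially the same route as the paper's proof: for (i) and (ii) you spell out what the paper summarizes by "analog to Lemma \ref{lem:srule}(i)" and "$\beta$ defines a total homomorphism," and for (iii) you give exactly the paper's one-move argument — Spoiler pebbles $y^{i^*}$ for some $i^*\in T_r(\pos)$, and no $y^{i^*}_j$ is adjacent to $x^{i^*}_{\pos(i^*)}$. Your case analysis in (iii) is more explicit than the paper's, and the closing observation about the asymmetry between $RD(r)$ and $RS(r)$ correctly identifies why Duplicator has no white-vertex escape here.
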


\begin{proof}
  Statement (i) is analog to Lemma \ref{lem:srule}(i). The boundary function $\beta$ stated in (ii) defines a total homomorphism on $RD(r)$. Thus, $\mathcal R_{(\pos,T)}\defi \{h\mid h\subseteq \beta\}$ is a winning strategy.

  In order to win from $\{(x^i,x^i_{\pos(i)})\mid i\in [k]\}$ for a rule $r$ not applicable to $\pos$ (Statement (iii)), Spoiler chooses some KAI-pebble $i\in T_r(\pos)$ (whose position $\pos(i)$ witnesses that $r$ is not applicable to $\pos$) and puts the remaining pebble on $y^i$. Since Duplicator has to answer with a vertex $y^i_j$ of the same color and none of them is adjacent to $x^i_{\pos(i)}$, she looses immediately.
\end{proof}

\subsection{The Multiple Input One-Way Switch}

As for the rule gadgets, the multiple input one-way switch has input vertices $x^1,\ldots,x^k$ in Spoiler's graph and $x^1_0,\ldots,x^k_n$ in Duplicator's graph, and output vertices $y^1,\ldots,y^k$ and $y^1_0,\ldots,y^k_n$, respectively. We say that a position $(\pos,T)$ is on the input (output) to denote the positions $\{(x^i,x^i_{(\pos,T)(i)})\mid i\in [k]\}$ ($\{(y^i,y^i_{(\pos,T)(i)})\mid i\in [k]\}$). 

For Spoiler the switch ensures that he can reach $\pos$ on the output from $\pos$ on the input (Lemma \ref{lem:multi}(i)). 
For Duplicator there are several strategies. She has a winning strategy called \textit{output strategy}, where any position is on the output and $\boldsymbol{0}$ is on the input (Lemma \ref{lem:multi}(ii)). This ensures that Spoiler cannot move backwards and reach $\pos$ on the input from $\pos$ on the output. Next, for every nonempty $T\subseteq [k]$ Duplicator has a winning strategy where $(\pos,T)$ is on the input and $\boldsymbol{0}$ is on the output (Lemma \ref{lem:multi}(iii)). Thus, she has a strategy such that Spoiler can reach only the $\boldsymbol{0}$ position on the output from invalid positions on the input. These strategies are called \textit{restart strategies}. We will see later that Spoiler has to restart the game, that is, he has to pick up all pebbles and start playing on the initialization gadget, if he reaches a position that is contained in a restart strategy. To ensure that Spoiler picks up all pebbles when reaching $\pos$ on the output from $\pos$ on the input, Duplicator has a critical \textit{input strategy} with $\pos$ on the input and $\boldsymbol{0}$ on the output, whose critical positions are either contained in an output strategy (where $\pos$ is on the output) or in a restart strategy (Lemma \ref{lem:multi}(iv)). 
If Duplicator plays according to this input strategy, the only way for Spoiler to bring $\pos$ from the input to the output is to pebble a critical position inside the switch (using all the pebbles) and force Duplicator to switch to the corresponding output strategy.

\begin{figure*}[htp]
 \centering
  \resizebox{\textwidth}{!}{
\begin{tikzpicture}
			[scale=1, transform shape, knoten/.style={circle,draw=black,fill=black,
			inner  sep=0pt,minimum  size=1mm},kknoten/.style={circle,draw=black,fill=black,
			inner  sep=0pt,minimum  size=0.8mm}, leerknoten/.style={circle,draw=black,fill=white,
			inner  sep=0pt,minimum  size=1mm},leerkknoten/.style={circle,draw=black,fill=white,
			inner  sep=0pt,minimum  size=0.8mm}]
			
			
			
			\node at (1,-6) {Spoiler's side $M^{k,n}_S$};
			\node at (11,-6) {Duplicator's side $M^{k,n}_D$};

			
			\node[knoten] (x1) at (0,0) [label=above:{{\tiny$x^1$}}] {};
			\node[knoten] (x2) at (1,0)  {};
			\node[knoten] (x3) at (2,0) [label=above:{{\tiny$x^k$}}] {};
			
			\node[knoten] (a1) at (0,-1.5) [label=left:{{\tiny$a^1$}}] {};
			\node[knoten] (a2) at (1,-1.5) {};
			\node[knoten] (a3) at (2,-1.5) [label=right:{{\tiny$a^k$}}] {};

			\node[knoten] (b1) at (0,-3) [label=left:{{\tiny$b^1$}}] {};
			\node[knoten] (b2) at (1,-3) {};
			\node[knoten] (b3) at (2,-3) [label=right:{{\tiny$b^k$}}] {};

			\node[knoten] (y1) at (0,-4.5) [label=below:{{\tiny$y^1$}}] {};
			\node[knoten] (y2) at (1,-4.5) {};
			\node[knoten] (y3) at (2,-4.5) [label=below:{{\tiny$y^k$}}] {};

			\foreach \i in {1,2,3}{			
			\draw[-] (x\i) -- (a\i);
			\draw[-] (b\i) -- (y\i);
			}
			
			\foreach \i in {1,2,3} {\foreach \j in {1,2,3}{			
			\draw[-] (a\i) -- (b\j);
			}}
			
			\foreach \i/\j in {1/2,2/3}{			
			\draw[-] (a\i) -- (a\j);
			\draw[-] (b\i) -- (b\j);
			}
			
			\draw[-] (a1) .. controls +(1,-0.5) .. (a3);
			\draw[-] (b1) .. controls +(1,0.5) .. (b3);

			\begin{scope}[xshift=5cm]
			
			\foreach \weite/\tiefe/\schrift/\ttiefe/\kante in {0.3/1.5/0.5/3/very thin} {
			
			\node[leerknoten] (x10) at (0,0) [label=above:{{\tiny$x^1_0$}}] {};
			\node[leerknoten] (x20) at (5.5,0) [label=above:{{\tiny$x^2_0$}}] {};
			\node[leerknoten] (x30) at (11,0) [label=above:{{\tiny$x^k_0$}}] {};
			
			\node[knoten] (x11) at (1*1,0)[label=above:{\tiny$x^1_1$}] {};		
			\node[knoten] (x12) at (2*1,0)[label=above:{\tiny$x^1_2$}] {};	
			\node[knoten] (x13) at (3*1,0) {};	
			\node[knoten] (x14) at (4*1,0)[label=above:{\tiny$x^1_n$}] {};		
			
			\node[knoten] (x21) at (5.5+1,0)[label=above:{\tiny$x^2_1$}] {};		
			\node[knoten] (x22) at (5.5+2,0)[label=above:{\tiny$x^2_2$}] {};	
			\node[knoten] (x23) at (5.5+3,0) {};	
			\node[knoten] (x24) at (5.5+4,0)[label=above:{\tiny$x^2_n$}] {};	
			
			\node[knoten] (x31) at (11+1*0.2,0) {};		
			\node[knoten] (x32) at (11+2*0.2,0) {};	
			\node[knoten] (x33) at (11+3*0.2,0) {};	
			\node[knoten] (x34) at (11+4*0.2,0)[label=above:{\tiny$x^k_n$}] {};

			\node[kknoten] (a111) at (1-\weite,-\tiefe){};
			\node[kknoten] (a112) at (1,-\tiefe) {};
			\node[kknoten] (a113) at (1+\weite,-\tiefe) {};

			\node[kknoten] (a121) at (2-\weite,-\tiefe) {};
			\node[kknoten] (a122) at (2,-\tiefe) {};
			\node[kknoten] (a123) at (2+\weite,-\tiefe) {};

			\node[kknoten] (a131) at (3-\weite,-\tiefe) {};
			\node[kknoten] (a132) at (3,-\tiefe) {};
			\node[kknoten] (a133) at (3+\weite,-\tiefe) {};

			\node[kknoten] (a141) at (4-\weite,-\tiefe) {};
			\node[kknoten] (a142) at (4,-\tiefe) {};
			\node[kknoten] (a143) at (4+\weite,-\tiefe) {};

			\node[kknoten] (a211) at (5.5+1-\weite,-\tiefe) {};
			\node[kknoten] (a212) at (5.5+1,-\tiefe) {};
			\node[kknoten] (a213) at (5.5+1+\weite,-\tiefe) {};

			\node[kknoten] (a221) at (5.5+2-\weite,-\tiefe) {};
			\node[kknoten] (a222) at (5.5+2,-\tiefe) {};
			\node[kknoten] (a223) at (5.5+2+\weite,-\tiefe) {};

			\node[kknoten] (a231) at (5.5+3-\weite,-\tiefe) {};
			\node[kknoten] (a232) at (5.5+3,-\tiefe) {};
			\node[kknoten] (a233) at (5.5+3+\weite,-\tiefe) {};

			\node[kknoten] (a241) at (5.5+4-\weite,-\tiefe) {};
			\node[kknoten] (a242) at (5.5+4,-\tiefe) {};
			\node[kknoten] (a243) at (5.5+4+\weite,-\tiefe) {};

			\node[leerkknoten] (a10) at (0,-\tiefe) {};
			\node[leerkknoten] (a20) at (5.5,-\tiefe) {};
			
			\foreach \l in {1,2} { \foreach \n in {1,2,3,4} { \foreach \k in {1,2,3} {
			  \draw[-,\kante] (x\l\n) -- (a\l\n\k);
			}}}
			
			\draw[-,\kante] (x10) -- (a10);
			\draw[-,\kante] (x20) -- (a20);

			\node[leerkknoten] (b101) at (0-\weite,-\ttiefe) {};
			\node[leerkknoten] (b102) at (0,-\ttiefe) {};
			\node[leerkknoten] (b103) at (0+\weite,-\ttiefe) {};

			\node[kknoten] (b11) at (1,-\ttiefe) {};
			\node[kknoten] (b12) at (2,-\ttiefe) {};
			\node[kknoten] (b13) at (3,-\ttiefe) {};
			\node[kknoten] (b14) at (4,-\ttiefe) {};

			\node[leerkknoten] (b201) at (5.5+0-\weite,-\ttiefe) {};
			\node[leerkknoten] (b202) at (5.5+0,-\ttiefe) {};
			\node[leerkknoten] (b203) at (5.5+0+\weite,-\ttiefe) {};

			\node[kknoten] (b21) at (5.5+1,-\ttiefe) {};
			\node[kknoten] (b22) at (5.5+2,-\ttiefe) {};
			\node[kknoten] (b23) at (5.5+3,-\ttiefe) {};
			\node[kknoten] (b24) at (5.5+4,-\ttiefe) {};

			\foreach \l in {1,2} { \foreach \n in {1,2,3,4} { \foreach \k in {1,2,3} {
			  \draw[-,\kante] (b\l\n) -- (a\l\n\k);
			}}}
			
			\draw[-,\kante] (b101) -- (a10);
			\draw[-,\kante] (b201) -- (a20);
			\draw[-,\kante] (b102) -- (a10);
			\draw[-,\kante] (b202) -- (a20);
			\draw[-,\kante] (b103) -- (a10);
			\draw[-,\kante] (b203) -- (a20);

			\begin{scope}[yshift=-4.5cm]
			
			\node[leerknoten] (y10) at (0,0) [label=below:{{\tiny$y^1_0$}}] {};
			\node[leerknoten] (y20) at (5.5,0) [label=below:{{\tiny$y^2_0$}}] {};
			\node[leerknoten] (y30) at (11,0) [label=below:{{\tiny$y^k_0$}}] {};
			
			\node[knoten] (y11) at (1*1,0)[label=below:{\tiny$y^1_1$}] {};		
			\node[knoten] (y12) at (2*1,0)[label=below:{\tiny$y^1_2$}] {};	
			\node[knoten] (y13) at (3*1,0) {};	
			\node[knoten] (y14) at (4*1,0)[label=below:{\tiny$y^1_n$}] {};		
			
			\node[knoten] (y21) at (5.5+1,0)[label=below:{\tiny$y^2_1$}] {};		
			\node[knoten] (y22) at (5.5+2,0)[label=below:{\tiny$y^2_2$}] {};	
			\node[knoten] (y23) at (5.5+3,0) {};	
			\node[knoten] (y24) at (5.5+4,0)[label=below:{\tiny$y^2_n$}] {};	
			
			\node[knoten] (y31) at (11+1*0.2,0) {};		
			\node[knoten] (y32) at (11+2*0.2,0) {};	
			\node[knoten] (y33) at (11+3*0.2,0) {};	
			\node[knoten] (y34) at (11+4*0.2,0)[label=below:{\tiny$y^k_n$}] {};	

			\end{scope}

			\foreach \l in {1,2} {  \foreach \n in {1,2,3,4} {
			  \draw[-,\kante] (b\l\n) -- (y\l\n);
			}}

			\foreach \l in {1,2} {  \foreach \k in {1,2,3} {
			  \draw[-,\kante] (b\l0\k) -- (y\l0);
			}}

			\foreach \l in {1,2} {  \foreach \n in {1,2,3,4} { \foreach \k in {1,2,3} {
			  \draw[-,\kante] (a\l\n\k) -- (x\l0);
			}}}

			\foreach \l in {1,2} {  \foreach \n in {1,2,3,4} { 
			  \draw[-,\kante] (a\l0) -- (b\l\n);
			}}

			\foreach \l in {1,2} {  \foreach \n in {1,2,3,4} { \foreach \k in {1,2,3} {
			  \draw[-,\kante] (b\l0\k) -- (y\l\n);
			}}}

			\foreach  \l in {1} { \foreach \n in {1,2,3,4} { \foreach \k/\kk in {1/2,2/1,1/3,3/1,2/3,3/2} {
			  \draw[-,\kante] (b\l0\k) -- (a\l\n\kk);
			}}}

			
			\node at (1-\weite-0.18,-\tiefe) {\scalebox{\schrift}{$a^1_{1,1}$}};
			\node at (4+\weite+0.35,-\tiefe) {\tiny{$a^1_{n,k}$}};
			\node at (5.5+1-\weite-0.18,-\tiefe) {\scalebox{\schrift}{$a^2_{1,1}$}};
			\node at (5.5+4+\weite+0.35,-\tiefe) {\tiny{$a^2_{n,k}$}};
			\node at (0-0.20,-\tiefe) {\tiny{$a^1_{0}$}};
			\node at (5.5-0.20,-\tiefe) {\tiny{$a^2_{0}$}};
			\node at (0-\weite-0.25,-\ttiefe) {\tiny{$b^1_{0,1}$}};
			\node at (1+0.2,-\ttiefe) {\scalebox{\schrift}{$b^1_{1}$}};
			\node at (4+0.3,-\ttiefe) {\tiny{$b^1_{n}$}};
			\node at (5.5+0-\weite-0.25,-\ttiefe) {\tiny{$b^2_{0,1}$}};
			\node at (5.5+0+\weite+0.18,-\ttiefe+0.1) {\scalebox{\schrift}{$b^2_{0,k}$}};
			\node at (5.5+1+0.2,-\ttiefe) {\scalebox{\schrift}{$b^2_{1}$}};
			\node at (5.5+4+0.3,-\ttiefe) {\tiny{$b^2_{n}$}};

 			}
			\end{scope}
\end{tikzpicture}
 }
 \caption{Subgraph of the multiple input one-way switch $M^{k,n}$. For the first block in Duplicator's side, all inner-block edges are drawn. Note that there is no edge between $a^i_{s,l}$ and $b^i_{0,l}$.}\label{fig:multi}
\end{figure*}
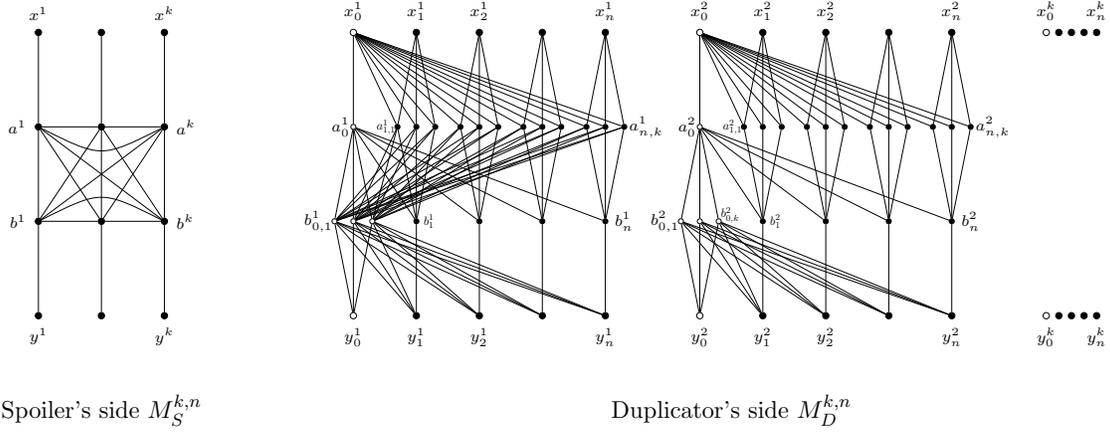

In order to define the \textit{multiple input one-way switch} $M^{k,n}$ we construct the two graphs $M_S^{k,n}$ for Spoiler's side and $M_D^{k,n}$ for Duplicator's side.
Let
\begin{align*}
  V(M^{k,n}_S) &= \{x^i,a^i,b^i,y^i\mid i\in [k]\}, \\
  E(M^{k,n}_S) &= \big\{\{x^i,a^i\},\{b^i,y^i\}\mid i\in[k]\big\} \\
  &\qquad\cup \big\{\{a^{i},a^{j}\},\{b^{i},b^{j}\}\mid i,j\in[k]; i\neq j\big\} \\
  &\qquad\cup \big\{\{a^{i},b^{j}\}\mid i,j\in[k]\big\}.
\end{align*}
That is, $M^{k,n}_S$ simply consists of $k$ input vertices $x^{i}$ and $k$ output-vertices $y^{i}$, which are each connected to one vertex of a $2k$-clique.
For Duplicator's side of the graph, we define for $i\in [k]$:
\begin{align*}
 X^i &= \{x^i_s\mid 0\leq s\leq n\},& Y^i &= \{y^i_s\mid 0\leq s\leq n\}\\
 A^i_+ &= \{a^i_{s,l}\mid s\in [n],l\in [k]\}, 
 &A^i &= A^i_+ \cup \{a^i_0\} \\
 B^i_+ &= \{b^i_s \mid s \in [n]\},
 &B^i &= B^i_+ \cup \{b^i_{0,l}\mid l\in [k]\}. 
\end{align*}
 The set of vertices of $M_D^{k,n}$ is 
$$
V(M_D^{k,n}) = \bigcup_{i\in [k]}\left(X^i\cup A^i \cup B^i \cup Y^i\right).
$$
The graph consists of $k$ blocks, where the $i$-th block contains the vertices $X^i\cup A^i \cup B^i \cup Y^i$. For every $i\in[k]$ we color $\{x^i\}\cup X^i$ as well as $\{a^i\}\cup A^i$, $\{b^i\}\cup B^i$ and $\{y^i\}\cup Y^i$ with a unique color. We first define the inner-block edges $E^i$, which are also shown in Figure \ref{fig:multi}, and then the inter-block edges $E^{i,j}$ (for notational convenience we always assume $l,p\in[k]$ and $s,q\in[n]$):
\begin{align*}
 E^i &=  &&\big(\{x^i_0\}\times A^i\big) \cup 
  &&\big\{ \{x^i_s, a^i_{s,l} \}\big\} \cup
  &&\big(\{a_0^i\}\times B^i\big) \cup 
  &&\text{(E1)-(E3)} \\
  &&&\big\{ \{a^i_{s,l}, b^i_{s}\}\big\} \cup
  &&\big\{\{a^i_{s,l},b^i_{0,p}\}\mid l\neq p\big\} \cup
  &&\big\{ \{b^i_s,y^i_s\}\big\} \cup 
  &&\text{(E4)-(E6)} \\
  &&&\big(\{b^i_{0,l}\mid l\in[k]\}\times Y^i\big), 
  &&&&&&\text{(E7)} \\
E^{i,j} &= &&\big\{\{a^i_{s,l},a^j_{q,p}\}\mid l\neq p \big\} \cup
  &&\big\{\{a^i_{0},a^j_{s,l}\} \big\}\cup
  &&\big\{\{b^i_{0,l},b^j_{0,p}\} \big\}\cup
  &&\text{(E8)-(E10)} \\
  &&&\big\{\{b^i_{s},b^j_{q}\}\big\}\cup
  &&\big\{\{a^i_{0},b^j_{0,l}\}\big\}\cup
  &&\big\{\{a^i_{0},b^j_{s}\}  \big\}\cup
  &&\text{(E11)-(E13)} \\
  &&&\big\{\{a^i_{s,l},b^j_{q}\}   \big\}\cup
  &&\big\{\{a^i_{s,l},b^j_{0,p}\}\mid l\neq p\big\}.
  &&&&\text{(E14)-(E15)}
\end{align*}
Finally, $E(M^{k,n}_D) = \bigcup_{i\in[k]}E^i\cup \bigcup_{i,j\in[k];i\neq j}E^{i,j}$.
Our switch is a further development of the switch used by Kolaitis and Panttaja \cite{Kolaitis.2003}. In fact, their
multiple input one-way switch $M^{k+1}$ is isomorphic to $M^{k,2}$. 
%
The next lemma states the main properties of the switch.
\begin{lemma} \label{lem:multi}
 For every position $\pos\colon[k]\to[n]$, the following statements hold in the existential ($k+1$)-pebble game on $M^{k,n}$:
\begin{enumerate}[leftmargin=2.2 em,label=(\roman*)]
 \item Spoiler can reach $\{(y^{1},y^1_{\pos(1)}),\ldots,(y^{k},y^k_{\pos(k)})\}$ from $\{(x^{1},x^1_{\pos(1)}),\ldots,(x^{k},x^k_{\pos(k)})\}$.
 \item Duplicator has a winning strategy $\mathcal H^\text{out}_{(\pos,T)}$ with boundary function $\{(x^i,x^i_0)\mid i\in [k]\}$ and 
 $\{(y^i,y^i_{(\pos,T)(i)})\mid i\in [k]\}$ for all $T\subseteq [k]$.
 \item Duplicator has a winning strategy $\mathcal H^\text{restart}_{(\pos,T)}$ with boundary function $\{(x^i,x^i_{(\pos,T)(i)})\mid i\in [k]\}$ and $\{(y^i,y^i_0)\mid i\in [k]\}$ for all nonempty $T\subseteq [k]$.
 \item Duplicator has a critical strategy $\mathcal H^\text{in}_{\pos}$ with boundary function $\{(x^i,x^i_{\pos(i)})\mid i\in [k]\}\cup \{(y^i,y^i_0)\mid i\in [k]\}$ and sets of critical positions $\mathcal C^\text{restart-crit}_{\pos,t}$ (for $t\in [k]$) and $\mathcal C^\text{out-crit}_{\pos}$ such that:
  \begin{enumerate}[label=\({alph*}]
    \item $\crit(\mathcal H^\text{in}_{\pos}) = \bigcup_{t\in[k]} \mathcal C^\text{restart-crit}_{\pos,t}\cup \mathcal C^\text{out-crit}_{\pos}$,
    \item $\mathcal C^\text{restart-crit}_{\pos,t} \subseteq \mathcal H^\text{restart}_{(\pos,\{t\})}$ and
    \item $\mathcal C^\text{out-crit}_{\pos} \subseteq \mathcal H^\text{out}_{\pos}$.
  \end{enumerate}
\end{enumerate}
\end{lemma}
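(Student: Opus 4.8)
The plan is to prove (i) by exhibiting an explicit line of play for Spoiler, and (ii)--(iv) by writing down, for each prescribed boundary function, a family of partial homomorphisms of $M^{k,n}_S$ into $M^{k,n}_D$ and checking the closure and extension conditions of Definition~\ref{def:win} (in case (iv), the conditions for a critical strategy). Every verification rests on a short list of structural features of $M^{k,n}_D$: the vertices $x^i_0$ and $y^i_0$ are universal towards the adjacent layer (edges (E1), (E7)); the vertex $a^i_0$ is a wildcard inside block $i$ and towards every other block (edges (E3),(E9),(E12),(E13)), but $a^i_0\not\sim a^j_0$, so a clique of $M^{k,n}_D$ carries at most one such wildcard; an escape vertex $b^i_{0,l}$ is never adjacent to a non-escape vertex $b^j_s$, so in any clique the pebbled $b$-vertices are \emph{all} escape or \emph{all} non-escape; and an edge between $a^i_{s,l}$ and an escape vertex $b^j_{0,p}$ exists only if $l\neq p$ (edges (E5),(E15)). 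This last fact, together with the pigeonhole principle, is what drives the whole lemma.

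For (i), Spoiler moves a position through the switch in three phases, never being offered a real choice in return. Using the single spare pebble he first brings $a^1,\dots,a^k$ down one after another, removing the pebble from $x^i$ after placing $a^i$; each $a^i$ must land on a vertex adjacent to $x^i_{\pos(i)}$, hence on some $a^i_{\pos(i),l_i}$, and since $a^i\sim a^j$ in $M^{k,n}_S$ the constraint (E8) forces $l_1,\dots,l_k$ to be pairwise distinct, i.e.\ a permutation of $[k]$. In the second phase he brings $b^1,\dots,b^k$ down: an escape answer $b^i_{0,p}$ would need $p\neq l_j$ for \emph{every} $j$ by (E5) and (E15), which is impossible, and a non-escape answer must match the first index by (E4), so Duplicator is forced onto $b^i_{\pos(i)}$. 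In the third phase he brings $y^1,\dots,y^k$ down: the only neighbour of the non-escape vertex $b^i_{\pos(i)}$ of colour $c_{y^i}$ is $y^i_{\pos(i)}$ by (E6). Hence the position on the output is exactly $\pos$, and Duplicator's only freedom along the way was the choice of the irrelevant permutation $l_1,\dots,l_k$.

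Statements (ii)--(iv) are three ``frame strategies'' obtained by pinning the boundary and imposing a second-index invariant on the internal vertices. For the output strategy $\mathcal H^{\text{out}}_{(\pos,T)}$ I would take all partial homomorphisms $h$ with $h(x^i)=x^i_0$, $h(y^i)=y^i_{(\pos,T)(i)}$, every pebbled $b^i$ on an escape vertex, at most one pebbled $a^i$ on its wildcard, and the second indices of the non-wildcard $a$-vertices pairwise distinct and disjoint from those of the $b$-vertices; since $x^i_0$ is universal the $x$-side imposes nothing, and in the extension step $|\dom g|\le k$ always leaves a spare index for a new $b$- or $a$-vertex, which yields the extension property. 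The restart strategy $\mathcal H^{\text{restart}}_{(\pos,T)}$ (for $T\neq\emptyset$) is the mirror image: $h(y^i)=y^i_0$ forces every pebbled $b^i$ onto an escape vertex, $h(x^i)=x^i_{(\pos,T)(i)}$ forces $a^i$ off its wildcard and onto first index $\pos(i)$ for each $i\notin T$, and the forced-escape $b$-vertices still need a spare second index --- exactly what is gained by being allowed to place the unique wildcard $a^i_0$ on some index $i\in T$; this is where the hypothesis $T\neq\emptyset$ is used. In both cases one checks that the listed maps really are partial homomorphisms by running through (E1)--(E15), and closure under subsets is immediate.

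The input strategy $\mathcal H^{\text{in}}_{\pos}$ is the restart frame with $T=\emptyset$: $h(x^i)=x^i_{\pos(i)}$, $h(y^i)=y^i_0$, pebbled $b$-vertices escape, pebbled $a$-vertices on $a^i_{\pos(i),l_i}$ with the $l_i$ pairwise distinct and disjoint from the escape indices of the $b$'s --- and no wildcard, since no index is now free to host one. This family no longer has the extension property: for $h$ in it, extension by a vertex $a^t$ or $b^j$ fails exactly when the second indices already used exhaust $[k]$, and one checks that this happens precisely for $h$ with $\dom h\subseteq\{a^i,b^i\mid i\in[k]\}$ (a pebble on some $x^i$ or $y^i$ would leave enough room, so none can occur) and $|\dom h|=k$, of two shapes: either $\dom h=\{a^1,\dots,a^k\}$ carrying a permutation, whence $h\in\mathcal H^{\text{out}}_{\pos}$ (Spoiler has pushed the position through, and the output strategy now stops him from going back), or the $a$-vertices and escape $b$-vertices use second indices that partition $[k]$ while some $a^t$ is unpebbled, whence $h\in\mathcal H^{\text{restart}}_{(\pos,\{t\})}$ (Spoiler is restarted). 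Putting the configurations of the first kind into $\mathcal C^{\text{out-crit}}_{\pos}$ and those of the second kind into $\mathcal C^{\text{restart-crit}}_{\pos,t}$ establishes (a)--(c). The step I expect to be the main obstacle is exactly this analysis: fixing the invariants so that each of the three families is simultaneously closed under subsets, consists of genuine partial homomorphisms, and is extendable off the critical set, and --- for (iv) --- proving that the set of critical positions is \emph{exactly} $\bigcup_{t\in[k]}\mathcal C^{\text{restart-crit}}_{\pos,t}\cup\mathcal C^{\text{out-crit}}_{\pos}$, so that no critical configuration is overlooked (which would destroy ``Duplicator wins off the critical positions'') and none is wrongly admitted (which would destroy closure). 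The delicate points throughout are the single wildcard $a^i_0$ and the forced dichotomy ``pebbled $b$-vertices are all escape or all non-escape'', both governed by the $l\neq p$ condition in (E5) and (E15).
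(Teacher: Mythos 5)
Your strategy for part (i) is the paper's, and your structural observations about $M^{k,n}_D$ (wildcard $a^i_0$ adjacent to everything except $a^j_0$; the escape/non-escape dichotomy on the $b$-layer; the $l\neq p$ condition in (E5)/(E15)) are the right facts to isolate. But the ``frame with invariants'' approach you use for (ii)--(iv) has two concrete gaps, both traceable to the fact that you describe each strategy as the set of \emph{all} partial homomorphisms satisfying a few constraints, rather than --- as the paper does --- the downward closure of an explicit small family of (near-)total homomorphisms, for which the extension property is free.

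First, your output strategy requires every pebbled $b^i$ to lie on an escape vertex. Take $g=\{(a^i,a^i_{\pos(i),\sigma(i)}) \mid i\in[k]\}$ for a permutation $\sigma$: this has $|\dom g|=k$, no wildcard, all $k$ second indices in use, and it lies in your family (no $b$'s or $x$'s pebbled, so those clauses are vacuous). Now try to extend by $z=b^j$. An escape answer $b^j_{0,p}$ needs $p\neq\sigma(i)$ for every $i$ by (E5)/(E15), which is impossible; and a non-escape answer is forbidden by your invariant. So your $\mathcal H^{\text{out}}_{\pos}$ is not a winning strategy, and since $g$ is exactly the output-critical position $h^{\text{out-crit}}_{\pos,\sigma}$ you later place in $\mathcal C^{\text{out-crit}}_{\pos}$, item (iv)(c) collapses with it. The paper repairs exactly this by including in $\mathcal H^{\text{out}}_{\pos}$ (for $T=\emptyset$) the total homomorphisms $h^{\text{out}}_{\pos,\sigma}$ that send every $b^i$ to the non-escape vertex $b^i_{\pos(i)}$; then extending $g$ by $b^j$ yields $b^j_{\pos(j)}$.

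Second, your input strategy allows pebbled escape $b$-vertices to carry \emph{different} second indices, while the paper's $h^{\text{in}}_{\pos,\sigma,l}$ forces all of them to the common index $l$. Your looser invariant admits, for example, a position $g$ consisting of $k-2$ non-wildcard $a$-vertices together with two escape $b$-vertices $b^{j_1}_{0,p_1},b^{j_2}_{0,p_2}$ with $p_1\neq p_2$, the $k$ second indices partitioning $[k]$. This $g$ is a partial homomorphism and it is critical (no spare index for the two unpebbled $a$-vertices), so (iv)(b) requires it to lie in some $\mathcal H^{\text{restart}}_{(\pos,\{t\})}$. But with only the single wildcard $a^t_0$ available, any extension of $g$ to a total homomorphism with that boundary needs a fresh second index for the \emph{other} unpebbled $a$-vertex, and none exists. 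Hence $g$ is not in the closure of total homomorphisms with that boundary, so (iv)(b) fails. The paper avoids this by construction: the only restart-critical positions it ever generates have exactly $k-1$ $a$-vertices and a single $b$-vertex, all escape $b$'s sharing one index, so the missing $a$ can always be completed with the wildcard.

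In short, the invariants you pin down are neither tight enough (for (iv)) nor loose enough (for (ii)); the paper sidesteps both issues by defining each of $\mathcal H^{\text{out}}$, $\mathcal H^{\text{restart}}$, $\mathcal H^{\text{in}}$ as the downward closure of an explicit list of total or near-total homomorphisms and then only having to check which of those are compatible with which boundary.
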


\begin{IEEEproof}
Let $\pos\colon[k]\to[n]$ be an arbitrary position. We first construct the strategy for Spoiler to prove (i). Starting from position $\{(x^{1},x^1_{\pos(1)}),\ldots,(x^{k},x^k_{\pos(k)})\}$, Spoiler places the ($k+1$)st pebble on $a^1$. Duplicator has to answer with $a^1_{\pos(1),l_1}$ for some $l_1\in[k]$, mapping the edge $\{x^1,a^1\}$ to some edge in (E2). Next, Spoiler picks up the pebble from $x^1$ and puts it on $a^2$. Again, Duplicator has to answer with $a^2_{\pos(2),l_2}$ for some $l_2\in [k]\setminus\{l_1\}$. The index $l_2$ has to be different from $l_1$ because there is an edge between $a_1$ and $a_2$, but none between $a^1_{\pos(1),l_1}$ and $a^2_{\pos(2),l_1}$ in (E8). Following that scheme, Spoiler can reach the position $\{(a^{1},a^1_{\pos(1),l_1}),\ldots,(a^{k},a^k_{\pos(k),l_k})\}$ for pairwise distinct $l_1, l_2, \cdots, l_k$. Now, Spoiler pebbles $b^1$ with the free pebble and Duplicator has to answer with a vertex in $B^1$ (due to the vertex-colors) that is adjacent to all $a^1_{\pos(1),l_1},\ldots,a^k_{\pos(k),l_k}$. This is only the case for $b^1_{\pos(1)}$ (due to (E4) and (E14)), since every vertex of the form $b^1_{0,l_i}$ is not adjacent to the vertex $a^i_{\pos(i),l_i}$ according to (E5) and (E15). In the next step Spoiler picks up the pebble from $a^1_{\pos(1),l_1}$ and puts it on $b^2$. Duplicator has to answer with $b^2_{\pos(2)}$, mapping the edge $\{b^1,b^2\}$ to (E11). Because every vertex of the form $b^2_{0,l}$ is not connected to $b^1_{\pos(1)}$, this is the only choice for Duplicator. Thus, Spoiler can reach $\{(b^{1},b^1_{\pos(1)}),\ldots,(b^{k},b^k_{\pos(k)})\}$ and from there he reaches $\{(y^{1},y^1_{\pos(1)}),\ldots,(y^{k},y^k_{\pos(k)})\}$ with the same technique.

In order to derive the winning strategies for Duplicator in (ii) and (iii) we consider several total homomorphisms from Spoiler's to Duplicator's side. Consider the edges (E1), (E3) and (E7) connecting \blacknode vertices with  \whitenode vertices in one block of Duplicator's side. They can be used by Duplicator to pebble a \whitenode vertex when Spoiler moves upwards. This is the crucial ingredient for Duplicator's output strategies (ii).
To formalizes this let $H^{\text{out}}_{\pos}$ denote the set of total homomorphisms $h^\text{out}_{\pos,\sigma}\! =\! 
\{(x^i, x^i_0), 
(a^i, a^i_{\pos(i),\sigma(i)}), $ $
(b^i, b^i_{\pos(i)}), 
(y^i, y^i_{\pos(i)}) \mid i\!\in\![k]\}$,
where $\sigma \in S_k$ is some permutation on $[k]$. 
Furthermore, let $h^{\text{out}}_{(\pos,T)} = \{
(x^i, x^i_0), 
(a^i, a^i_{0}), $ $
(b^i, b^i_{0,i}),
(y^i, y^i_{(\pos,T)(i)})\mid i\in[k] \}
$.
Recall that $\cl(h)\defi \{g\mid g\subseteq h\}$. 
Since $h^{\text{out}}_{(\pos,T)}$ and all $h^\text{out}_{\pos,\sigma}\in H^{\text{out}}_{\pos}$ are total, 
\begin{align*}
  \mathcal H^\text{out}_{(\pos,T)} \defi \begin{cases}
                                         \cl(h^{\text{out}}_{(\pos,T)})\text{, } & T\neq \emptyset, \\
              \cl(\{h^{\text{out}}_{(\pos,\emptyset)}\}\cup H^{\text{out}}_{\pos}), & \text{else,}
                                        \end{cases}
\end{align*}
is a winning strategy for Duplicator satisfying (ii).

If a homomorphism maps all the $a^i$ vertices to $A^i_+$, then it has to map all $b^i$ vertices to $B^i_+$. This is due to the missing edges in (E5), (E15) and has also been used in Spoiler's strategy above. On the other hand, if at least one $a^i$ is mapped to $a^i_0$, then every $b^i$ can be mapped to $b^i_{0,l}$, where $l$ is chosen such that $a^j_{\pos(j),l}$ is not in the image of the homomorphism for every $j$. Duplicator benefits from this, because she can now map the $y^i$ vertices arbitrarily using the edges (E7). This behavior is used in the following restart strategies. Note that a homomorphism mapping some $a^i$ to $a^i_0$ also maps $x^i$ to $x^i_0$, hence restart strategies require invalid input positions.
For all nonempty $T\subseteq[k]$, let $\mathcal H^{\text{restart}}_{(\pos,T)} \defi \cl(H^{\text{restart}}_{(\pos,T)})$, where $H^{\text{restart}}_{(\pos,T)}$ is the set of total homomorphisms $h$ satisfying the constraints $h(x^i) = x^i_{(\pos,T)(i)}$ and $h(y^i) = y^i_0$.
This set clearly satisfies (iii). As an example let $g \in H^{\text{restart}}_{(\pos,\{t\})}$ be the following homomorphism:
\begin{align*}
g(x^i) &= x^i_{(\pos,\{t\})(i)},&g(b^i) &= b^i_{0,t},\\
g(a^t) &= a^t_0, &g(y^i) &= y^i_0,\\
g(a^i) &= a^i_{\pos(i),i}\text{, }i\neq t. 
\end{align*}
It remains to consider the critical input strategies (iv). They formalize the following behavior of Duplicator at the time when Spoiler wants to pebble a position $\pos$ through the switch as in (i). If Spoiler pebbles $a^i$ or $b^i$, Duplicator answers within $A^i_+$ or $B^i\setminus B^i_+$, respectively. This allows her to answer on the boundary according to the boundary function defined in (iv). However, she may run into trouble when Spoiler places $k$ pebbles on $a^i$ and $b^i$ vertices, because they extend to a $(k+1)$-clique on Spoiler's side, but not on Duplicator's side (on the blocks $A^i_+$ and $B^i\setminus B^i_+$). These positions form the critical positions where Duplicator switches to an output or restart strategy. If all $k$ pebbles are on $a_1,\ldots,a_k$, as in Spoiler's strategy (i), then Duplicator switches to the output strategy $\mathcal H^{\text{out}}_{\pos}$. In all other cases she switches to a restart strategy.
For all $l\in[k]$ and permutations $\sigma$ on $[k]$ we define partial homomorphisms:
\begin{align*}
h^{\text{in}}_{\pos,\sigma}(x^i) &= x^i_{\pos(i)} &
h^{\text{in}}_{\pos,\sigma,l}(x^i) &= x^i_{\pos(i)}    \\
h^{\text{in}}_{\pos,\sigma}(a^i) &=  a^i_{\pos(i),\sigma(i)} &
h^{\text{in}}_{\pos,\sigma,l}(a^i) &= a^i_{\pos(i),\sigma(i)}\text{, } i\neq\sigma^{-1}(l)\\
 h^{\text{in}}_{\pos,\sigma}(b^i) &= \text{undefined}&
h^{\text{in}}_{\pos,\sigma,l}(b^i) &= b^i_{0,l} \\
h^{\text{in}}_{\pos,\sigma}(y^i) &= y^i_{0}&
h^{\text{in}}_{\pos,\sigma,l}(y^i) &= y^i_{0}   
\end{align*}
It is easy to check, that $h^{\text{in}}_{\pos,\sigma}$ defines a homomorphism from $M^{k,n}_S\setminus B$ to $M^{k,n}_D$, and $h^{\text{in}}_{\pos,\sigma,l}$ defines a homomorphism from $M^{k,n}_S\setminus\{a^{\sigma^{-1}(l)}\}$ to $M^{k,n}_D$. 
For all $\sigma\in S_k$ 
$$
h^{\text{out-crit}}_{\pos,\sigma} \defi  \{(a^i,a^i_{\pos(i),\sigma(i)})\mid i\in [k]\}
$$
and for all $\sigma\in S_k$ and $j,t\in [k]$
$$
 h^{\text{restart-crit}}_{\pos,\sigma,j,t} \defi \{(a^i, a^i_{\pos(i),\sigma(i)})\mid i\in[k]\setminus \{t\}\}\cup\{(b^j,b^j_{0,\sigma(t)})\}.
$$
Now we can define the sets used in (iv):
\begin{align*}
 \mathcal H^\text{in}_{\pos} &= \cl(\{h^{\text{in}}_{\pos,\sigma}\mid \sigma\in S_k\}\cup\{h^{\text{in}}_{\pos,\sigma,l}\mid \sigma\in S_k, l\in [k]\}), \\
\mathcal C^\text{out-crit}_{\pos} &= \{h^{\text{out-crit}}_{\pos,\sigma}\mid \sigma\in S_k\}, \\
\mathcal C^\text{restart-crit}_{\pos,t} &= \{h^{\text{restart-crit}}_{\pos,\sigma,j,t}\mid \sigma\in S_k, j\in [k]\},\\
\crit(\mathcal H^\text{in}_{\pos}) &= \bigcup_{t\in[k]} \mathcal C^\text{restart-crit}_{\pos,t}\cup \mathcal C^\text{out-crit}_{\pos}.
\end{align*}
First note that $h^{\text{out-crit}}_{\pos,\sigma} \subset h^{\text{in}}_{\pos,\sigma}$ and $h^{\text{restart-crit}}_{\pos,\sigma,j,t}\subset h^{\text{in}}_{\pos,\sigma,\sigma(t)}$. Thus, $\crit(\mathcal H^\text{in}_{\pos})\subseteq \mathcal H^\text{in}_{\pos}$.
It easily follows from the definitions, that $h^{\text{out-crit}}_{\pos,\sigma}\subset h^{\text{out}}_{\pos,\sigma}$. Furthermore, every $h^{\text{restart-crit}}_{\pos,\sigma,l,t}$ can be extended to a homomorphism in $\mathcal H^\text{restart}_{(\pos,\{t\})}$ by defining the boundary as required and mapping $a^t$ to $a^t_0$ and $b^j$ to $b^j_{0,\sigma(t)}$ for all $j\in [k]$. This proves statement b) and c) from (iv). It remains to show that $\mathcal H^\text{in}_{\pos}$ is a critical strategy with critical positions $\crit(\mathcal H^\text{in}_{\pos})$.
\begin{claim}
 For all $g\in \mathcal H^\text{in}_{\pos}$ with $|\dom(g)|\leq k$, either $g\in \crit(\mathcal H^\text{in}_{\pos})$ or for all $z\in V(M^{k,n}_S)$ there exist an $h\in \mathcal H^\text{in}_{\pos}$, such that $g\subseteq h$ and $z\in\dom(h)$.
\end{claim}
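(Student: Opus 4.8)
The plan is to exploit that, by definition, $\mathcal H^\text{in}_{\pos}=\cl\big(\{h^{\text{in}}_{\pos,\sigma}\mid\sigma\in S_k\}\cup\{h^{\text{in}}_{\pos,\sigma,l}\mid\sigma\in S_k,\ l\in[k]\}\big)$, so $g\in\mathcal H^\text{in}_{\pos}$ holds exactly when $g$ is a restriction of one of these \emph{generators}, and conversely every restriction of a generator lies in $\mathcal H^\text{in}_{\pos}$. Fix $g$ with $|\dom(g)|\le k$, assume $g\notin\crit(\mathcal H^\text{in}_{\pos})$, and let $z\in V(M^{k,n}_S)$; if $z\in\dom(g)$ there is nothing to show. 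The one clean move is: whenever there is a generator $H$ with $g\subseteq H$ and $z\in\dom(H)$, the restriction $h$ of $H$ to $\dom(g)\cup\{z\}$ is in $\mathcal H^\text{in}_{\pos}$ and witnesses the extension property. Since $\dom(h^{\text{in}}_{\pos,\sigma})$ contains all $x$-, $y$- and $a$-vertices, while $\dom(h^{\text{in}}_{\pos,\sigma,l})$ contains all $x$-, $y$- and $b$-vertices, this settles $z\in\{x^i,y^i\mid i\in[k]\}$ outright and reduces the two remaining cases $z=a^i$ and $z=b^i$ to exhibiting a suitable generator.

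From $g\subseteq H$ I would read off the shape of $g$: with $S_a$, $S_b$ the $a$- and $b$-vertices in $\dom(g)$, there is an injection $\tau\colon S_a\to[k]$ with $g(a^j)=a^j_{\pos(j),\tau(j)}$, we have $g(x^j)=x^j_{\pos(j)}$ and $g(y^j)=y^j_0$, and if $S_b\neq\emptyset$ then $g(b^j)=b^j_{0,l}$ for one common index $l\notin\tau(S_a)$. For $z=b^i$: if $S_b\neq\emptyset$ then every generator containing $g$ is of type $h^{\text{in}}_{\pos,\sigma,l}$ and hence already contains $b^i$, the easy case; if $S_b=\emptyset$ and $|S_a|<k$, pick $j_0\notin S_a$ and $c\notin\tau(S_a)$, extend $\tau$ to a permutation $\sigma$ with $\sigma(j_0)=c$, and take $h=h^{\text{in}}_{\pos,\sigma,c}$ restricted to $\dom(g)\cup\{b^i\}$ (its domain omits only $a^{j_0}\notin\dom(g)\cup\{b^i\}$). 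For $z=a^i$: if $S_b=\emptyset$ then $g$ lies inside some $h^{\text{in}}_{\pos,\sigma}$, which contains $a^i$, again the easy case; if $S_b\neq\emptyset$ with common index $l$ and $|S_a|<k-1$, choose $j_0\in[k]\setminus(S_a\cup\{i\})$ and $c\in[k]\setminus(\tau(S_a)\cup\{l\})$, extend $\tau$ to a permutation $\sigma$ with $\sigma(j_0)=l$ and $\sigma(i)=c$, and take $h=h^{\text{in}}_{\pos,\sigma,l}$ restricted to $\dom(g)\cup\{a^i\}$. In each case one checks that the stated $h$ is genuinely a restriction of the named generator — equivalently that it preserves the colours and the edges (E1)--(E15) — which is routine.

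What is left are the threshold configurations, and recognising that they coincide exactly with the critical positions is the crux. Since $a^i\notin\dom(g)$ in the $a$-case, there $|S_a|\le k-1$, so the splits above are exhaustive once we treat $z=b^i$ with $|S_a|=k$ and $z=a^i$ with $|S_a|=k-1$. If $z=b^i$ and $|S_a|=k$, then $S_a=[k]$, so $\{a^1,\dots,a^k\}\subseteq\dom(g)$, and $|\dom(g)|\le k$ forces $\dom(g)=\{a^1,\dots,a^k\}$ and $g=h^{\text{out-crit}}_{\pos,\tau}\in\mathcal C^\text{out-crit}_{\pos}$, contradicting $g\notin\crit(\mathcal H^\text{in}_{\pos})$. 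If $z=a^i$ and $|S_a|=k-1$, then $S_a=[k]\setminus\{i\}$ and $l$ is the unique index outside $\tau(S_a)$; with $S_b\neq\emptyset$ this forces $|\dom(g)|=k$, $\dom(g)=S_a\cup\{b^j\}$ for a single $j$, and $g=h^{\text{restart-crit}}_{\pos,\sigma,j,i}\in\mathcal C^\text{restart-crit}_{\pos,i}$ where $\sigma$ extends $\tau$ by $\sigma(i)=l$, again a contradiction. Hence for $g\notin\crit(\mathcal H^\text{in}_{\pos})$ one of the extensions above always applies, which is the claim. The main obstacle, as indicated, is pinning down that the inequalities $|S_a|<k$ and $|S_a|<k-1$ are precisely the borderline at which the generators run out and $g$ turns critical — this is the combinatorial heart, and the reason the sets $\mathcal C^\text{out-crit}_{\pos}$ and $\mathcal C^\text{restart-crit}_{\pos,t}$ were defined the way they were; everything else is bookkeeping about restrictions of the named homomorphisms.
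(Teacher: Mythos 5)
Your proof is correct and follows essentially the same approach as the paper's: both exploit that every $g\in\mathcal H^\text{in}_{\pos}$ is a restriction of a generator $h^{\text{in}}_{\pos,\sigma}$ or $h^{\text{in}}_{\pos,\sigma,l}$, extend the partial injection on $\dom(g)\cap A$ to a full permutation with a suitable swap to free up the needed index, and identify the two threshold configurations ($|\dom(g)\cap A|=k$; and $|\dom(g)\cap A|=k-1$ with $\dom(g)\cap B\neq\emptyset$) as exactly $\mathcal C^\text{out-crit}_{\pos}$ and $\mathcal C^\text{restart-crit}_{\pos,t}$. The only difference is organizational — you case-split on the target vertex $z$ and derive the criticality conditions by contradiction, while the paper cases directly on $|\dom(g)\cap A|\in\{k,\ k-1,\ \le k-2\}$ — but the extensions constructed and the critical positions recognized coincide.
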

\begin{innerproof} As $g$ is a partial homomorphism from $\mathcal H^\text{in}_{\pos}$, we can fix some $\sigma\in S_k$ and $l\in [k]$ such that $g\subset \{
 (x^i, x^i_{\pos(i)}),
 (a^i, a^i_{\pos(i),\sigma(i)}),
 (b^i, b^i_{0,l}),
 (y^i, y^i_{0})\mid i\in[k]\}$.

\noindent
\textbf{Case 1: $|\dom(g)\cap A|=k$.} 
In this case, $g=h^{\text{out-crit}}_{\pos,\sigma}$ and hence, $g\in\crit(\mathcal H^\text{in}_{\pos})$.

\noindent
\textbf{Case 2: $|\dom(g)\cap A|=k-1$.} 
If $\dom(g)\cap B \neq \emptyset$, then $g=h^{\text{restart-crit}}_{\pos,\sigma,j,\sigma^{-1}(l)}$ for some $j\in[k]$. Thus, we can assume that $\dom(g)\cap B = \emptyset$ and show for all $z$ that $g$ satisfies the extension property. If $z$ is the unique element in $A\setminus\dom(g)$, then $h^{\text{in}}_{\pos,\sigma}$ extends $g$. If $z\in X\cup B\cup Y$, then $h^{\text{in}}_{\pos,\sigma,l}$ extends $g$.

\noindent
\textbf{Case 3: $|\dom(g)\cap A|\leq k-2$.}
Let $j_1$ and $j_2$ be two distinct indices such that $a^{j_1}$, $a^{j_2}\notin \dom(g)$. Furthermore, we can assume that $\sigma(j_1) = l$. For $z\neq a^{j_1}$ the homomorphism $h^{\text{in}}_{\pos,\sigma,l}$ extends $g$. If $z=a^{j_1}$, then $h^{\text{in}}_{\pos,\sigma',l}$ extends $g$, where 
$\sigma' \defi \{(i,\sigma(i))\mid i\in [k]\setminus\{j_1,j_2\}\}\cup \{(j_1,\sigma(j_2)),(j_2,\sigma(j_1))\}.$
\end{innerproof}
\end{IEEEproof}

\subsection{The Initialization Gadget}

\begin{figure}[htp]
 \centering
\newcommand{\schriftgr}{\small}

\begin{tikzpicture}
	[scale=0.755, transform shape, knoten/.style={circle,draw=black,fill=black,
	inner  sep=0pt,minimum  size=1mm},kknoten/.style={circle,draw=black,fill=black,
	inner  sep=0pt,minimum  size=0.8mm}, leerknoten/.style={circle,draw=black,fill=white,
	inner  sep=0pt,minimum  size=1mm},leerkknoten/.style={circle,draw=black,fill=white,
	inner  sep=0pt,minimum  size=0.8mm}]
	
	\node at (1.6+2,-5) {Duplicator's side};

  \node[knoten] (y1) at (3.2-0.5,2) [label=above:{\schriftgr{$y_{1}$}}] {};
  \node[knoten] (y2) at (4+0.5,2) [label=above:{\schriftgr{$y_{2}$}}] {};

	\node[leerknoten] (a10) at (0,0)  [label=below:{\schriftgr{$a^{1}_{0}$}}]{};
	\node[knoten] (a11) at (0.2,0)  {};
	\node[knoten] (a12) at (0.4,0)   {};
	\node[knoten] (a13) at (0.6,0) {};
	\node[knoten] (a14) at (0.8,0) [label=below:{\schriftgr{$a^{1}_{n}$}}]{};

	\node[leerknoten] (a20) at (1.2,0) {};
	\node[knoten] (a21) at (1.4,0)  {};
	\node[knoten] (a22) at (1.6,0)  {};
	\node[knoten] (a23) at (1.8,0)  {};
	\node[knoten] (a24) at (2,0) {};

	\node[leerknoten] (a30) at (2.4,0) {};
	\node[knoten] (a31) at (2.6,0){};
	\node[knoten] (a32) at (2.8,0)  {};
	\node[knoten] (a33) at (3,0)  {};
	\node[knoten] (a34) at (3.2,0) [label=below:{\schriftgr{$a^{k}_{n}$}}] {};
	
	\draw (-0.2,-0.1) rectangle (3.4,-1.9);
	\node at (1.6,-1) {$M_D^{1}$};
	
  \begin{scope}[yshift=-2cm]
  		\node[leerknoten] (c10) at (0,0) [label=above:{\schriftgr{$c^{1}_0$}}] {};
		\node[knoten] (c11) at (0.2,0)  {};
		\node[knoten] (c12) at (0.4,0)  {};
		\node[knoten] (c13) at (0.6,0) {};
		\node[knoten] (c14) at (0.8,0)[label=above:{\schriftgr{$c^{1}_n$}}] {};

		\node[leerknoten] (c20) at (1.2,0) {};
		\node[knoten] (c21) at (1.4,0)  {};
		\node[knoten] (c22) at (1.6,0)  {};
		\node[knoten] (c23) at (1.8,0)  {};
		\node[knoten] (c24) at (2,0)  {};

		\node[leerknoten] (c30) at (2.4,0) {};
		\node[knoten] (c31) at (2.6,0){};
		\node[knoten] (c32) at (2.8,0)  {};
		\node[knoten] (c33) at (3,0)  {};
		\node[knoten] (c34) at (3.2,0) [label=above:{\schriftgr{$c^{k}_n$}}] {};
	\end{scope}

\begin{scope}[xshift=4cm]
	
	\node[leerknoten] (b10) at (0,0) [label=below:{\schriftgr{$b^{1}_0$}}] {};
	\node[knoten] (b11) at (0.2,0)  {};
	\node[knoten] (b12) at (0.4,0)  {};
	\node[knoten] (b13) at (0.6,0) {};
	\node[knoten] (b14) at (0.8,0) [label=below:{\schriftgr{$b^{1}_n$}}] {};

	\node[leerknoten] (b20) at (1.2,0) {};
	\node[knoten] (b21) at (1.4,0)  {};
	\node[knoten] (b22) at (1.6,0)  {};
	\node[knoten] (b23) at (1.8,0)  {};
	\node[knoten] (b24) at (2,0)  {};

	\node[leerknoten] (b30) at (2.4,0) {};
	\node[knoten] (b31) at (2.6,0){};
	\node[knoten] (b32) at (2.8,0)  {};
	\node[knoten] (b33) at (3,0)  {};
	\node[knoten] (b34) at (3.2,0) [label=below:{\schriftgr{$b^{k}_n$}}] {};
	
	\draw (-0.2,-0.1) rectangle (3.4,-1.9);
	\node at (1.6,-1) {$M_D^{2}$};
	
  \begin{scope}[yshift=-2cm]
  	\node[leerknoten] (d10) at (0,0) [label=above:{\schriftgr{$d^{1}_0$}}] {};
		\node[knoten] (d11) at (0.2,0)  {};
		\node[knoten] (d12) at (0.4,0)  {};
		\node[knoten] (d13) at (0.6,0) {};
		\node[knoten] (d14) at (0.8,0) [label=above:{\schriftgr{$d^{1}_n$}}] {};

		\node[leerknoten] (d20) at (1.2,0) {};
		\node[knoten] (d21) at (1.4,0)  {};
		\node[knoten] (d22) at (1.6,0)  {};
		\node[knoten] (d23) at (1.8,0)  {};
		\node[knoten] (d24) at (2,0)  {};

		\node[leerknoten] (d30) at (2.4,0) {};
		\node[knoten] (d31) at (2.6,0){};
		\node[knoten] (d32) at (2.8,0)  {};
		\node[knoten] (d33) at (3,0)  {};
		\node[knoten] (d34) at (3.2,0) [label=above:{\schriftgr{$d^{k}_n$}}] {};
	\end{scope}
\end{scope}

\begin{scope}[yshift=-4cm,xshift=2cm]
 	\node[leerknoten] (x10) at (0,0) [label=below:{\schriftgr{$x^{1}_0$}}] {};
	\node[knoten] (x11) at (0.2,0)  {};
	\node[knoten] (x12) at (0.4,0)  {};
	\node[knoten] (x13) at (0.6,0) {};
	\node[knoten] (x14) at (0.8,0) [label=below:{\schriftgr{$x^{1}_n$}}] {};

	\node[leerknoten] (x20) at (1.2,0) {};
	\node[knoten] (x21) at (1.4,0)  {};
	\node[knoten] (x22) at (1.6,0)  {};
	\node[knoten] (x23) at (1.8,0)  {};
	\node[knoten] (x24) at (2,0)  {};

	\node[leerknoten] (x30) at (2.4,0) {};
	\node[knoten] (x31) at (2.6,0){};
	\node[knoten] (x32) at (2.8,0)  {};
	\node[knoten] (x33) at (3,0)  {};
	\node[knoten] (x34) at (3.2,0) [label=below:{\schriftgr{$x^{k}_n$}}] {};
\end{scope}
	

\draw[-] (y2) -- (a10);
\draw[-] (y2) -- (a20);
\draw[-] (y2) -- (a30);

\draw[-] (y2) -- (b12);
\draw[-] (y2) -- (b24);
\draw[-] (y2) -- (b33);

\draw[-] (y2) -- (a12);
\draw[-] (y2) -- (a24);
\draw[-] (y2) -- (a33);

\draw[-] (y1) -- (b10);
\draw[-] (y1) -- (b20);
\draw[-] (y1) -- (b30);

\draw[-] (y1) -- (a12);
\draw[-] (y1) -- (a24);
\draw[-] (y1) -- (a33);

\draw[-] (y1) -- (b12);
\draw[-] (y1) -- (b24);
\draw[-] (y1) -- (b33);

\foreach \i/\j in {1/2,2/4,3/3} {
	\draw[-] (c\i\j) -- (x\i\j);
	\draw[-] (d\i\j) -- (x\i\j);
};

\foreach \i in {1,2,3}\foreach \l in {0,1,2,3,4} {
	\draw[-] (c\i0) -- (x\i\l);
	\draw[-] (d\i0) -- (x\i\l);
};


\begin{scope}[xshift=-6cm]
	\node at (1.4,-5) {Spoiler's side};

	\node[knoten] (y) at (1.4,2) [label=above:{\schriftgr{$y$}}]{};
	\node[knoten] (x1) at (1.4-0.5,-4)[label=below:{\schriftgr{$x^{1}$}}]{};
	\node[knoten] (x2) at (1.4,-4){};
	\node[knoten] (x3) at (1.4+0.5,-4)[label=below:{\schriftgr{$x^{k}$}}]{};

	\node[knoten] (a1) at (0,0)[label=below:{\schriftgr{$a^{1}$}}]{};
	\node[knoten] (a2) at (0.5,0){};
	\node[knoten] (a3) at (1,0)[label=below:{\schriftgr{$a^{k}$}}]{};

	\draw (-0.2,-0.1) rectangle (1.2,-1.9);
	\node at (0.5,-1) {$M_S^{1}$};

	\node[knoten] (c1) at (0,-2)[label=above:{\schriftgr{$c^{1}$}}]{};
	\node[knoten] (c2) at (0.5,-2){};
	\node[knoten] (c3) at (1,-2)[label=above:{\schriftgr{$c^{k}$}}]{};

	\draw[-] (y) -- (a1);
	\draw[-] (y) -- (a2);
	\draw[-] (y) -- (a3);

	\draw[-] (c1) -- (x1);
	\draw[-] (c2) -- (x2);
	\draw[-] (c3) -- (x3);

	\begin{scope}[xshift=1.8cm]
		
	\node[knoten] (b1) at (0,0)[label=below:{\schriftgr{$b^{1}$}}]{};
	\node[knoten] (b2) at (0.5,0){};
	\node[knoten] (b3) at (1,0)[label=below:{\schriftgr{$b^{k}$}}]{};

	\draw (-0.2,-0.1) rectangle (1.2,-1.9);
	\node at (0.5,-1) {$M_S^{2}$};

	\node[knoten] (d1) at (0,-2)[label=above:{\schriftgr{$d^{1}$}}]{};
	\node[knoten] (d2) at (0.5,-2){};
	\node[knoten] (d3) at (1,-2)[label=above:{\schriftgr{$d^{k}$}}]{};

	\draw[-] (y) -- (b1);
	\draw[-] (y) -- (b2);
	\draw[-] (y) -- (b3);

	\draw[-] (d1) -- (x1);
	\draw[-] (d2) -- (x2);
	\draw[-] (d3) -- (x3);
		
	\end{scope}
\end{scope}

\end{tikzpicture}
 \caption{The initialization gadget INIT$^{\startpos}$ (for $k=3$, $n=4$, $\startpos(1)=2$, $\startpos(2)=4$, $\startpos(3)=3$.)} \label{fig:init}
\end{figure}

The initialization gadget {\upshape INIT$^{\startpos}$} is built out of two multiple input one-way switches $M^{1}$ and $M^{2}$, vertices $y$ in Spoiler's graph and $y_1$, $y_2$ in Duplicator's graph (that are colored $c_y$), and the boundary vertices $x^1,\ldots,x^k$ and $x^1_0,\ldots,x^k_n$ (where all $x^i$ and $x^i_l$ are colored $c_{x^i}$). The $x$- and $y$-vertices are connected to $M^1$ and $M^2$ as shown in Figure \ref{fig:init}. Lemma \ref{lem:init} (i)--(iii) provides the strategies on {\upshape INIT$^{\startpos}$}. The main property is that Spoiler can reach the start position at the boundary (i) and Duplicator has a corresponding counter strategy (ii) in this situation. 
Furthermore, if an arbitrary position occurs at the boundary during the game, Duplicator has a strategy to survive (iii).

\begin{lemma}\label{lem:init}
For every start position $\startpos\colon[k]\to[n]$ the following holds in the existential ($k+1$)-pebble game on {\upshape INIT$^{\startpos}$}:
\begin{enumerate}[leftmargin=2.2 em,label=(\roman*)]
 \item Spoiler can reach $\{(x^{i},x^{i}_{\startpos(i)})\mid i\in [k]\}$ from $\emptyset$.
 \item There is a winning strategy $\mathcal I^{\text{init}}$ for Duplicator with boundary function $\{(x^{i},x^{i}_{\startpos(i)})\mid i\in [k]\}$.
 \item For every $\pos\colon[k]\to [n]$ and every $T\subseteq [k]$ there is a critical strategy $\mathcal I^{\text{init}}_{(\pos,T)}$ with boundary function $\{(x^{i},x^{i}_{(\pos,T)(i)})\mid i\in [k]\}$ and $\crit(\mathcal I^{\text{init}}_{(\pos,T)})\subseteq \mathcal I^{\text{init}}$.
\end{enumerate}
\end{lemma}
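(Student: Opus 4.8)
The plan is to handle the three parts in order, building every strategy on {\upshape INIT$^{\startpos}$} out of the switch strategies of Lemma~\ref{lem:multi} for the two copies $M^1,M^2$, together with trivial (closure-of-a-single-total-homomorphism) strategies for the ``glue'' part $G$ of the gadget --- the vertices $y,y_1,y_2$, the edges $\{y,a^i\},\{y,b^i\}$ joining $y$ to the inputs of $M^1,M^2$, and the edges $\{c^i,x^i\},\{d^i,x^i\}$ joining their outputs to the boundary --- using the operators of Lemma~\ref{lem:combcirc} and Lemma~\ref{lem:combunion}. For~(i), Spoiler first pebbles $y$; by the colouring Duplicator must answer $y_1$ or $y_2$, and by the symmetry of {\upshape INIT$^{\startpos}$} ($M^1\leftrightarrow M^2$, $a\leftrightarrow b$, $c\leftrightarrow d$, $y_1\leftrightarrow y_2$) we may assume $y\mapsto y_1$. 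Since $y_1$ is adjacent to $a^i_{\startpos(i)}$ but to no other vertex in the colour class of $a^i$, Spoiler uses the edges $\{y,a^i\}$ to force $\startpos$ onto the input block of $M^1$; he then picks up $y$ and applies Lemma~\ref{lem:multi}(i) to $M^1$, reaching $\startpos$ on its output block $\{c^i_{\startpos(i)}\}$; finally, as $x^i_{\startpos(i)}$ is the only vertex in the colour class of $x^i$ adjacent to $c^i_{\startpos(i)}$ (recall $\startpos(i)\neq 0$), the edges $\{c^i,x^i\}$ force him to $\{(x^i,x^i_{\startpos(i)})\mid i\in[k]\}$.

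For~(ii) I would let $\mathcal I^{\text{init}}$ be a finite union of ``modes'', each mode a composition $\cl(g)\circ\mathcal G\circ\mathcal H$ of a glue strategy $\cl(g)$ with $g(x^i)=x^i_{\startpos(i)}$, a strategy $\mathcal G$ on $M^1$, and a strategy $\mathcal H$ on $M^2$, with $\mathcal G,\mathcal H$ taken from Lemma~\ref{lem:multi}(ii)--(iv). The tailored adjacencies of $y_1,y_2$ are exactly what makes the relevant combinations connectable: $g(y)=y_1$ forces the input of $M^1$ to $\startpos$ (so $\mathcal G=\mathcal H^{\text{in}}_\startpos$) while leaving the input of $M^2$ free to be $\boldsymbol 0$ or $\startpos$ (so $\mathcal H$ may be the input strategy, a restart strategy, or an output strategy); $g(y)=y_2$ is the mirror image; and a switch running an output strategy must then have output $\startpos$ whereas one running an input or restart strategy has output $\boldsymbol 0$, both compatible with $x^i=x^i_{\startpos(i)}$ because $c^i_0$ resp.\ $d^i_0$ is adjacent to every $x^i_s$. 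Starting from the ``correct-play'' mode $\mathcal A=\cl(g_{\mathcal A})\circ\mathcal H^{\text{in}}_\startpos[M^1]\circ\mathcal H^{\text{restart}}_{\boldsymbol 0}[M^2]$ (with $g_{\mathcal A}(y)=y_1$ and $\boldsymbol 0=(\startpos,[k])$) and closing the family under ``resolve a critical position of the input strategy currently running on a switch $M^j$ by switching $M^j$ to $\mathcal H^{\text{out}}_\startpos$ or to some $\mathcal H^{\text{restart}}_{(\startpos,\{t\})}$'', one obtains a finite family: besides $\mathcal A$, the two modes with one switch in $\mathcal H^{\text{out}}_\startpos$ and the other in $\mathcal H^{\text{in}}_\startpos$, and, for each $t\in[k]$, the two modes with one switch in $\mathcal H^{\text{restart}}_{(\startpos,\{t\})}$ and the other in $\mathcal H^{\text{in}}_\startpos$. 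In every mode the critical positions are precisely those of the input strategy running on one switch, so they lie entirely inside that switch and use all $k$ pebbles there; in particular $y$ is unpebbled at such a position, which is what permits Duplicator to pass to another mode, and by Lemma~\ref{lem:multi}(iv)(b),(c) the position lies in $\mathcal H^{\text{restart}}_{(\startpos,\{t\})}$ resp.\ $\mathcal H^{\text{out}}_\startpos$ of that switch, hence in the non-critical part of the resolving mode. Lemma~\ref{lem:combunion} then gives that $\mathcal I^{\text{init}}$ is a winning strategy, and its boundary function is $\{(x^i,x^i_{\startpos(i)})\mid i\in[k]\}$ since every mode uses a glue strategy with $g(x^i)=x^i_{\startpos(i)}$.

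For~(iii) I would take $\mathcal I^{\text{init}}_{(\pos,T)}=\cl(g_{(\pos,T)})\circ\mathcal H^{\text{in}}_\startpos[M^1]\circ\mathcal H^{\text{restart}}_{\boldsymbol 0}[M^2]$, where $g_{(\pos,T)}$ is the glue homomorphism agreeing with $g_{\mathcal A}$ except that $g_{(\pos,T)}(x^i)=x^i_{(\pos,T)(i)}$; this is still a homomorphism because both switches output $\boldsymbol 0$ and $c^i_0,d^i_0$ are adjacent to every $x^i_s$. By Lemma~\ref{lem:combcirc} it is a critical strategy with boundary function $\{(x^i,x^i_{(\pos,T)(i)})\mid i\in[k]\}$ whose critical positions are exactly those of $\mathcal H^{\text{in}}_\startpos$ on $M^1$; these are the very same partial homomorphisms that are critical in the mode $\mathcal A$ above (they touch only vertices inside $M^1$), so $\crit(\mathcal I^{\text{init}}_{(\pos,T)})\subseteq\mathcal A\subseteq\mathcal I^{\text{init}}$.

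The step that needs care is~(ii): one has to pin down the finite list of modes, check that each is connectable --- this is where the precise adjacencies of $y_1,y_2$ to the input blocks of $M^1$ and $M^2$ are used, in particular that $y_1$ sees $b^i_0$ so that $M^2$ can be ``restarted'' while $M^1$ keeps playing correctly, and symmetrically --- and verify that resolving a critical position never leaves the family, so that Lemma~\ref{lem:combunion} actually applies. Parts~(i) and~(iii) are then essentially bookkeeping on top of Lemma~\ref{lem:multi}.
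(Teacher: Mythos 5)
Your proof is correct and follows essentially the same approach as the paper: gluing together switch strategies from Lemma~\ref{lem:multi} with a trivial strategy on the connecting vertices, and verifying via Lemma~\ref{lem:combunion} that every critical position of one mode is absorbed as a non-critical position of another. The only cosmetic differences are that you retain the ``correct-play'' mode $\mathcal A$ with $\mathcal H^{\text{restart}}_{\boldsymbol 0}$ on $M^2$ (the paper only keeps the singleton-restart modes, but including $\mathcal A$ is harmless since its critical positions are also resolved), and that your $\mathcal I^{\text{init}}_{(\pos,T)}$ is a single composition rather than a union over $t$ (both satisfy $\crit(\mathcal I^{\text{init}}_{(\pos,T)})\subseteq\mathcal I^{\text{init}}$, which is all part~(iii) requires).
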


\noindent Spoiler's strategy is quite simple. First he pebbles $y$. Duplicator has to answer with either $y_1$ or $y_2$. Then Spoiler can reach $\{(x^{i},x^{i}_{\startpos(i)})\mid i\in [k]\}$ by pebbling through either $M^1$ or $M^2$. To construct the strategies for Duplicator, we can combine the strategies of the switches $M^1$ and $M^2$ such that she plays an input strategy on one switch and a restart or output strategy on the other switch. Assume that Spoiler reaches a critical position on the switch where Duplicator plays the input strategy, say $M^1$. Duplicator can now flip the strategies such that she plays a restart or output strategy on $M^1$, depending on which kind of critical position Spoiler has reached, and an input strategy on $M^2$. 

\begin{IEEEproof}[Proof of Lemma \ref{lem:init}]
 We first develop the strategy for Spoiler (i). Spoiler first pebbles $y$. Duplicator has to response with either $y_{1}$ or $y_{2}$. Depending on Duplicator's choice, Spoiler can reach either $\{(a^{i},a^{i}_{\startpos(i)})\mid i\in[k]\}$ or $\{(b^{i},b^{i}_{\startpos(i)})\mid i\in[k]\}$. By Lemma \ref{lem:multi}.(i) Spoiler reaches $\{(c^{i},c^{i}_{\startpos(i)})\mid i\in[k]\}$ ($\{(d^{i},d^{i}_{\startpos(i)})\mid i\in[k]\}$) and from there he can reach the position $\{(x^{i},x^{i}_{\startpos(i)})\mid i\in[k]\}$.
For Duplicator's strategies we start with a discussion of possible moves on the boundary of the switches and the $x$- and $y$-vertices. At the top of the gadget Duplicator can map $y$ to $y_1$ and is then forced to answer with $\startpos$ at the input of $M^1$ and for some $R\subseteq[k]$ with $(\startpos,R)$ at the input of $M^2$. This strategy is called $\mathcal G^1_R$, the dual strategy where $y$ is mapped to $y_2$ is called $\mathcal G^2_R$.
\begin{align*}
\mathcal G^{1}_{R} &\defi \cl(\{(y,y_{1})\}\cup \{(a^{i},a^{i}_{\startpos(i)}),(b^{i},b^{i}_{(\startpos,R)(i)})\mid i\in [k]\}) \\
\mathcal G^{2}_{R} &\defi \cl(\{(y,y_{2})\}\cup \{(b^{i},b^{i}_{\startpos(i)}),(a^{i},a^{i}_{(\startpos,R)(i)})\mid i \in [k]\}) 
\end{align*}
At the bottom of the switch $\mathcal K_{(\pos,T)}$ denotes the strategy where $\boldsymbol 0$ is at the output of both switches and some arbitrary $(\pos,T)$ is at the $x$-block. In the strategy $\mathcal K^{\text{out-$i$}}$ the start position $\startpos$ occurs at the output of the switch $M^i$ and on the $x$-block, whereas $\boldsymbol 0$ is at the output of the other switch.   
\begin{align*}  
\mathcal K_{(\pos,T)} &\defi \cl(\{(c^{i},c^{i}_{0}), (d^{i},d^{i}_{0}), (x^{i},x^{i}_{(\pos,T)(i)})\mid i\in [k]\}) \\
\mathcal K^{\text{out-1}} &\defi \cl(\{(c^{i},c^{i}_{\startpos(i)}),(d^{i},d^{i}_{0}),(x^{i},x^{i}_{\startpos(i)})\mid i \in [k]\}) \\
\mathcal K^{\text{out-2}} &\defi \cl(\{(c^{i},c^{i}_{0}),(d^{i},d^{i}_{\startpos(i)}),(x^{i},x^{i}_{\startpos(i)})\mid i\in[k]\}) 
\end{align*}
Now we can combine these partial strategies with the strategies on the switches described in Lemma \ref{lem:multi}. In strategy $\mathcal I^{\text{in-$i$}}_{R,(\pos,T)}$ Duplicator plays an input strategy on switch $i$, a restart strategy on the other switch and an arbitrary position $(\pos,T)$ occurs at the $x$-block. These strategies were combined to the critical strategy $\mathcal I^{\text{init}}_{(\pos,T)}$ described in (ii).
\begin{align*}
\mathcal I^{\text{in-1}}_{R,(\pos,T)} &\defi  \mathcal G^{1}_{R} \circ \mathcal H^{\text{in}}_{\startpos}\langle M^{1}\rangle \circ \mathcal H^{\text{restart}}_{(\startpos,R)}\langle M^{2}\rangle \circ \mathcal K_{(\pos,T)} \\
\mathcal I^{\text{in-2}}_{R,(\pos,T)} &\defi  \mathcal G^{2}_{R} \circ  \mathcal H^{\text{restart}}_{(\startpos,R)}\langle M^{1}\rangle \circ \mathcal H^{\text{in}}_{\startpos}\langle M^{2}\rangle \circ \mathcal K_{(\pos,T)} \\
\mathcal I^{\text{init}}_{(\pos,T)} &\defi \bigcup_{t\in [k]} (\mathcal I^{\text{in-1}}_{\{t\},(\pos,T)}\cup \mathcal I^{\text{in-2}}_{\{t\},(\pos,T)})
\end{align*}
All critical positions of $\mathcal I^{\text{in-$i$}}_{R,(\pos,T)}$ are restart or output critical positions on the switch $M^i$. By Lemma \ref{lem:multi}.(iv).(b) every restart critical positions of $\mathcal I^{\text{in-1}}_{R,(\pos,T)}$ is contained in one of the strategies $\mathcal I^{\text{in-2}}_{\{t\},(\pos,T)}$ as non-critical position. Hence, the only critical positions $\crit(\mathcal I^{\text{init}}_{(\pos,T)})$ of the combined strategy are output critical positions on the switches. These output critical positions will be contained in the strategies $\mathcal I^{\text{init-$i$}}$ where Duplicator plays an output strategy on switch $i$. Together with $\mathcal I^{\text{init}}_{\startpos}$ they form the winning strategy $\mathcal I^{\text{init}}$ from (ii).
\begin{align*}  
\mathcal I^{\text{init-1}} &\defi \mathcal G^{2}_{[k]} \circ \mathcal H^{\text{out}}_{\startpos}\langle M^{1}\rangle \circ \mathcal H^{\text{in}}_{\startpos}\langle M^{2}\rangle \circ \mathcal K^{\text{out-1}} \\
\mathcal I^{\text{init-2}} &\defi \mathcal G^{1}_{[k]} \circ \mathcal H^{\text{in}}_{\startpos}\langle M^{1}\rangle \circ \mathcal H^{\text{out}}_{\startpos}\langle M^{2}\rangle \circ \mathcal K^{\text{out-2}} \\
\mathcal I^{\text{init}} &\defi \mathcal I^{\text{init-1}} \cup \mathcal I^{\text{init-2}} \cup  \mathcal I^{\text{init}}_{\startpos}
 \end{align*}
 $\mathcal I^{\text{init}}$ is a union of critical strategies with $\startpos$ at the boundary.
To prove that $\mathcal I^{\text{init}}$ is indeed a winning strategy on the gadget, we apply Lemma \ref{lem:combunion} and show that every critical position of one strategy is contained as non-critical position in another strategy. 
Critical positions are inside the input strategy $\mathcal H^{\text{in}}_{\startpos}$ on one of the switches. 
By Lemma \ref{lem:multi}.(iv) they are either contained in an output or restart strategy on the corresponding switch. Hence, all restart critical positions on $M^1$ and $M^2$ are contained in $\mathcal I^{\text{init}}_{\startpos}$ and all output critical positions on $M^1$ ($M^2$) are contained in $\mathcal I^{\text{init-1}}$ ($\mathcal I^{\text{init-2}}$). 
Recall that $\mathcal{\hat S} \defi \mathcal S\setminus \crit(\mathcal S)$, by Lemma \ref{lem:multi}.(iv) we get:
\begin{align*}
\crit(\mathcal I^{\text{in-2}}_{R,(\pos,T)}) = \crit(\mathcal I^{\text{init-1}}) = \crit(\mathcal H^{\text{in}}_{\startpos}\langle M^{2}\rangle)
&\subseteq \mathcal H^\text{out}_{\startpos}\langle M^{2}\rangle \cup \bigcup_{t\in[k]} \mathcal H^\text{restart}_{(\startpos,\{t\})}\langle M^{2}\rangle \\
&\subseteq \mathcal {\hat I}^{\text{init-2}} \cup \bigcup_{t\in[k]} \mathcal{\hat I}^{\text{in-1}}_{\{t\},\startpos}, \\
\crit(\mathcal I^{\text{in-1}}_{R,(\pos,T)}) = \crit(\mathcal I^{\text{init-2}}) = \crit(\mathcal H^{\text{in}}_{\startpos}\langle M^{1}\rangle) 
&\subseteq \mathcal H^\text{out}_{\startpos}\langle M^{1}\rangle \cup \bigcup_{t\in[k]} \mathcal H^\text{restart}_{(\startpos,\{t\})}\langle M^{1}\rangle \\
&\subseteq \mathcal {\hat I}^{\text{init-1}} \cup \bigcup_{t\in[k]} \mathcal{\hat I}^{\text{in-2}}_{\{t\},\startpos}. 
\end{align*}
Hence, $\crit(\mathcal I^{\text{init}}_{(\pos,T)})\subseteq \mathcal I^{\text{init}}$ and $\mathcal I^{\text{init}}$ is a winning strategy by Lemma \ref{lem:combunion}.
\end{IEEEproof}

\subsection{The Choice Gadget}

The boundary of the choice gadget consists of input vertices $x^1,\ldots,x^k$ in Spoiler's graph and $x^1_0,\ldots,x^k_n$ in Duplicator's graph. These vertices are identified with $\ynode$-vertices in the final graph. The output vertices are of the form $(y_q)^1,\ldots,(y_q)^k$ and $(y_q)^1_0,\ldots,(y_q)^k_n$ for all $q\in[m]$ and are connected to the rule gadgets $RD(r_q)$. This gadget enables Spoiler to reach positions $((y_q)^i,(y_q)^i_{\pos(i)})$ from $(x^i,x^i_{\pos(i)})$ but Duplicator can choose the desired $q\in[m]$. This choice will later coincide with the rule $r_q\in R$ Player~2 chooses in the KAI-game when position $\pos$ is pebbled.
The \textit{choice gadget} $C^{m}$ is defined as follows:
\begin{align*}
&V(C^{m}_S) = \{x^{i},a^{i}\mid i\in[k]\}  \cup \{(y_q)^{i}\mid i\in[k], q\in [m]\},\\
&E(C^{m}_S) = \big\{\{x^{i},a^{i}\}\mid i\in [k]\big\} \cup \\ 
            &\qquad\big\{\{a^{i},(y_q)^{i}\}\mid i\in [k],q\in [m]\big\} \\
           &\qquad \cup \big\{\{a^{i},a^{j}\}\mid i,j\in [k]\text{; } i\neq j\big\},  \\
&V(C^{m}_D) =  \{x^{i}_l\mid i\in[k],0\leq l\leq n\} \cup  \{a^{i}_0\mid i\in[k]\} \\
           &\qquad  \{a^{i}_{l,q}\mid i\in[k], l\in [n], q\in [m]\} \\
           &\qquad\cup  \{(y_q)^{i}_l\mid i\in[k], q\in [m], 0\leq l\leq n\},\\
&E(C^{m}_D) = \big\{\{x^{i}_0,a^{i}_0\}\mid i\in [k]\big\} \\
           &\qquad \cup \big\{\{x^{i}_l,a^{i}_{l,q}\}\mid i\in [k],l\in[n],q\in[m]\big\} \\
           &\qquad \cup \big\{\{a^{i}_0,(y_q)^{i}_0\}\mid i\in [k],q\in[m]\big\} \\
           &\qquad \cup \big\{\{a^{i}_{l,q},(y_q)^{i}_l\}\mid i\in [k];l\in[n];q\in[m]\big\} \\
           &\qquad \cup \big\{\{a^{i}_{l,q},(y_{q'})^{i}_0\}\mid i\in [k];l\in[n];q,q'\in[m];q\neq q'\big\} \\
           &\qquad \cup \big\{\{a^{i}_{l,q},a^{j}_{l',q}\}\mid i,j\in [k]; i\neq j;l,l'\in[n];q\in[m]\big\}.   
\end{align*}
Furthermore, for all $i\in [k]$, all vertices $a^i$ and $a^i_{l,q}$ are colored with the unique color $c_{a^i}$, and for all $i\in[k]$, $q\in[m]$ the vertices $(y_q)^i$ and $(y_q)^i_l$ are colored with the unique color $c_{(y_q)^i}$. 
One partition of $C^m$ is shown in Figure \ref{fig:choiceD}.
\begin{figure}[htp]
 \centering

\begin{tikzpicture}
	[scale=1, transform shape, knoten/.style={circle,draw=black,fill=black,
	inner  sep=0pt,minimum  size=1mm},kknoten/.style={circle,draw=black,fill=black,
	inner  sep=0pt,minimum  size=0.8mm}, leerknoten/.style={circle,draw=black,fill=white,
	inner  sep=0pt,minimum  size=1mm},leerkknoten/.style={circle,draw=black,fill=white,
	inner  sep=0pt,minimum  size=0.8mm}]
	
	\node at (3.1+2,-4) {Duplicator's side};
	\node at (0.75-2,-4) {Spoiler's side};
    
	\node[knoten,xshift=-2cm] (x) at (0.75,0) [label=above:{{\tiny$x^i$}}] {};

	\node[leerknoten] (x0) at (2.5+2,0) [label=above:{{\tiny$x^i_0$}}] {};
	\node[knoten] (x1) at (2.9+2,0) [label=above:{{\tiny$x^i_1$}}] {};
	\node[knoten] (x2) at (3.3+2,0)  {};
	\node[knoten] (x5) at (3.7+2,0) [label=above:{{\tiny$x^i_n$}}] {};

\begin{scope}[yshift=-3cm]

	\node[knoten,xshift=-2cm] (y1) at (0,0) [label=below:{{\tiny$(y_1)^i$}}] {};
	\node[knoten,xshift=-2cm] (y2) at (0.75,0) {};
	\node[knoten,xshift=-2cm] (y3) at (1.5,0) [label=below:{{\tiny$(y_m)^i$}}] {};

	\node[leerknoten] (y10) at (2.5,0) [label=below:{{\tiny$ (y_1)_0^i$}}] {};
	\node[knoten] (y11) at (2.9,0)  {};
	\node[knoten] (y12) at (3.3,0)  {};
	\node[knoten] (y15) at (3.7,0) [label=below:{{\tiny$(y_1)^i_n$}}] {};

	\node[leerknoten] (y20) at (2.5+2,0) {};
	\node[knoten] (y21) at (2.9+2,0){};
	\node[knoten] (y22) at (3.3+2,0)  {};
	\node[knoten] (y25) at (3.7+2,0)  {};

	\node[leerknoten] (y30) at (2.5+4,0) [label=below:{{\tiny$ (y_m)^i_0$}}] {};
	\node[knoten] (y31) at (2.9+4,0){};
	\node[knoten] (y32) at (3.3+4,0)  {};
	\node[knoten] (y35) at (3.7+4,0) [label=below:{{\tiny$(y_m)^i_n$}}] {};

\end{scope}

	\node[knoten,xshift=-2cm] (m) at (0.75,-1.5) [label=left:{{\tiny${a}^i$}}] {};
	
	\node[leerknoten] (m0) at (2.5,-1.5) [label=above:{{\tiny$a^i_0$}}] {};
	
	\node[knoten] (m11) at (2.9+0.4,-1.5) [label=above:{{\tiny${a}^i_{1,1}$}}] {};
	\node[knoten] (m12) at (3.3+0.4,-1.5){};
	\node[knoten] (m13) at (3.7+0.4,-1.5)  [label={[label distance = 1mm]35:{\tiny${a}^i_{1,m}$}}]{};
	
	\node[knoten] (m21) at (4.1+0.8,-1.5){};
	\node[knoten] (m22) at (4.5+0.8,-1.5){};
	\node[knoten] (m23) at (4.9+0.8,-1.5){};
	
	\node[knoten] (m51) at (5.3+1.2,-1.5) {};
	\node[knoten] (m52) at (5.7+1.2,-1.5){};
	\node[knoten] (m53) at (6.1+1.2,-1.5)[label=above:{{\tiny${a}^i_{n,m}$}}]{};
	
	\draw[-] (x) -- (m);
	\draw[-] (m) -- (y1);
	\draw[-] (m) -- (y2);
	\draw[-] (m) -- (y3);
	
	\foreach \x in {1,2,3} \foreach \y in {1,2,5} \draw[-] (x\y) -- (m\y\x);

	\draw[-] (x0) -- (m0);
	\draw[-] (m0) -- (y10);
	\draw[-] (m0) -- (y20);
	\draw[-] (m0) -- (y30);
	
	\draw[-] (m11) -- (y11); 
	\draw[-] (m11) -- (y20);
	\draw[-] (m11) -- (y30);
	
	\draw[-] (m12) -- (y10); 
	\draw[-] (m12) -- (y21);
	\draw[-] (m12) -- (y30);
	
	\draw[-] (m13) -- (y10); 
	\draw[-] (m13) -- (y20);
	\draw[-] (m13) -- (y31);

	\draw[-] (m21) -- (y12); 
	\draw[-] (m21) -- (y20);
	\draw[-] (m21) -- (y30);
	
	\draw[-] (m22) -- (y10); 
	\draw[-] (m22) -- (y22);
	\draw[-] (m22) -- (y30);
	
	\draw[-] (m23) -- (y10); 
	\draw[-] (m23) -- (y20);
	\draw[-] (m23) -- (y32);

	\draw[-] (m51) -- (y15); 
	\draw[-] (m51) -- (y20);
	\draw[-] (m51) -- (y30);
	
	\draw[-] (m52) -- (y10); 
	\draw[-] (m52) -- (y25);
	\draw[-] (m52) -- (y30);
	
	\draw[-] (m53) -- (y10); 
	\draw[-] (m53) -- (y20);
	\draw[-] (m53) -- (y35);
\end{tikzpicture}
 \caption{The $i$-th partition of the gadget $C^{m}$.}\label{fig:choiceD}
\end{figure}
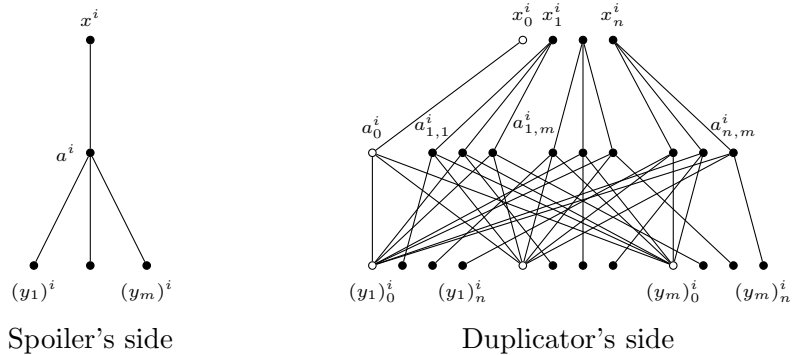
Recall that Spoiler can reach a position if he has a strategy such that either this position occurs after some finite number of rounds or he wins the game. We extend this notion to sets of positions by saying that Spoiler can reach one of the positions $p_1,\ldots,p_m$ from $p_0$ if starting from position $p_0$ either Spoiler wins the game or one of the positions $p_1,\ldots,p_m$ occurs after a finite number of rounds.
\begin{lemma} \label{lem:choice}
In the existential ($k+1$)-pebble game on $C^{m}$, 
\begin{enumerate}[leftmargin=2.2 em,label=(\roman*)]
 \item for  every $\pos\colon[k]\to[n]$ Spoiler can reach one of the positions $\{((y_l)^i,(y_l)^i_{\pos(i)})\mid i\in[k]\}$ for $l\in [m]$ from $\{(x^i,x^i_{\pos(i)})\mid i\in[k]\}$, and
\item for every $l\in[m]$, $\pos\colon[k]\to[n]$, $T\subseteq [k]$, Duplicator has a winning strategy $\mathcal C^l_{(\pos,T)}$ with boundary function $\{(x^i,x^i_{(\pos,T)(i)})\mid i\in[k]\}\cup \{((y_q)^i,(y_q)^i_{0})\mid i\in[k],q\in[m]\setminus\{l\}\}\cup \{((y_l)^i,(y_l)^i_{(\pos,T)(i)})\mid i\in[k]\}$.
\end{enumerate}
\end{lemma}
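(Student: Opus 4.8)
For part (i) I would have Spoiler drive Duplicator through $C^m$ in three phases, the decisive feature being that Duplicator's very first answer irrevocably commits her to one of the $m$ rules. Starting from $\{(x^i,x^i_{\pos(i)})\mid i\in[k]\}$, Spoiler plays the free pebble on $a^1$. Since $\pos(1)\in[n]$, the vertex $x^1_{\pos(1)}$ has in $C^m_D$ no neighbour other than the vertices $a^1_{\pos(1),q}$, $q\in[m]$ (in particular it is \emph{not} adjacent to the white vertex $a^1_0$), so Duplicator must answer $a^1_{\pos(1),q^\ast}$ for some $q^\ast\in[m]$ or lose. Spoiler then lifts the pebble from $x^1$, plays it on $a^2$ while $x^2$ is still pebbled, and continues likewise with $a^3,\dots,a^k$; at the step pebbling $a^i$ Duplicator's answer must be adjacent to $x^i_{\pos(i)}$ and to all of $a^1_{\pos(1),q^\ast},\dots,a^{i-1}_{\pos(i-1),q^\ast}$, and since the only $a$--$a$ edges of $C^m_D$ join vertices with a \emph{common} second index, she is forced to $a^i_{\pos(i),q^\ast}$. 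After these $k$ moves Spoiler holds $\{(a^i,a^i_{\pos(i),q^\ast})\mid i\in[k]\}$ with one pebble free; in the final phase he plays the free pebble on $(y_{q^\ast})^1,\dots,(y_{q^\ast})^k$ in turn, freeing up the $a$-pebbles one at a time as he goes. On Spoiler's side $(y_{q^\ast})^i$ is adjacent only to $a^i$, and the unique $c_{(y_{q^\ast})^i}$-vertex of $C^m_D$ adjacent to $a^i_{\pos(i),q^\ast}$ is $(y_{q^\ast})^i_{\pos(i)}$ --- the white vertex $(y_{q^\ast})^i_0$ is joined to $a$-vertices only when the latter have second index $\neq q^\ast$, so it is of no use to Duplicator here. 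A final clean-up pebble move leaves Spoiler at $\{((y_{q^\ast})^i,(y_{q^\ast})^i_{\pos(i)})\mid i\in[k]\}$, which is one of the positions in the statement; which one, is Duplicator's choice, which is exactly why only reaching \emph{one} of the $m$ positions is promised.

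For part (ii) I would, as for the rule gadgets (Lemmas \ref{lem:srule}(ii) and \ref{lem:drule}(ii)), realize $\mathcal C^l_{(\pos,T)}$ as $\cl(h)$ for a total homomorphism $h\colon C^m_S\to C^m_D$, which by the observation following Definition \ref{def:win} is automatically a winning strategy, its boundary function being $h$ restricted to $\bd(C^m_S)$. For $i\notin T$ I would set $h(x^i)=x^i_{\pos(i)}$, $h(a^i)=a^i_{\pos(i),l}$, $h((y_l)^i)=(y_l)^i_{\pos(i)}$; for $i\in T$ I would redirect everything on that coordinate to the white vertices $x^i_0,a^i_0,(y_l)^i_0$; and in all cases $h((y_q)^i)=(y_q)^i_0$ for $q\neq l$. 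It then remains to check, edge type by edge type, that $h$ preserves colours and sends every edge of $C^m_S$ to an edge of $C^m_D$: the $\{x^i,a^i\}$-edges and the $\{a^i,(y_q)^i\}$-edges to the appropriate block edges of $C^m_D$ (using the edges incident to the white vertices on the $T$-coordinates), and the Spoiler-side clique $\{a^i,a^j\}$ to an edge of $C^m_D$, which for $i,j\notin T$ is available because the images $a^i_{\pos(i),l}$ and $a^j_{\pos(j),l}$ carry the same second index $l$.

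The main obstacle is making the forcing in part (i) fully rigorous: at each of the $2k$ moves one must show that Duplicator's colour- and homomorphism-respecting response is \emph{unique}. I would isolate as sub-claims the two structural features of $C^m_D$ on which this rests: (a) the $a$--$a$ edges join only vertices with a common second index, so the answer to the first $a$-move pins $q^\ast$ down forever; and (b) the white vertices $(y_{q'})^i_0$ are joined to $a$-vertices only when the latter have second index $\neq q'$, so they cannot rescue Duplicator while she is being pushed along block $q^\ast$. In part (ii) the one point needing care is that on Spoiler's side the $a^i$ form a $k$-clique, so the forced image of this clique must be \emph{simultaneously} compatible with the prescribed images of all boundary vertices; choosing a common second index $l$ for all the $a^i$ (and redirecting the $T$-coordinates to white vertices) is precisely what makes this possible, and the remaining verifications are routine.
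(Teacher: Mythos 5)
Your proof follows the paper's proof essentially verbatim: the same round-by-round forcing argument for (i) (Duplicator's answer on $a^1$ fixes the second index $q^\ast$, the common-second-index $a$--$a$ edges propagate it along the clique, and then the $y$-block pebbling reads it out), and for (ii) the same total homomorphism $h^l_{(\pos,T)}$ with $\cl(h^l_{(\pos,T)})$ as the winning strategy. One remark on the clique check you flag at the end: with the edge set of $C^m_D$ as literally printed in the paper, the images of $\{a^i,a^j\}$ when one or both of $i,j$ lie in $T$ (i.e. $\{a^i_0,a^j_{\pos(j),l}\}$ or $\{a^i_0,a^j_0\}$) are not listed as edges, so the "routine" verification you defer actually exposes a typo in the paper's edge set ($a^i_0$ is evidently meant to be adjacent to all $a^j$ for $j\neq i$, as it is in the one-way switch $M^{k,n}$); this is an artifact of the source and not a flaw introduced by you.
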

\begin{IEEEproof}
We first present Spoiler's strategy (i). Starting from position $\{(x^i,x^i_{\pos(i)})\mid i\in[k]\}$ Spoiler puts the ($k+1$)st pebble on $a^{1}$ and Duplicator has to response with $a^{1}_{\pos(1),l}$ for one $l\in[m]$ she can choose. Then Spoiler picks up the pebble from $x^{1}$ and puts it on $a^{2}$. Duplicator has to response with $a^{2}_{\pos(2),l}$, since this vertex is the only one adjacent to $x^{2}_{\pos(2)}$ and $a^{1}_{\pos(1),l}$. Next, Spoiler picks up the pebble from $x^{2}$ and puts it on $a^{3}$ and so on. Thus, Spoiler reaches $\{(a^{i},a^{i}_{\pos(i),l})\mid i\in[k]\}$. In the next step he places the ($k+1$)st pebble on $(y_l)^{1}$, and Duplicator has to answer with $(y_l)^{1}_{\pos(1)}$, since this vertex is the only one colored in the same color as $(y_l)^{1}$ and adjacent to $a^{1}_{\pos(1),l}$. Now Spoiler picks up the pebble from $a^{1}$ and puts it on $(y_l)^{2}$, Duplicator has to answer with $(y_l)^{2}_{\pos(2)}$. Following that strategy Spoiler can reach  $\{((y_l)^i,(y_l)^i_{\pos(i)})\mid i\in[k]\}$.
Duplicator's strategy (ii) is simply defined as $\mathcal C^l_{(\pos,T)} = \cl(h^l_{(\pos,T)})$, where $h^l_{(\pos,T)}$ is the following total homomorphism on $C^{m}$:
\begin{align*}
h^l_{(\pos,T)}(z) &\defi \begin{cases}
x^i_{(\pos,T)(i)}, &\text{ if } z= x^{i},\\
a^{i}_{\pos(i),l}, &\text{ if } z= a^{i},i\notin T,\\
a^{i}_{0}, &\text{ if } z= a^{i},i\in T,\\
(y_l)^i_{(\pos,T)(i)}, &\text{ if } z= (y_l)^i,\\
(y_q)^i_{0}, &\text{ if } z= (y_q)^i\text{, }q\in[m]\setminus\{l\}.
\end{cases}
\end{align*}
Hence, if some (not necessarily valid) position $(\pos,T)$ is on the input Duplicator can force Spoiler to bring this position to the output block $(y_l)$ using the strategy $\mathcal C^l_{(\pos,T)}$.
\end{IEEEproof}
The strategies $\mathcal C^1_{\boldsymbol{0}} = \cdots = \mathcal C^m_{\boldsymbol{0}}$ have the $\boldsymbol{0}$ position at every vertex block and will be denoted by $\mathcal C_{\boldsymbol{0}}$.

\subsection{Proof of the Main Lemma}\label{sec:mainlemma}

\begin{IEEEproof}[Proof of Lemma \ref{lem:reduction}] It remains to construct winning strategies for Spoiler and Duplicator on the colored graphs $G_S$ and $G_D$. 
First, we develop the winning strategy for Spoiler. Hence assume that Player~1 has a winning strategy in the $k$-pebble KAI-game. By Lemma \ref{lem:init}, Spoiler can reach position $\{(\xnode^{i},\xnode^{i}_{\startpos(i)})\mid i\in [k]\}$. Let $r$ be the first rule Player 1 chooses in the KAI-game. By Lemma \ref{lem:srule} Spoiler can reach $\{(y^i,y^i_{r(\startpos)(i)})\mid i\in[k]\}$, where the $y$ vertices are the output-boundary of the rule gadget $RS(r)$. By Lemma \ref{lem:multi} he can pebble through the switch and reach $\{(\ynode^i,\ynode^i_{r(\startpos)(i)})\mid i\in[k]\}$. Let $\pos\defi r(\startpos)$. If $\pos$ maps some pebble to $\goal$, then Spoiler wins the game since $\ynode^i$ is colored $c_{y^{i}}$ whereas $\ynode^i_{\goal}$ is not. Otherwise (by Lemma \ref{lem:choice}) Spoiler can reach position $\{((y_q)^i,(y_q)^i_{\pos(i)})\mid i\in[k]\}$ for one $q\in[m]$ of Duplicator's choice at the output of the choice gadget and the input of RD($r_q$). If $r_q \notin \appl(\pos)$, then Spoiler wins immediately by Lemma \ref{lem:drule} (especially Spoiler wins if $\appl(\pos)=\emptyset$ and Player~2 cannot  move). If $r_q \in \appl(\pos)$, then Spoiler can reach $\{(\xnode^{i},\xnode^{i}_{r_q(\pos)(i)})\mid i\in [k]\}$. Spoiler chooses the next rule according to Player~1's winning strategy and so on. Since Player 1 eventually puts a pebble $c$ on node $\goal$, Spoiler can reach position $(\ynode^{c},\ynode^{c}_\goal)$ and thus he wins the game. 

Assume that $\mathcal K=(\mathcal K_1, \mathcal K_2,\kappa)$ is a winning strategy for Player 2 in the $k$-pebble KAI-game. Recall that Player 2 can play in such a way that every position occurring in the KAI-game is either contained in $\mathcal K_1$ or $\mathcal K_2$ where $\mathcal K_i$ is the set of position when it is Player $i$'s turn. We define a global critical strategy $\mathcal S_{\pos}$ ($\mathcal D_{\pos}$) for every position $\pos$ in $\mathcal K_1$ ($\mathcal K_2$). Duplicator can now simulate Player 2's winning strategy in the KAI-game by playing according to the critical strategy $\mathcal S_{\pos}$ ($\mathcal D_{\pos}$) if the position $\pos$ is the current position in the KAI-game and it is Player 1's (Player 2's) turn. If Spoiler pebbles output critical positions in these strategies, Duplicator switches the strategies in the same way as the positions in the KAI-game change. If Spoiler plays incorrectly in the sense that he pebbles a restart critical position at the switches, then Duplicator moves to a corresponding restart strategy.

Now we construct these critical strategies for the whole graph out of smaller critical strategies $\mathcal F$ defined on gadgets $Q$ (denoted $\mathcal F\langle Q\rangle$) using the $\circ$-operator and Lemma \ref{lem:combcirc}. The global strategy $\mathcal S^{\text{init}}$ means ``the KAI-game has just started, position $\startpos$ is on the board and it is Player 1's turn.'' The strategy $\mathcal S_{\pos}$ ($\mathcal D_{\pos}$) denotes ``position $\pos$ is on the board and it is Player 1's (Player 2's) turn.''
\begin{align*}
    \mathcal S^{\text{init}} &= 
    \mathcal I^{\text{init}} \circ  \mathcal C_{\boldsymbol{0}} \circ 
    \bigcirc_{l\in \appl(\startpos)} \left(\mathcal R_{\startpos}\langle RS(r_l)\rangle\circ\mathcal H^{\text{in}}_{r_l(\startpos)}\langle MS(r_l)\rangle \right) \circ 
  \\&\quad
    \bigcirc_{l\in [m]\setminus\appl(\startpos)} \left(\mathcal R_{\startpos}\langle RS(r_l)\rangle\circ\mathcal H^{\text{restart}}_{(r_l(\startpos),T_{r_l})}\langle MS(r_l)\rangle \right)  \circ 
  \\&\quad
    \bigcirc_{l\in[m]} \left(\mathcal R_{\boldsymbol{0}}\langle RD(r_l)\rangle\circ\mathcal H^{\text{out}}_{\startpos}\langle MD(r_l)\rangle\right).
\end{align*}
We define the global critical strategies $\mathcal S_{\pos}$ and $\mathcal S^{\text{restart}}_{(\pos,T)}$ for all $\pos\in \mathcal K_1$ and $T\neq\emptyset$. In the strategy $\mathcal S_{\pos}$ the position $\pos$ is at the $\xnode$-vertices encoding that $\pos$ is the current position in the KAI-game and it is Player 1's turn. The strategies $\mathcal S^{\text{restart}}_{(\pos,T)}$ contain the restart critical positions of $\mathcal S_{\pos}$.
\begin{align*}
 \mathcal S_{\pos} &= \mathcal I^{\text{init}}_{\pos} \circ  \mathcal C_{\boldsymbol{0}} \circ 
    \bigcirc_{l\in \appl(\pos)} \left(\mathcal R_{\pos}\langle RS(r_l)\rangle\circ\mathcal H^{\text{in}}_{r_l(\pos)}\langle MS(r_l)\rangle \right) \circ \\
    &\bigcirc_{l\in [m]\setminus\appl(\pos)} \left(\mathcal R_{\pos}\langle RS(r_l)\rangle\circ\mathcal H^{\text{restart}}_{(r_l(\pos),T_{r_l})}\langle MS(r_l)\rangle \right)  \circ\\
    &\bigcirc_{l\in[m]} \left(\mathcal R_{\boldsymbol{0}}\langle RD(r_l)\rangle\circ\mathcal H^{\text{out}}_{\pos}\langle MD(r_l)\rangle\right),\\
\mathcal S^{\text{restart}}_{(\pos,T)} &= \mathcal I^{\text{init}}_{(\pos,T)} \circ  \mathcal C_{\boldsymbol{0}} \circ \\
&\bigcirc_{l\in[m]} \big(\mathcal R_{(\pos,T)}\langle RS(r_l)\rangle \circ \mathcal H^{\text{restart}}_{(r_l(\pos),T\cup T_{r_l})}\langle MS(r_l)\rangle \circ\\
&\quad\mathcal R_0\langle RD(r_l)\rangle \circ \mathcal H^{\text{out}}_{(\pos,T)}\langle MD(r_l)\rangle\big).
\end{align*}
Furthermore, for all $\pos\in\mathcal K_2$ and $T\neq\emptyset$ let $\mathcal D_{\pos}$ and $\mathcal D^{\text{restart}}_{(\pos,T)}$ be the following global critical strategies. Similar as in the strategies above, $\mathcal D_{\pos}$ puts the position $\pos$ at the $\ynode$-vertices encoding that $\pos$ is the current position in the KAI-game and it is Player 2's turn. Again, the strategies $\mathcal D^{\text{restart}}_{(\pos,T)}$ contain the restart critical positions of $\mathcal D_{\pos}$.
\begin{align*}
\mathcal D_{\pos} &=  \mathcal I^{\text{init}}_{{\boldsymbol{0}}} \circ \mathcal C^{\choice(\pos)}_{\pos}\circ 
    \bigcirc_{l\in[m]} \big(\mathcal R_{\boldsymbol{0}}\langle RS(r_l)\rangle \circ \mathcal H^{\text{out}}_{\pos}\langle MS(r_l)\rangle\big)\circ\\
    &\bigcirc_{l\in[m]\setminus\{\choice(\pos)\}} \big(\mathcal R_{\boldsymbol{0}}\langle RD(r_l)\rangle \circ \mathcal H^{\text{restart}}_{{\boldsymbol{0}}}\langle MD(r_l)\rangle\big)\circ\\
    &\mathcal R_{\pos}\langle RD(r_{\choice(\pos)})\rangle \circ \mathcal H^{\text{in}}_{r_{\choice(\pos)}(\pos)}\langle MD(r_{\choice(\pos)})\rangle, \\
 \mathcal D^{\text{restart}}_{(\pos,T)} &=  \mathcal I^{\text{init}}_{{\boldsymbol{0}}} \circ \mathcal C^{\choice(\pos)}_{(\pos,T)}\circ 
    \bigcirc_{l\in[m]} \big(\mathcal R_{\boldsymbol{0}}\langle RS(r_l)\rangle \circ \mathcal H^{\text{out}}_{(\pos,T)}\langle MS(r_l)\rangle\big)\circ\\
    &\bigcirc_{l\in[m]\setminus\{\choice(\pos)\}} \big(\mathcal R_{\boldsymbol{0}}\langle RD(r_l)\rangle \circ \mathcal H^{\text{restart}}_{{\boldsymbol{0}}}\langle MD(r_l)\rangle\big)\circ\\
    &\mathcal R_{(\pos,T)}\langle RD(r_{\choice(\pos)})\rangle \circ 
    \mathcal H^{\text{restart}}_{(r_{\choice(\pos)}(\pos),T)}\langle MD(r_{\choice(\pos)})\rangle. 
\end{align*}
Before we formally state Duplicator's winning strategy, we briefly describe these critical strategies. First, the only critical positions of $\mathcal S^{\text{restart}}_{(\pos,T)}$ and $\mathcal D^{\text{restart}}_{(\pos,T)}$ are inside the initialization gadget and contained in $\mathcal S^{\text{init}}$. Thus, this is a good situation for Duplicator, since Spoiler has to restart the game by playing on the initialization gadget. 
At the beginning of the game Duplicator plays according to the strategy $\mathcal S^{\text{init}}$, where position $\startpos$ is on the $\xnode$-vertices. The only critical positions of that strategy are inside the $MS(r)$-gadgets for rules $r$ applicable to $\startpos$. If Spoiler pebbles some restart-critical positions there, then Duplicator can switch to $\mathcal S^{\text{restart}}_{(\startpos,T)}$.
If Spoiler pebbles an output-critical position on $MS(r)$, then Duplicator can switch to strategy $\mathcal D_{r(\startpos)}$ where position $\pos = r(\startpos)$ is on the $\ynode$-vertices. The only critical positions now are inside the switch $MD(r_{\choice(\pos)})$ and inside the initialization gadget. If Spoiler pebbles a restart-critical position on $MD(r_{\choice(\pos)})$, then Duplicator sticks to $ \mathcal D^{\text{restart}}_{(\pos,T)}$. If Spoiler pebbles an output-critical position, then Duplicator chooses strategy $\mathcal S_{\pos'}$, where $\pos'=r_{\choice(\pos)}(\pos)$. The critical positions from $\mathcal S_{\pos'}$ are within the initialization gadget and the switches $MS(r)$ for rules $r$ applicable to $\pos'$. Thus, combining all these critical strategies allows Duplicator to play forever.
Now we define the winning strategy for Duplicator: 
\begin{align*}
 \mathcal H = \mathcal S^{\text{init}} &\cup \bigcup_{\pos\in\mathcal K_1, T\subseteq [k],T\neq\emptyset} \left(\mathcal S_{\pos} \cup \mathcal S^{\text{restart}}_{(\pos,T)} \right)\\
&\cup \bigcup_{\pos\in \mathcal K_2, T\subseteq [k],T\neq\emptyset} \left(\mathcal D_{\pos} \cup \mathcal D^{\text{restart}}_{(\pos,T)} \right).
\end{align*}
Since $\mathcal H$ is a union of critical strategies, it suffices by Lemma \ref{lem:combunion} to show that for each critical strategy $\mathcal G$ and each partial homomorphism $h\in \crit(\mathcal G)$ there is a critical strategy $\mathcal F$ such that $h\in \mathcal{\hat{F}}$. From the definition of the global critical strategies and the properties of the partial critical strategies they contain, it follows that
\begin{align*}
 \crit(\mathcal S^{\text{restart}}_{(\pos,T)}) &\subseteq \mathcal {\hat{S}}^\text{init},\\ 
\crit(\mathcal D^{\text{restart}}_{(\pos,T)}) &\subseteq \mathcal {\hat{S}}^\text{init},\\ 
 \crit(\mathcal S_{\pos}) &\subseteq \mathcal {\hat{S}}^{\text{init}} \cup \bigcup_{T\neq\emptyset} \mathcal {\hat{S}}^{\text{restart}}_{(\pos,T)} \cup \bigcup_{l\in\appl(\pos)} \mathcal {\hat{D}}_{r_l(\pos)}, \\
 \crit(\mathcal D_{\pos}) &\subseteq \mathcal {\hat{S}}^\text{init}\cup  \mathcal {\hat{S}}_{r_{\choice(\pos)}(\pos)} \cup \bigcup_{T\neq\emptyset} \mathcal {\hat{D}}^{\text{restart}}_{(\pos,T)} , \\
 \crit(\mathcal S^{\text{init}}) &\subseteq \bigcup_{l\in\appl(\startpos)} \mathcal {\hat{D}}_{r_l(\startpos)}.
\end{align*}
From the definition of $\mathcal H$ and the properties of the KAI-game winning strategy $\mathcal K$, it follows, that if a global critical strategy mentioned in the left hand side of the above inclusions is a strategy in $\mathcal H$, then so are all strategies on the right hand side. This concludes the proof of Lemma \ref{lem:reduction} for colored simple graphs $G_S$ and $G_D$. 
\end{IEEEproof}

\subsection{Getting Rid of the Colors}

As in \cite{Kolaitis.2003} our construction involves $|V(G_S)|$ unary predicates. To settle the complexity of deciding whether Spoiler has a winning strategy in the existential $k$-pebble game on $\sigma$-structures for fixed finite signatures $\sigma$, we use the following construction to switch from colored simple graphs to directed graphs. Let $P_1,\ldots,P_w$ be the colors used in the graphs $G_S$ and $G_D$, and $P$ an additional color. We introduce $w+1$ vertices $d_0,\ldots,d_{w}$ that are colored $P$ in both graphs $G_S$ and $G_D$. For every $1\leq i < w$ and each vertex $x$ colored $P_i$, there are directed edges $(d_i,x)$ and $(x,d_{i+1})$. Furthermore, there are directed edges from vertices colored $P_w$ to $d_w$ and one from $d_1$ to $d_0$. Now we can delete the colors $P_1,\ldots,P_w$ without giving Duplicator more freedom. 
That is, we replace the requirement ``$x\in P_i$'' by the statement ``there exists an alternating directed path (where every second element is colored $P$) of length $2i-1$ to a vertex colored $P$ and having an out-neighbor colored $P$''. This can easily be checked by Spoiler using two pebbles.
To get rid of the remaining color $P$ we add a loop $E(x,x)$ for every vertex $x\in P$.

\begin{figure}[htp]
 \centering

\begin{tikzpicture}
  [scale=1, transform shape, knoten/.style={circle,draw=black,fill=black,
  inner  sep=0pt,minimum  size=1mm},kknoten/.style={circle,draw=black,fill=black,
  inner  sep=0pt,minimum  size=0.8mm}, leerknoten/.style={circle,draw=black,fill=white,
  inner  sep=0pt,minimum  size=1mm},leerkknoten/.style={circle,draw=black,fill=white,
  inner  sep=0pt,minimum  size=0.8mm}]
  
  \node[knoten] (d0) at (0,4.5)[label=left:{\tiny{$d_0$}}] {};
  \node[knoten] (d1) at (0,4)[label=left:{\tiny{$d_1$}}] {};
  \node[knoten] (d2) at (0,3) {};
  \node[knoten] (d3) at (0,2) {};
  \node[knoten] (d4) at (0,1)[label=left:{\tiny{$d_{w}$}}]{};
  
  \foreach \i/\max in {1/0,2/0,3/0,4/0} \foreach \j in {0}{
  \node[knoten] (p\i\j) at (0.5+\j*0.2,4.5-\i) {};
  }
  
  \foreach \i/\max/\iplusone in {1/2/2,2/4/3,3/6/4,4/3/5} \foreach \j in {0}{
  \draw[->] (p\i\j)--(d\i);
  } 
  
  \foreach \i/\max/\iplusone in {1/2/2,2/4/3,3/6/4} \foreach \j in {0}{
  \draw[<-] (p\i\j)--(d\iplusone);
  }

 \draw[<-] (d0)--(d1); 
 
 \draw[rounded corners] (-0.15,4.6) rectangle (0.15,0.9);
 \node at (-0.35,2.5) {\small$P$};
 
  \draw[rounded corners] (0.35,3.35) rectangle (0.65,3.65);
 \node at (0.9,3.5) {\small$P_1$};
 
   \draw[rounded corners] (0.35,2.35) rectangle (0.65,2.65);
 \node at (0.9,2.5) {\small$P_2$};
 
   \draw[rounded corners] (0.35,1.35) rectangle (0.65,1.65);

   \draw[rounded corners] (0.35,0.35) rectangle (0.65,0.65);
 \node at (0.9,0.5) {\small$P_w$};
 
 \node at (0.5,-0.5) {$G_S$};
 
\begin{scope}[xshift=4cm]
  \node[knoten] (d0) at (0,4.5)[label=left:{\tiny{$d_0$}}] {};
  \node[knoten] (d1) at (0,4)[label=left:{\tiny{$d_1$}}] {};
  \node[knoten] (d2) at (0,3) {};
  \node[knoten] (d3) at (0,2) {};
  \node[knoten] (d4) at (0,1)[label=left:{\tiny{$d_{w}$}}] {};
  
  \foreach \i/\max in {1/3,2/6,3/6,4/4} \foreach \j in {0,1,2,...,\max}{
  \node[knoten] (p\i\j) at (0.5+\j*0.2,4.5-\i) {};
  }
  
    \foreach \i/\max/\iplusone in {1/3/2,2/6/3,3/6/4,4/4/5} \foreach \j in {0,1,2,...,\max}{
  \draw[->] (p\i\j)--(d\i);
  } 
  
  \foreach \i/\max/\iplusone in {1/3/2,2/6/3,3/6/4} \foreach \j in {0,1,2,...,\max}{
  \draw[<-] (p\i\j)--(d\iplusone);
  } 

 \draw[->] (d1)--(d0); 
 
 \draw[rounded corners] (-0.15,4.6) rectangle (0.15,0.9);
 \node at (-0.35,2.5) {\small$P$};
 
  \draw[rounded corners] (0.35,3.35) rectangle (1.25,3.65);
 \node at (1.45,3.5) {\small$P_1$};
 
   \draw[rounded corners] (0.35,2.35) rectangle (1.85,2.65);
 \node at (2.05,2.5) {\small$P_2$};
 
   \draw[rounded corners] (0.35,1.35) rectangle (1.85,1.65);

   \draw[rounded corners] (0.35,0.35) rectangle (1.45,0.65);
 \node at (1.65,0.5) {\small$P_w$};
 
  \node at (1,-0.5) {$G_D$};
\end{scope}
  
\end{tikzpicture}

 \caption{Colors $P_1,\ldots,P_w$ can now be deleted.}\label{fig:color}
\end{figure}
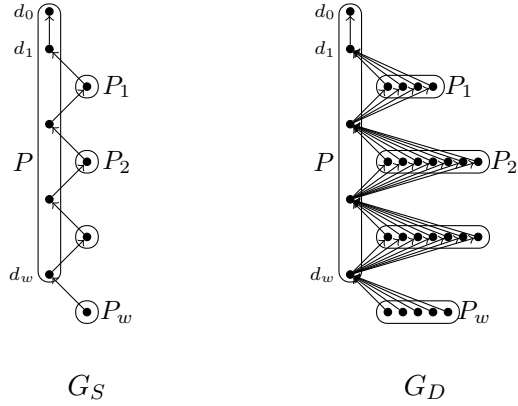

\section{Conclusion}

We proved an $\Omega(n^{\frac{(k-2)}{12}-\epsilon})$ lower bound for deciding which player can win the existential $k$-pebble game on two given finite relational structures and got an $\Omega(n^{\frac{(k-2)}{12}-\epsilon})$ lower bound for the $k$-consistency test as a consequence. 
Furthermore, the lower bound also applies to the $k$-pebble game that characterizes the expressive power of the existential $k$-variable fragment (where negation is allowed in front of atomic formulas). 

The parameterized complexity of the whole $k$-variable first-order logic $\mathsf L^k$ and the counting logic $\mathsf C^k$ is wide open. It is not even known if it is W[1]-hard to decide if two given finite relational structures can be distinguished by an $\mathsf L^k$  ($\mathsf C^k$) sentence.
Regarding the classical complexity, $\mathsf L^k$-equivalence as well as $\mathsf C^k$-equivalence is complete for polynomial time \cite{Grohe.1996}, but it is an open problem whether the problems are
complete for EXPTIME when $k$ is part of the input.

\bibliography{Literatur}
\bibliographystyle{plain}\enlargethispage{12 pt}
\section*{Acknowledgement}

I want to thank Martin Grohe and Martin Otto for a fruitful discussion on an earlier version of this paper. Furthermore, I thank the anonymous referees for the useful comments that helped to improve this paper.\vspace{-24 pt}

\end{document}